\RequirePackage[l2tabu,orthodox]{nag}
\documentclass
[11pt,letterpaper]
{article} 

\usepackage[notes=true,later=false,camera=false]{dtrt}
\usepackage[utf8]{inputenc}
\usepackage{etex}
\usepackage{thmtools}
\usepackage{ stmaryrd }
\usepackage{xspace,enumerate}
\usepackage[T1]{fontenc}
\usepackage[full]{textcomp}
\usepackage[american]{babel}
\usepackage{mathtools}
\usepackage{amsthm}

\usepackage{empheq}

\hypersetup{
colorlinks=true,
urlcolor=Cerulean,
linkcolor=RoyalBlue,
citecolor=OliveGreen,
linktocpage=true,
}
\renewcommand*\backref[1]{\ifx#1\relax \else (pg. #1) \fi}

\usepackage[capitalise,nameinlink]{cleveref}
\crefname{lemma}{Lemma}{Lemmas}
\crefname{fact}{Fact}{Facts}

\crefname{theorem}{Theorem}{Theorems}
\crefname{corollary}{Corollary}{Corollaries}
\crefname{claim}{Claim}{Claims}
\crefname{example}{Example}{Examples}
\crefname{algorithm}{Algorithm}{Algorithms}
\crefname{problem}{Problem}{Problems}
\crefname{definition}{Definition}{Definitions}
\crefname{equation}{Eq.}{Eq.}
\crefname{strategy}{Strategy}{Strategies}

\usepackage{paralist}
\usepackage{turnstile}
\usepackage[framemethod=TikZ]{mdframed}
\mdfsetup{frametitlealignment=\center}
\usepackage{tikz}
\usepackage{caption}
\DeclareCaptionType{Algorithm}
\usepackage{newfloat}
\usepackage{aliascnt}
\newtheorem{theorem}{Theorem}[section]
\newaliascnt{lemma}{theorem}
\newtheorem{lemma}[lemma]{Lemma}
\aliascntresetthe{lemma}\newtheorem*{lemma*}{Lemma}
\newaliascnt{remark}{theorem}
\newtheorem{remark}[remark]{Remark}
\aliascntresetthe{remark}\newtheorem*{remark*}{Remark}
\newaliascnt{definition}{theorem}
\newtheorem{definition}[definition]{Definition}
\aliascntresetthe{definition}\newtheorem*{definition*}{Definition}

\newaliascnt{proposition}{theorem}

\aliascntresetthe{proposition}
\newaliascnt{fact}{theorem}
\newtheorem{fact}[fact]{Fact}
\aliascntresetthe{fact}
\newaliascnt{corollary}{theorem}

\aliascntresetthe{corollary}

\newtheorem{algorithm-thm}[theorem]{Algorithm}

\usepackage[
letterpaper,
top=1.2in,
bottom=1.2in,
left=1in,
right=1in]{geometry}
\usepackage{newpxtext} % T1, lining figures in math, osf in text
\usepackage{textcomp} % required for special glyphs
\usepackage{mathpazo}
\usepackage[scr=rsfso]{mathalfa}% \mathscr is fancier than \mathcal
\usepackage{bm} % load after all math to give access to bold math
\let\mathbb\varmathbb
\usepackage{microtype}

\usepackage{footnotebackref}

\allowdisplaybreaks
\newcommand{\FormatAuthor}[3]{
\begin{tabular}{c}
#1 \\ {\small\texttt{#2}} \\ {\small #3}
\end{tabular}
}

\newcommand{\R}{{\mathbb R}}

\newcommand{\specrad}{\rho}
\newcommand{\eps}{\varepsilon}

\newcommand{\E}{{\mathbb E}}

\newcommand{\1}{\mathbf{1}}
\newcommand{\ip}[1]{\langle #1 \rangle}

\newcommand{\tr}{\mathrm{tr}}

\newcommand{\pE}{\tilde{\E}}
\newcommand{\cH}{\mathcal H}

\newcommand{\cB}{\mathcal B}

\newcommand{\cS}{\mathcal S}

\newcommand{\poly}{\mathrm{poly}}

\newcommand{\mper}{\,.}
\newcommand{\mcom}{\,,}

\newcommand{\Norm}[1]{\left\lVert#1\right\rVert}

\newcommand{\cD}{\mathcal D}

\renewcommand{\emptyset}{\varnothing}
\renewcommand{\geq}{\geqslant}
\renewcommand{\leq}{\leqslant}
\renewcommand{\epsilon}{\varepsilon}

\usepackage{macros}

\begin{document}
\title{An Approximate Cauchy-Schwarz Inequality and Improved Bounds for Sherali-Adams Refutation of Semirandom CSPs}

%\author{Anonymous}
\author{
 \begin{tabular}{cc}
  \FormatAuthor{Pravesh K. Kothari}{kothari@cs.princeton.edu}{Princeton University} &
\FormatAuthor{Andrew D. Lin}{andrewlin@princeton.edu}{Princeton University}
  \end{tabular}
  }
\maketitle

\begin{abstract}
We formulate an \emph{approximate Cauchy-Schwarz} inequality and show that it is satisfied by solutions to the Sherali-Adams linear programming hierarchy (interpreted as ``pseudo-distributions''). As a consequence, we resolve a question left open by the work of O'Donnell and Schramm~\cite{OS19} that they had explicitly attributed to the lack of such an inequality. 

A Cauchy-Schwarz inequality is \emph{exactly} satisfied by pseudo-distributions satisfying the constraints of the \emph{sum-of-squares} semidefinite programming hierarchy and already has scores of applications. However, the proof there requires \emph{global} positive semidefiniteness. Our approximate version, on the other hand, relies only on \emph{local} positive semidefiniteness satisfied by the Sherali-Adams pseudo-distributions. Our formulation loses an additive error that scales with the L1 norm of the coefficients of the constituent polynomials, and this loss is asymptotically tight. Our proof is elementary and relies on a simple sampling argument. 

As an application, we resolve a question left open in the work of O'Donnell and Schramm that gives a trade-off between constraint density and the Sherali-Adams degree for refuting random constraint satisfaction problems. Specifically, for odd arity CSPs, we show that the constraint density requirement for a given degree can be improved by a polynomial factor in $n$. Along the way, we observe that by a simple extension, the results in their work extend to a more general semirandom setting.
\end{abstract}

\section{Introduction}
A degree $d$ sum-of-squares (SoS) pseudo-distribution on $\{-1,1\}^n$ is a signed measure $D:\{-1,1\}^n \rightarrow \R$ such that the associated pseudo-expectation functional defined for any $f:\{-1,1\}^n \rightarrow \R$ by $\pE_{D}[f] = \sum_{x} D(x) f(x)$ satisfies 1) $\pE_{D} [ 1] = 1$ and 2) $\pE_{D}[ f^2] \geq 0$ for every polynomial $f$ of degree $\leq d/2$. Such pseudo-distributions are equivalent to feasible points for SDPs from the SoS semidefinite programming hierarchy. Degree $d$ SoS pseudo-distributions are equivalent to \emph{pseudo-moment matrices} -- matrices $M$ indexed by subsets of $[n]$ of size $\leq d/2$ whose entry at $(S,T)$ is given by $\pE_{D}[ \prod_{i \in S \Delta T} x_i]$. Pseudo-moment matrices are themselves in one-to-one correspondence with the set of all positive semidefinite matrices that satisfy some natural linear constraints\footnote{These constraints encode that $M(S,T)$ depends only on $S \Delta T$ and the diagonals of $M$ are $1$.}. The SoS semidefinite programming hierarchy, at a high level, can be thought of as solving (in $n^{O(d)}$ time) the problem of finding a pseudo-expectation (equivalently, a pseudo-moment matrix) satisfying some additional linear constraints. 

The analysis of algorithms (see the survey ~\cite{FlemingKP19}) based on the SoS hierarchy requires reasoning about the pseudo-expectations of various polynomials arising in a given problem. Such analyses often rely on generalizing natural analytic inequalities that hold for expectations (w.r.t. probability distributions) to pseudo-expectations (w.r.t. pseudo-distributions). Indeed, the power of the pseudo-expectation view of SoS solutions comes largely from being able to ``guess" true inequalities for pseudo-distributions by analogy with actual probability distributions.

\paragraph{The Cauchy-Schwarz Inequality} Perhaps the most fundamental of such general analytic inequalities is the Cauchy-Schwarz inequality:

\[
\pE_{D}[ fg] \leq \pE_D[ f^2]^{1/2} \pE_{D}[g^2]^{1/2},
\]
which is satisfied whenever $D$ is a degree $d$ pseudo-distribution and $f,g$ are polynomials of degree $\leq d/2$. The Cauchy-Schwarz inequality appears in scores of applications of the sum-of-squares hierarchy, especially in executing the \emph{proofs-to-algorithms} paradigm in algorithm design. The proof (and indeed, the truth) of the inequality relies on the positive semidefiniteness constraint that is often viewed as a \emph{global} constraint on the pseudo-moments. 

Such global positive semidefiniteness constraints are not satisfied by solutions to the weaker \emph{Sherali-Adams} linear-programming hierarchy. An $R$-local Sherali-Adams solution can also be viewed as a degree $2R$ pseudo-distribution $D: \{-1,1\}^n \rightarrow \R$ with the key difference that the associated pseudo-expectation only satisfies a \emph{local} version of (2). More precisely: 

\begin{definition}[$R$-local pseudo-distributions and pseudo-expectation]
An $R$-local pseudo-distribution over the $n$-dimensional Boolean hypercube is a function $D:\{-1,1\}^n \rightarrow \R$ such that the associated pseudo-expectation defined by $\pE_D[ f] := \sum_{x \in \{-1,1\}^n} D(x) f(x)$ satisfies:
\begin{enumerate}
\item $\pE[ 1] = 1$,
\item For every pointwise non-negative $f: \{-1,1\}^n \rightarrow \R$ such that $f$ depends on only $\leq R$ out of $n$ possible inputs, we have: $\pE[ f] \geq 0$. 
\end{enumerate}
\end{definition}
Equivalently, for every non-negative function $f$ that depends only on $\leq R$ variables, $\pE[ f] \geq 0$. The corresponding pseudo-moment matrices now satisfy \emph{local} positive semidefiniteness constraints -- every principal submatrix that corresponds to pseudo-moments of any set of at most $R$ variables is positive semidefinite. However, the matrix as a whole may not be PSD! This weakening allows encoding the constraints of a degree-$d$ Sherali-Adams pseudo-expectation as an $n^{O(d)}$-size \emph{linear} (as opposed to a semidefinite) program. The weakening to only satisfying local constraints comes at a cost. Results like the Cauchy-Schwarz inequality no longer hold, provably (see \Cref{thm:sa-matching-lower-bound})!

Our goal in this work is to formulate an approximate version of the Cauchy-Schwarz inequality that can hold for Sherali-Adams pseudo-distributions. Before diving into our formulation, let us discuss our main motivation:

\paragraph{Sherali-Adams vs Sum-of-Squares} How does the power of the Sherali-Adams linear programming hierarchy compare with that of the SoS SDP hierarchy? On the one hand, lower bounds for constraint satisfaction problems such as Max-Cut~\cite{CharikarMakarychevs} show that degree $2$ SoS SDPs~\cite{GoemansWilliamson} can achieve drastically better approximation ratios even as Sherali-Adams LPs of $n^{\epsilon}$-degree fail to appreciably beat simply returning a uniformly random assignment. On the other hand, O'Donnell and Schramm~\cite{OS19} and the follow-up work due to Hopkins, Schramm, and Trevisan~\cite{HopkinsST} show that degree $n^{c}$ (for fixed $c<1$) Sherali-Adams LPs can qualitatively match the performance of SDPs for Max-Cut! Perhaps more strikingly, they show that $O(1/\epsilon)$-degree Sherali-Adams LPs (thus, in polynomial time!) can strongly refute (i.e., certify an upper bound of $1/2+\epsilon$ on the normalized value of) random $k$-XOR instances with $m \geq n^{k/2+\epsilon}$ constraints \textbf{for even $k$}. This matches, up to $n^{\epsilon}$ factors for an arbitrarily small $\epsilon$, the best-known refutation algorithms based on spectral methods \cite{AOW15,BarakM16, dOrsiT23}. %\pravesh{check quantiative bounds here and add refs.}

\paragraph{The case of odd $k$}  For random $k$-XOR refutation when $k$ is odd, O'Donnell and Schramm prove a weaker bound of $m\geq n^{\lceil k/2 \rceil + \epsilon}$ for the same guarantee that is off of the optimal $n^{k/2}$ by a fixed (i.e., that does not improve as $\epsilon \rightarrow 0$) polynomial factor in $n$ . This is because the argument in~\cite{OS19} essentially treats an odd $k$-XOR formula as a $(k+1)$-XOR and then applies their result for even arity XOR.  %\pravesh{Is this correct?}\andrew{yes}%\pravesh{Check if this is accurate} 

As an aside, such a gap between the case of even and odd $k$s has been a fixture in this line of work. The case of odd $k$, for which the above bound appears suboptimal, has indeed proved to be more difficult to tackle in all previous analyses~\cite{Coja-OghlanGL07,AOW15,AbascalGK21,HsiehKM23,GuruswamiKM22} even with more powerful classes of spectral methods.

In the work of O'Donnell and Schramm~\cite{OS19}, the polynomial factor ``discrepancy" can be blamed on the lack of a Cauchy-Schwarz inequality for Sherali-Adams pseudo-distributions. Indeed, the authors themselves remark (see footnote on page 4): "\emph{This is because we do not know how to relate the Sherali–Adams
value of the objective function to its square (local versions of the Cauchy-Schwarz argument result in a loss). Such
a relation would allow us to apply our techniques immediately to prove that Sherali Adams matches the SOS and spectral performance for odd as well as even $k$.}"

\paragraph{Our Results} In this work, we prove an approximate Cauchy-Schwarz inequality that holds for Sherali-Adams pseudo-distributions (so a ``local" version as in the above quote) but with an additive loss that becomes negligible as the locality/degree of the Sherali-Adams LP increases. As an application, we get the right conjectured bound for refuting random $k$-XOR formulas for odd $k$. Along the way, we also observe a natural generalization of the ideas of~\cite{OS19} to the problem of semirandom $k$-XOR refutation where the set of tuples participating in constraints is arbitrary (i.e., worst-case) while the ``right hand sides" are chosen independently at random. 

Let us begin with our approximate Cauchy-Schwarz inequality. For any functions $f,g$ of $x = (x_1, x_2,\ldots, x_n)$ and any distribution $D$ over $x$, $\E_{x \sim D}[ f(x) g(x)] \leq \E[ f(x)^2]^{1/2} \E[ g(x)^2]^{1/2}$. The proof crucially relies on the positive semidefiniteness of the \emph{moment} matrix of the pseudo-distribution. It provably does not hold for Sherali-Adams pseudo-distributions that only satisfy \emph{local} positive semidefiniteness. Indeed, in \cref{sec:sherali-adams-lower-bound}, we observe the following seemingly drastic failure of the Cauchy-Schwarz inequality. 

\begin{theorem}[Sherali-Adams Cannot Satisfy Cauchy-Schwarz Exactly, see \cref{thm:sa-matching-lower-bound}]
For any $R$ and $n \gg R$, there exists an $R$-local Sherali-Adams pseudo-distribution on $\{-1,1\}^n$ such that for the linear polynomial $p(x) = \sum_i x_i$, we have $\pE_D[ p(x)^2] = 0$ and $\pE_{D}[ p] \geq (1 - o_R(1))\frac{n}{\sqrt{R-1}}$.
 \label{thm:sa-lower-bound-main}
\end{theorem}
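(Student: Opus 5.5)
The plan is to build a \emph{fully symmetric} (exchangeable) $R$-local pseudo-distribution, specified by its low-degree moments. Every moment $\pE[x_T]$ with $|T|\leq R$ will depend only on $|T|$, and we will choose these moments as the moments of a genuine exchangeable distribution $\mu_R$ on $\{-1,1\}^R$. Two features of $\mu_R$ drive the argument:
\begin{itemize}
\item its pairwise correlation is $c := \E_{\mu_R}[x_ix_j] = -1/(n-1)$;
\item its single-variable mean $\E_{\mu_R}[x_i]$ is as large as possible.
\end{itemize}

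\textbf{Step 1: the computation.} The polynomial $p^2=\big(\sum_i x_i\big)^2$ has degree $2\leq R$, so by symmetry
\[
\pE[p^2] = n + n(n-1)c = 0 .
\]
Likewise $\pE[p]=n\,\E_{\mu_R}[x_1]$. Everything therefore reduces to constructing $\mu_R$ with the prescribed correlation and with mean $\E_{\mu_R}[x_1]\geq(1-o_R(1))/\sqrt{R-1}$.

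\textbf{Step 2: from $\mu_R$ to a pseudo-distribution.} Given an exchangeable $\mu_R$, define $D$ by its Fourier coefficients: $\hat D(T)=2^{-n}\E_{\mu_R}[x_{T'}]$ for any $T'\subseteq[R]$ with $|T'|=|T|$ whenever $|T|\leq R$, and $\hat D(T)=0$ for $|T|>R$. Then $\pE[1]=1$, and $\pE_D[x_T]$ equals the $\mu_R$-moment of size $|T|$.

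For $R$-locality, take any $f$ depending only on a set $U$ of at most $R$ variables. Then $\pE_D[f]$ involves only moments indexed by subsets of $U$. Marginals of an exchangeable distribution are exchangeable, and they agree across all subsets of the same size. Hence $\pE_D[f]=\E_{\mu_R}[f']$, where $f'$ is $f$ transported to the first $|U|$ coordinates. Consequently $\pE_D[f]\geq 0$ whenever $f\geq 0$.

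\textbf{Step 3: constructing $\mu_R$.} This is the crux. An exchangeable distribution on $\{-1,1\}^R$ is determined by the law of $S=\sum_{i\le R}x_i$, taking a uniformly random arrangement given $S$. The constraints are:
\begin{itemize}
\item $\E[S]=R\mu$, where $\mu$ is the desired mean;
\item $\E[S^2]=R+R(R-1)c = R\big(1-\tfrac{R-1}{n-1}\big) =: V$, which is $\approx R$ when $n\gg R$.
\end{itemize}
By Jensen, $\E[S]\leq\sqrt V$, so $\mu\leq\sqrt{V}/R\approx 1/\sqrt R$. To (nearly) achieve this, make $S$ nearly constant. Let $S$ take the two consecutive admissible values $s$ and $s+2$ (those with parity of $R$) bracketing $\sqrt V$. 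Mix them with the probability $q$ that yields $\E[S^2]=V$ exactly, which is possible since $s^2\leq V\leq (s+2)^2$. Then
\[
\E[S]\geq \sqrt{V}-O(1) ,
\]
because a two-point law on adjacent points has variance $O(1)$. This gives
\[
\mu=\frac{\E[S]}{R}\geq\frac{(1-o_R(1))}{\sqrt R}=\frac{(1-o_R(1))}{\sqrt{R-1}} ,
\]
using $n\gg R$ so that $V=R(1-o(1))$.

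\textbf{Expected obstacle.} The delicate part is Step 3's bookkeeping: parity and integrality of the support of $S$, and ensuring $s\geq -R$, $s+2\leq R$. Both are fine once $R\geq 4$, since $\sqrt V\ll R$. One could also sharpen the constant toward exactly $1/\sqrt{R-1}$ by a slightly better choice of the law of $S$. Since the statement allows a $(1-o_R(1))$ factor, the two-point construction suffices.
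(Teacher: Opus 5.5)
Your proposal is correct and takes essentially the same route as the paper. Both use a permutation-invariant local law in which $S=\sum_{i\in T}x_i$ is a two-point mixture of adjacent admissible values bracketing $\sqrt{V}$, with $V=R(n-R)/(n-1)$ chosen to force pairwise correlation $-1/(n-1)$ and hence $\pE[p^2]=0$. The only cosmetic difference is that you build $D$ explicitly from its Fourier coefficients, whereas the paper appeals directly to the local consistency of the exchangeable marginals.
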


Despite this failure, we show that there is a natural way to formulate a correct, approximate Cauchy-Schwarz inequality. To the best of our knowledge, such an approximate formulation has not appeared in prior works. Notice that the above counter-example rules out an approximate Cauchy-Schwarz inequality with a \emph{multiplicative} error. Our formulation has a bound on the \emph{additive} error instead.

\begin{theorem}[Approximate Cauchy-Schwarz Inequality]
Let $f,g: \{-1,1\}^n \rightarrow \R$ be polynomials of degree $\leq d$. Let $\|f\|_1, \|g\|_1$ be the L1 norms of the coefficients of $f,g$ respectively. Then, for every $R$-local pseudo-distribution $\cD$ on $\{-1,1\}^n$ and any positive integers $s,t$ such that $d(s+t) \leq R$, we have:
\[
\left|\pE[ fg]\right| \leq \sqrt{ \left(1-\frac{1}{s}\right)\pE[ f^2] + \frac{1}{s} \|f\|_1^2} \sqrt{ \left(1-\frac{1}{t}\right)\pE[ g^2] + \frac{1}{t}\|g\|_1^2}.
\] \label{thm:approxcs-main}
\end{theorem}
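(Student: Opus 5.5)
We prove the inequality with a sampling argument that replaces $f$ and $g$ by low-degree, few-variable random surrogates. Write $f=\sum_S \hat f(S)\chi_S$ and $g=\sum_T \hat g(T)\chi_T$, where each $|S|,|T|\le d$. Assume $\|f\|_1,\|g\|_1>0$. Define a probability distribution $\mu_f$ on monomials by choosing $S$ with probability $|\hat f(S)|/\|f\|_1$, and set the random function
\[
F_S := \|f\|_1\,\mathrm{sign}(\hat f(S))\,\chi_S ,
\]
so that $\E_{S\sim\mu_f}[F_S]=f$ pointwise. Now draw $s$ independent samples $S_1,\dots,S_s\sim\mu_f$ and $t$ independent samples $T_1,\dots,T_t\sim\mu_g$, and put
\[
\bar F=\frac1s\sum_i F_{S_i},\qquad \bar G=\frac1t\sum_j F'_{T_j}.
\]
For every fixed outcome of the samples, $\bar F$ and $\bar G$ depend on at most $ds+dt\le R$ variables. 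Hence the pointwise inequality
\[
\pm\bar F\bar G\le \tfrac{\lambda}{2}\bar F^2+\tfrac{1}{2\lambda}\bar G^2
\]
concerns a nonnegative function of at most $R$ variables. The $R$-local condition then gives $\pm\pE[\bar F\bar G]\le \tfrac\lambda2\pE[\bar F^2]+\tfrac1{2\lambda}\pE[\bar G^2]$ for every $\lambda>0$. Equivalently, $|\pE[\bar F\bar G]|\le \pE[\bar F^2]^{1/2}\pE[\bar G^2]^{1/2}$ holds for each outcome, because the pseudo-moment matrix restricted to these $\le R$ variables is PSD, i.e.\ it comes from a genuine distribution.

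Next we take expectations over the sampling. Pseudo-expectation is linear, so it commutes with the sampling expectation. Independence of the two sample sets gives $\E[\pE[\bar F\bar G]]=\pE[fg]$. For the square term, expand
\[
\bar F^2=\frac1{s^2}\sum_{i,i'}F_{S_i}F_{S_{i'}}.
\]
The $s(s-1)$ off-diagonal pairs have expectation $f^2$ by independence. The $s$ diagonal terms satisfy $F_{S_i}^2=\|f\|_1^2\chi_{S_i}^2=\|f\|_1^2$ on the hypercube, and $\pE[1]=1$. Therefore
\[
\E\,\pE[\bar F^2]=\Bigl(1-\tfrac1s\Bigr)\pE[f^2]+\tfrac1s\|f\|_1^2=:A,
\]
and similarly $\E\,\pE[\bar G^2]=\bigl(1-\tfrac1t\bigr)\pE[g^2]+\tfrac1t\|g\|_1^2=:B$.

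To combine these we use the AM-GM form with a fixed $\lambda$ rather than the product form, since the product form does not average linearly. Averaging the $\lambda$-inequality over the sampling gives $\pm\pE[fg]\le \tfrac\lambda2 A+\tfrac1{2\lambda}B$ for every $\lambda>0$.

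Both $A$ and $B$ are nonnegative. This follows from the local inequality, since each $\pE[\bar F^2]\ge0$, and also directly when $s=1$. If $A,B>0$, choose $\lambda=\sqrt{B/A}$ to get $|\pE[fg]|\le\sqrt{AB}$. If $A=0$, let $\lambda\to\infty$ to get $|\pE[fg]|\le0$, and the case $B=0$ is symmetric. Degenerate cases with $f=0$ or $g=0$ are trivial.

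The main obstacle is the step of passing from the sampled surrogates to the bound. Concretely, we must confirm that each $\bar F$ and $\bar G$ involves at most $d(s+t)$ variables in total, so that locality applies. We must also handle the averaging correctly, which is why we use the linear AM-GM form with $\lambda$ fixed before sampling. Finally, the diagonal terms contribute exactly $\|f\|_1^2$, and this relies on $\chi_S^2=1$ on the cube.
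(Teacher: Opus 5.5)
Your proof is correct and is essentially the paper's argument. You sample $s$ and $t$ monomials with probability proportional to the absolute values of the coefficients, apply $R$-local positivity to the resulting surrogates (which involve at most $d(s+t)$ variables), and obtain $\E\,\pE[\bar F^2]=(1-\frac1s)\pE[f^2]+\frac1s\|f\|_1^2$ from the diagonal terms, where $\chi_S^2=1$. The only difference is the final combination step. The paper applies local Cauchy--Schwarz to each sampled pair and then averages using $\E[\sqrt{XY}]\le\sqrt{\E X\,\E Y}$ over the genuine sampling distribution, so the product form does in fact average fine. You instead average the linear AM-GM bound and optimize $\lambda$ afterwards, which is equally valid and also handles the degenerate cases ($A=0$ or $B=0$) explicitly.
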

\begin{remark}
Note that $\|f\|_1$ and $\|g\|_1$ are basis dependent quantities. This is to be expected as Sherali-Adams (unlike sum-of-squares) is not basis independent. Next, our bounds are written in the $\{\pm 1\}$ basis. They can, of course, be transformed into the $\{0,1\}$ basis, but the quantities appearing will change. Finally, we present the inequality with $s,t$ as independent parameters (as opposed to  setting them both to be $R/2$, say) since we anticipate that in applications, $\|f\|_1, \|g\|_1$ could be different and the freedom to modulate $s,t$ accordingly may lead to better bounds. 
\end{remark}

As we observe in Section \ref{sec:sherali-adams-lower-bound}, \cref{thm:sa-lower-bound-main} shows that the formulation we prove in \cref{thm:approxcs-main} is tight down to the exact constant as $n/R \rightarrow \infty$.

\paragraph{Applications:} We apply Approximate Cauchy-Schwarz to prove tight bounds for CSP refutation.
\begin{theorem}(Sherali-Adams Refutation of Random and Semirandom $k$-XOR, see \Cref{theorem:randomxor} and \Cref{theorem:semirandomxor}) Suppose we have a random or semirandom $k$-XOR instance on $n$ variables with $n^{k/2+\delta}$ expected (in the random case) or total (in the semirandom case) constraints. Then with high probability, there exists some $R=O_{\eps, \delta, k}(1)$ such that for all $R$-local Sherali-Adams pseudodistributions $D$ on $\{-1,1\}^n$, we have $\pE_D \psi(x)\leq 2\eps |\cH|$.
\end{theorem}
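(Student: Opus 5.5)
The plan is to reduce the odd-arity case to an even-arity one of arity $2k-2$, using \cref{thm:approxcs-main} for the squaring step, exactly as the O'Donnell--Schramm footnote suggests. For even $k$ I would simply run the argument of~\cite{OS19}, checking that it uses only that the signs $b_C$ are independent and uniform given the hypergraph $\cH$, so that it applies in the semirandom setting. The random case then follows by conditioning on $\cH$, since $|\cH|$ concentrates. From here on, $k$ is odd, $\psi(x)=\sum_{C\in\cH} b_C x^C$ and $m=|\cH|\geq n^{k/2+\delta}$.

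\textbf{Step 1 (decomposition).} For each $C\in\cH$, fix a designated variable $i(C)\in C$. Write $\psi=\sum_{i} x_i g_i$ with
\[
g_i=\sum_{C:\,i(C)=i} b_C x^{C\setminus\{i\}} .
\]
Each $g_i$ has degree $k-1$ and $\|g_i\|_1=\deg(i)$, the number of constraints designated to $i$. Vertices whose degree exceeds roughly $O(m/n)$ should be handled separately, by splitting them into copies or by bounding their contribution through a cruder argument. With the degrees balanced, $\sum_i \|g_i\|_1^2\lesssim m^2/n$.

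\textbf{Step 2 (approximate Cauchy--Schwarz per vertex).} Apply \cref{thm:approxcs-main} to $f=x_i$ and $g=g_i$ with $s=1$. This gives
\[
\pE[x_i g_i]\leq\sqrt{(1-\tfrac1t)\pE[g_i^2]+\tfrac1t\deg(i)^2}.
\]
Summing over $i$ and applying the real Cauchy--Schwarz inequality to the sum of square roots yields
\[
\pE\psi\leq\sqrt{n\bigl(\textstyle\sum_i\pE[g_i^2]+\tfrac1t\sum_i\deg(i)^2\bigr)} .
\]
Choosing $t=\Theta(1/\eps^2)$, the second term contributes at most $\eps m$. The locality needed is $R\geq (k-1)(1+t)=O_{\eps,k}(1)$, which is compatible with what the next step requires.

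\textbf{Step 3 (the even instance; main obstacle).} Expand
\[
\sum_i g_i^2=\sum_i\deg(i)+Q ,
\]
where the constant part $\sum_i\deg(i)=m$ contributes $\sqrt{nm}\ll\eps m$. The remainder is
\[
Q=\sum_i\sum_{C\neq C'} b_Cb_{C'}\,x^{C\setminus i}x^{C'\setminus i},
\]
which is a $(2k-2)$-XOR-type polynomial with about $m^2/n\gg n^{k-1+2\delta}$ terms, well above the even-arity threshold $n^{(2k-2)/2}$. I need $\pE[Q]\leq\eps^2 m^2/n$. The difficulty is that the products $b_Cb_{C'}$ are not independent, and that terms with $C\setminus i$ and $C'\setminus i$ overlapping drop in degree.

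I would first try to handle this by decoupling: randomly split $\cH$ into two halves, so that $Q$ becomes a bilinear form in independent sign vectors. I would then rerun the O'Donnell--Schramm even-arity argument, which bounds pseudo-expectations through local (bounded-size) certificates and trace/moment-type counting. The key point is that their moment calculation only needs bounds on expected products of the coefficients along closed walks, and these still hold for the products $b_Cb_{C'}$ after decoupling. Terms whose sets overlap would be grouped by the size of the overlap and bounded separately as lower-arity even instances, which are even more overconstrained. Combining the three contributions gives $\pE\psi\leq\sqrt{n(m+\eps^2m^2/n+\eps^2m^2/n)}\leq 2\eps m$ for large $n$, as required.
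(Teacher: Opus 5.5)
Steps~1--2 are exactly the paper's reduction (\Cref{theorem:randomrefutation}). The gap is in Step~3, where the proof actually lives, and the mechanism you sketch there fails as stated.

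The Sherali--Adams part of \cite{OS19} (\Cref{lemma:os19corollary}) is deterministic. It bounds $|\pE\, y^\top A y|$ for an integer matrix $A$ once $\rho(D^{-1}A)\le n^{-\Omega(1)}$, where $D$ holds the row $\ell_1$ norms. So you must name a specific matrix representing $Q$ and prove that spectral bound for it, and the choice of matrix is the crux.

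Suppose you treat $Q$ as a $(2k-2)$-XOR by pairing $x^{C\setminus i}$ with $x^{C'\setminus i}$. This is the natural split, and your decoupling preserves it. The coefficient matrix is then a union over $i$ of signed cliques with edge signs $b_Cb_{C'}$. Every closed walk inside such a clique has sign product exactly $1$, so the signed graph is switching-equivalent to an unsigned one. Consequently $\rho(D^{-1}A)$ is essentially $1$. Concretely, $y_{C\setminus i}=b_C$ attains the trivial value $\sum_i|\cH_i|(|\cH_i|-1)$, up to the few tuples shared between different $i$. So the walk-moment bounds you claim ``still hold'' are false for this pairing. Splitting $\cH$ into halves changes nothing, since the cross part is still a union of balanced complete bipartite graphs.

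The paper instead uses the \cite{AOW15} matrix $A(\psi)$ indexed by $(S_1,S_2),(T_1,T_2)$, which pairs a half of $C$ with a half of $C'$. This needs two nontrivial inputs, which together give \Cref{lem:randomcertificateexists}:
\begin{itemize}
\item the norm bound $\Norm{A(\psi)}\le 2^{O(k)}n^{\delta}\log^3 n$ (\cite[Lemma~A.5]{AOW15}, a trace-method computation with dependent entries);
\item a lower bound of $n^{2\delta}/7$ on every row's $\ell_1$ norm despite sign cancellations, proved in the appendix.
\end{itemize}
Your plan has neither, and never mentions the row-degree requirement.

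The semirandom case needs further ideas that are missing. For even $k$, the \cite{OS19} certificate does not rely only on sign independence. On a worst-case $\cH$ the induced $2$-XOR graph can have low-degree vertices; an isolated edge already gives $D^{-1}A$ an eigenvalue $\pm1$. The paper therefore prunes rows with at most $n^{\delta/2}$ entries, at a cost of at most $n^{k/2}\cdot n^{\delta/2}=n^{-\delta/2}|\cH|$. Only then does it run a trace bound that uses only the signs (\Cref{lem:inducedspecnorm}).

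For odd $k$, balancing the $|\cH_i|$ is not enough. Suppose every clause has the form $(S_0,T,i)$. Then all nonzero entries of the \cite{AOW15} matrix lie in row or column $(S_0,S_0)$. Its graph is a star with $\rho(D^{-1}A)=1$, and deleting a small fraction of rows cannot repair it. The natural pairing happens to behave well on this instance, since its matrix is a Gram matrix of independent sign vectors. But the natural pairing fails on random instances, so neither pairing handles every $\cH$ on its own.

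The paper resolves this by peeling off clauses with a heavy $(S,i)$ or a heavy $(S,T)$ into $\cH'$ and $\cH''$. It refutes those directly, without squaring, as induced $2$-XOR instances with few variables (\Cref{lem:inducedtoorig,lem:refuteinduced,lem:inducedspecnorm}). It applies the squaring argument only to the remainder $\cH'''$, where those multiplicities are at most $n^{\delta/2}$ (\Cref{lem:canprune,lem:crossspecnorm}).
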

In the case of odd $k$, we use our approximate Cauchy-Schwarz lemma to show that the standard spectral certificate used to refute random odd $k$-XOR instances in \cite{AOW15} implies a Sherali-Adams refutation, which allows us to refute with just $n^{k/2+\delta}$ constraints. In addition, we show that we can refute semirandom $k$-XOR in Sherali-Adams, matching the SOS algorithms from \cite{AbascalGK21}. This was previously unknown for both even and odd $k$.

\section{Sherali-Adams Proof of an Approximate Cauchy-Schwarz Inequality}

In this section, we show that Sherali-Adams pseudo-distributions satisfy a natural approximate Cauchy-Schwarz inequality. In the next section, we will see how this inequality helps LP-based refutations of random constraint satisfaction problems. We formulate the approximate Cauchy-Schwarz inequality as follows:

\begin{theorem}[Approximate Cauchy-Schwarz Inequality]
Let $f,g: \{-1,1\}^n \rightarrow \R$ be polynomials of degree $\leq d$. Let $\|f\|_1, \|g\|_1$ be the L1 norms of the coefficients of $f,g$, respectively, when written in the standard monomial basis over $\{-1,1\}^n$. Then, for every $R$-local pseudo-distribution $\cD$ on $\{-1,1\}^n$, and any positive integers $s,t$ such that $d(s+t) \leq R$, we have:
\[
\left|\pE[ fg]\right| \leq \sqrt{ \left(1-\frac{1}{s}\right)\pE[ f^2] + \frac{1}{s} \|f\|_1^2} \sqrt{ \left(1-\frac{1}{t}\right)\pE[ g^2] + \frac{1}{t}\|g\|_1^2}.
\] \label{thm:approxcs}
\end{theorem}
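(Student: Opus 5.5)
The plan is a sampling argument that replaces the \emph{global} square $(\lambda f - g/\lambda)^2$ (which Sherali-Adams cannot certify to be nonnegative) with a \emph{random local} square, then averages over the randomness. Write $f = \sum_S a_S \chi_S$ and $g = \sum_T b_T \chi_T$, where $\chi_S(x)=\prod_{i\in S}x_i$ and $|S|,|T|\le d$. Set $A=\|f\|_1$ and $B=\|g\|_1$; if either is zero the statement is trivial. Sample a monomial $S$ with probability $|a_S|/A$ and define the random function $F(x) = A\,\mathrm{sign}(a_S)\chi_S(x)$. Then $\E_S F = f$ as polynomials, and $F^2 \equiv A^2$ pointwise on $\{-1,1\}^n$. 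Draw $s$ i.i.d.\ copies $F_1,\dots,F_s$ and let $\bar F = \frac1s\sum_i F_i$. Analogously, draw $t$ independent copies of the $g$-sampler (independent of the $F_i$'s) and form $\bar G$.

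The first key step is to compute the second moments exactly as polynomial identities. The diagonal terms are $F_i^2 = A^2$, and independence gives $\E[F_iF_j]=f^2$ for $i\ne j$. Hence
\[
\E\,\bar F^2 = \tfrac{1}{s}A^2 + \left(1-\tfrac1s\right) f^2, \qquad \E\,\bar G^2 = \tfrac{1}{t}B^2 + \left(1-\tfrac1t\right) g^2, \qquad \E\,\bar F\bar G = fg .
\]
The second key step is locality. For each fixed outcome of the samples and each $\lambda>0$, the function $(\lambda \bar F - \bar G/\lambda)^2$ is pointwise nonnegative. It depends only on the variables in $S_1\cup\dots\cup S_s\cup T_1\cup\dots\cup T_t$, a set of size at most $d(s+t)\le R$. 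The $R$-local positivity condition therefore gives $\pE[(\lambda\bar F-\bar G/\lambda)^2]\ge 0$, which rearranges to
\[
2\,\pE[\bar F\bar G]\le \lambda^2\,\pE[\bar F^2]+\lambda^{-2}\,\pE[\bar G^2].
\]

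The third step averages this inequality over the samples. The sample space is finite and $\pE$ is linear, so $\E$ and $\pE$ commute. Substituting the identities above yields
\[
2\,\pE[fg]\le \lambda^2 P + \lambda^{-2} Q, \quad\text{where } P=(1-\tfrac1s)\pE[f^2]+\tfrac1s A^2,\;\; Q=(1-\tfrac1t)\pE[g^2]+\tfrac1t B^2 .
\]
To finish, check that $P,Q\ge 0$: the term $\pE[f^2]$ may be negative in principle, but $P=\pE[\E\bar F^2]$ is an average of pseudo-expectations of local squares, hence nonnegative (and likewise for $Q$). If $P,Q>0$, choose $\lambda^2=\sqrt{Q/P}$ to get $\pE[fg]\le\sqrt{PQ}$. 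If $P$ or $Q$ vanishes, let $\lambda\to\infty$ or $\lambda\to 0$ to get $\pE[fg]\le 0$. Applying the same argument to $-g$ (which has the same $\|\cdot\|_1$ and the same square) gives the bound on $|\pE[fg]|$.

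I expect the only delicate point to be the locality bookkeeping: the constraint $d(s+t)\le R$ must be exactly the number of variables touched by one sampled square, and positivity must be applied per sample before averaging rather than to $(\lambda f-g/\lambda)^2$ directly. Once the sampled objects $\bar F,\bar G$ are set up, the rest is routine.
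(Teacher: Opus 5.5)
Your proposal is correct and follows the paper's proof: it uses the same importance-sampling estimators $\bar F,\bar G$ (the paper's $\frac{a_i}{p_i}\chi_i$ is exactly your $\|f\|_1\,\mathrm{sign}(a_S)\chi_S$), the same locality count $d(s+t)\le R$, and the same second-moment computation giving $(1-\frac1s)\pE[f^2]+\frac1s\|f\|_1^2$. The only difference is cosmetic: the paper applies local Cauchy--Schwarz to each sample and then Jensen over the sampling, whereas you use local positivity of $(\lambda\bar F-\bar G/\lambda)^2$ with one common $\lambda$, average linearly and optimize $\lambda$ at the end, which also makes the nonnegativity of $P$ and $Q$ explicit.
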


Our proof is based on a simple sampling scheme that allows us to write $f$ and $g$ as averages over \emph{local} functions with some additive ``sampling" error. 

\begin{proof}

We present the proof for $d =1$. The case of larger $d$ is the same but, unlike below, where we ``sample" monomials of degree $1$, we sample monomials of degree $\leq d$ instead.

Let $f(x) = a_0 + \sum_{i \leq n} a_i x_i$ and $g(x) = b_0 + \sum_i b_i x_i$. We will assume that $0< \|f\|_1 = \sum_i |a_i|$ and $0< \|g\|_1 = \sum_i |b_i|$ (if either of these quantities is $0$ then there is nothing to prove). 

We now sample an unbiased estimator for $f$. Define a distribution $p$ on $\{0,1,2,\cdots, n\}$ by setting $p_i = \frac{|a_i|}{\|f\|_1}$. For each $i \in \{0,1,2,\cdots,n\}$, let $\chi_i(x) = x_i$ if $i>0$ and $\chi_0(x) = 1$. Draw $i_1, i_2, \ldots, i_s$ independently from $p$ and define:
\[
F = \frac{1}{s} \sum_{ \ell = 1}^s \frac{a_{i_{\ell}}}{p_{i_{\ell}}} \chi_{i_\ell}(x).
\]
Then, clearly, $F$ depends on at most $s$ variables. Further, $\E[ F] = f$, where the expectation is with respect to $p$. Similarly, construct $G$ from $g$ using $t$ samples. Now, let $\cD$ be any $R$-local pseudo-distribution on $x_1, x_2, \ldots, x_n \in \{\pm 1\}$. Since $F$ and $G$ are $s,t$-local respectively and $s+t \leq R$, we must have:

\begin{equation} \label{eq:local-cs}
| \pE_D[ FG ]| \leq \pE_{D}[F^2]^{1/2} \pE_{D}[ G^2]^{1/2}.
\end{equation}

Now, since $F$ and $G$ are sampled independently, we also have:
\[
\E_{\mathsf{sampling}} \pE[FG] = \pE[ fg].
\]
Thus, using \eqref{eq:local-cs}, taking expectations with respect to the sampling process and applying Jensen's inequality for the expectation of the square root function gives 
\begin{equation} \label{eq:to-be-computed}
\left|\pE[fg]\right| \leq \E_{\mathsf{sampling}} \pE_{\cD}[F^2]^{1/2} \pE_{\cD}[ G^2]^{1/2} \leq \sqrt{ (\E_{\mathsf{sampling}} \pE_{\cD}[F^2])} \sqrt{ \E_{\mathsf{sampling}} \pE_{\cD}[ G^2]}.
\end{equation}

We finish by computing the RHS. 
Let $X_{\ell} = \frac{a_{i_{\ell}}}{p_{i_{\ell}}} \chi_{i_\ell}(x)$, so $F = \frac{1}{s} \sum_{\ell = 1}^s X_{\ell}$. For two distinct samples $\ell \neq m$, we have:
\[
\E_{i_{\ell}, i_m}  [ \pE\left[ X_{\ell} X_{m}\right]] = \pE[f^2].
\]
On the other hand, when $\ell = m$, we have:
\[
\E_{i_{\ell}}  [ \pE[ X_{\ell}^2]] = \sum_i \frac{a_i^2}{p_i} = \left(\sum_i |a_i|\right)^2 = \|f\|_1^2.
\]
Thus, 
\[
\E_{\mathsf{sampling}}[ \pE[ F^2] ] = \frac{s(s-1)}{s^2} \pE[ f^2] + \frac{s}{s^2} \|f\|_1^2 = \left(1-\frac{1}{s}\right) \pE[f^2] + \frac{\|f\|_1^2}{s}.
\]
A similar argument for $G$ gives:
\[
\E_{\mathsf{sampling}}[ \pE[ G^2] ] = \left(1-\frac{1}{t}\right) \pE[g^2] + \frac{\|g\|_1^2}{t}.
\]
Plugging back in \eqref{eq:to-be-computed} finishes the proof.\end{proof}

\subsection{A Matching Lower Bound} \label{sec:sherali-adams-lower-bound}
In this section, we prove that the additive error appearing in our approximate Cauchy-Schwarz inequality is necessary and, in fact, that our bound is quantitatively optimal. 

\begin{theorem} \label{thm:sa-matching-lower-bound}
Let $n,R\geq 2$ be integers such that $n/R \rightarrow \infty$. Then, there exists an $R$-local Sherali-Adams pseudo-distribution on $\{-1,1\}^n$ such that for the linear polynomial $p(x) = \sum_i x_i$, we have
$\pE_D[ p(x)^2] = 0$ and
\[
\pE_{D}[ p] \geq \left(1 - O\left(1/R + R/n\right)\right)\frac{n}{\sqrt{R-1}}.
\]
\end{theorem}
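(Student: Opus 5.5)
The plan is to build a \emph{fully symmetric} pseudo-distribution whose pseudo-moments depend only on the size of the monomial: $\pE[\prod_{i\in S}x_i]=m_{|S|}$ for $|S|\le R$, and $0$ (say) for $|S|>R$. We take $D$ to be the function on $\{-1,1\}^n$ with exactly these Fourier coefficients. Then $\pE[1]=1$ is automatic. For $R$-locality it suffices that for one (hence, by symmetry, every) set $T$ of $R$ coordinates, the numbers $(m_k)_{k\le R}$ are the moments of a genuine probability distribution $\mu_R$ on $\{-1,1\}^R$ that is exchangeable. Indeed, the pseudo-expectation of any function of the variables in $T$ is then its true expectation under $\mu_R$. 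Functions of fewer than $R$ variables are also handled, since they are functions of some $R$-set, and exchangeability makes all these marginals consistent.

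\textbf{Step 1: reduce the targets to moment conditions.} With $m_1,m_2$ as above, $\pE[p]=nm_1$ and $\pE[p^2]=n+n(n-1)m_2$. So $\pE[p^2]=0$ forces $m_2=-1/(n-1)$, and we want $m_1$ as large as possible. For an exchangeable $\mu_R$ with $Y=\sum_{i\le R}x_i$, we have $\E[Y]=Rm_1$ and $\E[Y^2]=R+R(R-1)m_2$. Hence the requirement becomes
\[
\E[Y^2]=M:=R-\frac{R(R-1)}{n-1}.
\]
Moreover, every distribution of $Y$ on $\{-R,-R+2,\dots,R\}$ lifts to an exchangeable $\mu_R$: condition on $Y$ and place the $+1$'s uniformly at random. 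So the whole problem reduces to finding an integer random variable $Y$ with the parity of $R$ and $|Y|\le R$, with $\E[Y^2]=M$ and $\E[Y]$ as large as possible. By Cauchy--Schwarz the mean can be at most $\sqrt M$, so we aim for roughly that.

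\textbf{Step 2: two-point construction.} For a target mean $m\in[0,R-2]$, let $a$ be the largest integer $\le m$ with the parity of $R$. Put $Y\in\{a,a+2\}$ with the probabilities that give mean $m$. Its variance is $v(m)=(m-a)(a+2-m)\in[0,1]$, so $\E[Y^2]=m^2+v(m)$. This function is continuous and nondecreasing in $m$, and it sweeps all values from $0$ to beyond $M$. By the intermediate value theorem we can pick $m$ with $m^2+v(m)=M$, and this gives $m\ge\sqrt{M-1}$. We must also check $|Y|\le R$, which holds since $\sqrt M+2\le R$ for $R$ large; small $R$ can be absorbed into the $O(1/R)$ term.

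\textbf{Step 3: compute the bound.} We get
\[
\pE[p]=\frac{nm}{R}\ge\frac{n\sqrt{M-1}}{R}=\frac{n}{\sqrt R}\bigl(1-O(1/R+R/n)\bigr).
\]
Since $1/\sqrt R=(1-O(1/R))/\sqrt{R-1}$, this matches the claimed bound.

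The only delicate step is Step 2: realizing the second moment \emph{exactly}, with variance at most $1$, under the integrality and parity constraints on $Y$. Everything else is bookkeeping about symmetric Fourier coefficients.
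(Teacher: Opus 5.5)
Your proposal is correct and is essentially the paper's construction: a permutation-invariant local distribution on each $R$-set, obtained by lifting a two-point law for $Y_T=\sum_{i\in T}x_i$ on adjacent admissible values $\{a,a+2\}$ whose second moment is exactly $R(n-R)/(n-1)$. This forces $\pE[x_ix_j]=-1/(n-1)$ and hence $\pE[p^2]=0$, while the two-point variance being at most $1$ gives mean at least $\sqrt{M-1}$. Your explicit Fourier definition of $D$ and the variance bound spell out steps the paper leaves implicit; the only nit is that for $R\in\{2,3\}$ you should let $m$ range up to $R$ rather than $R-2$, so that the intermediate-value argument actually reaches $M$.
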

\begin{remark}
Applying \Cref{thm:approxcs} with $f = p$, $g= 1$, $s=R-1$ and $t=1$ gives $\pE_D[p] \leq \frac{n}{\sqrt{R-1}}$. Thus, the additive error in \Cref{thm:approxcs} is asymptotically tight with the correct leading constant. Notice also that this example rules out an approximate Cauchy-Schwarz inequality with any finite \emph{multiplicative} error. 
\end{remark}

\begin{proof}

We will define the pseudo-distribution  $D$ by describing a local distribution on every collection $T$ of $R$ variables. The local distributions will be permutation-invariant on $T$ and thus, the marginal distribution on any $T' \subseteq T$ will depend only on $|T'|$. In particular, local consistency holds: for any $T' \subseteq T_1\cap T_2$ for two different sets of size $R$ each, the marginal of the local distributions on $T_1$ and $T_2$ on $T'$ agree. This immediately implies the existence of an $R$-local Sherali-Adams pseudo-distribution with pseudo-moments agreeing with the local distributions we will construct. 

For every $T \subseteq [n]$ of size exactly $R$, let $Y_T = \sum_{i \in T} x_i$. We will sample $Y_T$ from a distribution on $[-R,R]$ and then set the constituent $x_i$s to be uniform conditioned on satisfying $\sum_i x_i$ being equal to the sampled value of $Y_T$, ensuring the desired permutation invariance. We choose the distribution carefully to have prescribed first and second moments (which will govern $\pE_D [p]$ and $\pE_D[p^2]$, respectively).

Let $A = \frac{R(n-R)}{n-1}$. This will be our prescribed 2nd moment for $Y_T$. Notice that $A = R(1-O(R/n))$ and $\sqrt{A} = \sqrt{R}(1-O(R/n))$. Choose $a$ to be the largest positive integer in $\{-R, -R+2, \cdots, R-2,R\}$ such that $a \leq \sqrt{A}$. Set $b = a+2$. Choose $\theta$ so that $(1-\theta) a^2 + \theta b^2 = A$, that is, set $\theta = \frac{A-a^2}{b^2-a^2}$. Observe that since $a^2 \leq A \leq b^2$, $\theta \in [0,1]$. We construct a distribution for $Y_T$  using the following process: with probability $1-\theta$, output $a$, with probability $\theta$, output $b$. Choose $x_T \in \{-1,1\}^T$ to be uniform over all assignments satisfying $\sum_i x_i = Y_T$. 

We now compute $\pE_D[ x_i x_j]$ for $i \neq j$. Observe that $Y_T^2 = R + 2 \sum_{i<j, i,j\in T} x_i x_j$. Since $\E[ Y_T^2] = A$, we must have $\pE_D[x_i x_j] = \frac{(A-R)}{R(R-1)} = -\frac{1}{n-1}$. Therefore $\pE_D[p^2] = 0$, so a simple computation gives $\pE_D[x_i] = \pE_D[Y_T]/R = ((1-\theta)a + \theta b)/R = \frac{1}{\sqrt{R}} (1- O(1/R + R/n))$. 

Thus, $\pE_D[p] = (1- O(1/R + R/n)) \frac{n}{\sqrt{R}} = (1- O(1/R + R/n)) \frac{n}{\sqrt{R-1}}$, as desired. 
\end{proof}
\newcommand{\psil}{\psi^{(slift)}}
\newcommand{\psic}{\psi^{(cross)}}
\newcommand{\psilc}{\psi^{(L,cross)}}
\newcommand{\psirc}{\psi^{(R,cross)}}
\newcommand{\psid}{\psi^{(diag)}}
\newcommand{\psis}{\psi^{(sq)}}
\newcommand{\phil}{\phi^{(slift)}}
\newcommand{\dps}{d^\phi}
\newcommand{\dph}{d^\phi}
\newcommand{\dih}{d_i^\cH}
\newcommand{\dhh}{d^\cH}
\newcommand{\liftt}[1]{\text{LIFT}\left(#1\right)}
\newcommand{\olift}[1]{\text{oLIFT}\left(#1\right)}
\newcommand{\slift}[1]{\text{sqLIFT}\left(#1\right)}
\newcommand{\xlift}[1]{\text{xLIFT}\left(#1\right)}
\section{Refuting Random $k$-XOR}
\newcommand{\chij}{\cH_i^{(j)}}
In this section, we will show that Sherali-Adams can refute random $k$-XOR instances, which we define below, as an application of approximate Cauchy-Schwarz.
\begin{definition}
    Let $k\geq 2$ be an integer. We say that $\psi$ is a random $k$-XOR instance with $pn^k$ expected clauses if for all $U\in [n]^k$, we add $U$ to $\cH$ with probability $p$ and sample an independent Rademacher ($\pm 1$) sign $b_U$. Then we write $\psi(x)=\sum_{U\in \cH}b_U\cdot x^U$, where $x^U\coloneqq \prod_{i\in U}x_i$.
\end{definition}
\begin{remark}
    Notice that if a $\frac{1}{2}+\gamma$ fraction of constraints are satisfied, i.e. we have $b_U\cdot x^U=1$ for $\frac{1}{2}+\gamma$ fraction of $U\in \cH$, then $\psi(x)=2\gamma|\cH|$. Therefore, certifying that we cannot satisfy a large fraction of constraints is equivalent to certifying an upper bound on $\psi(x)$.
\end{remark}
We also define the following notation that we will use:
\begin{definition}
    Let $\psi=\sum_{U\in \cH}b_U\cdot x^U$ be a $k$-XOR instance with clause set $\cH$. For all $U\not\in\cH$, we let $b_U=0$. Define $\cH_i\coloneqq\{C\in \cH:C_k=i\}$ be the subset of clauses in $\cH$ whose last coordinate is $i$.% and write $\psi_i(x)\coloneqq \sum_{U\in \cH_i}b_U\cdot x^U$.
\end{definition}
We will prove the following theorem:
\begin{theorem}\label{theorem:randomxor}
    Let $k$ be an odd integer and $\psi(x)=\sum_{U\in \cH} b_U\cdot x^U$ be a random $k$-XOR instance with $n^{k/2+\delta}$ expected constraints. Suppose we have $\delta<\frac12$ and $2^{O(k)}\log^3n\leq n^{\delta/2}$. Then for all $\Omega(1)=\gamma\leq 1/2$, there exists some $R=O_{\gamma, \delta, k}(1)$ such that every $R$-local pseudodistribution $\cD$ satisfies $\pE_\cD\psi(x)\leq 2\gamma |\cH|$.
\end{theorem}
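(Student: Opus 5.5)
We follow the standard spectral refutation for odd $k$ (as in \cite{AOW15}) and use \Cref{thm:approxcs} to replace its single Cauchy--Schwarz step, which pseudo-distributions from Sherali--Adams do not satisfy exactly.

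\textbf{Step 1: split by the last coordinate.} Write
\[
\psi(x)=\sum_{i\le n} x_i\, f_i(x), \qquad f_i(x)\coloneqq\sum_{U\in\cH_i} b_U\, x^{U\setminus\{k\}} .
\]
Here $f_i$ has degree $k-1$. Define $f=(f_1,\dots,f_n)$ and apply \Cref{thm:approxcs} to the pair $g(x)=\sum_i x_i f_i(x)$ and $1$. More usefully, apply it to the vector form: introduce an auxiliary sampling index $i$ and use the same sampling argument as in the proof of \Cref{thm:approxcs}. Sample $s$ indices $i$ with probability proportional to $|\cH_i|$, and monomials inside each $f_i$. The result should be an inequality of the form
\[
\pE[\psi]^2 \;\le\; n\Bigl(\bigl(1-\tfrac1s\bigr)\pE\Bigl[\sum_i f_i^2\Bigr]+\tfrac1s\|\cdot\|_1^2\Bigr).
\]
The cleanest route is to bound $\pE[\psi]=\pE[\sum_i x_i f_i]$ by sampling $O(s)$ terms $i$. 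On each sampled term, local Cauchy--Schwarz with $x_i^2=1$ applies. This is exactly the $d\le k$ version of \Cref{thm:approxcs}, so it needs locality $R\approx ks$. The additive error is $\frac1s$ times the square of an $L_1$ norm of size about $|\cH|$, which becomes $\frac{|\cH|^2}{s}\le \gamma^2|\cH|^2$ once $s=O(1/\gamma^2)$.

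\textbf{Step 2: the square is an even-arity instance.} Expand
\[
\sum_i f_i^2=\sum_i\sum_{U,V\in\cH_i} b_Ub_V\, x^{U\setminus\{k\}}x^{V\setminus\{k\}} .
\]
The diagonal terms $U=V$ contribute exactly $|\cH|$, which is $o(|\cH|^2/n)$ for the stated density. The off-diagonal terms form a $(2k-2)$-XOR-like polynomial with coefficients $\sum_i b_Ub_V$. These coefficients are sums of independent, mean-zero products, so the relevant matrix, indexed by $(k-1)$-tuples, has the random-matrix spectral bound used in \cite{AOW15}: its norm is $\tilde O(\sqrt{p^2n^{k+1}}\cdot\text{polylog})$.

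We then need a Sherali--Adams certificate, not an SoS one, for the bound $\pE[\text{off-diag}]\le \eps\, |\cH|^2/n$. For this we invoke the even-arity machinery of \cite{OS19}: their theorem shows that $O(1/\eps)$-local Sherali--Adams certifies the spectral-norm bound for even-arity random-like instances at the appropriate density. The density condition translates from $|\cH|=n^{k/2+\delta}$ to an $n^{\delta'}$ slack for the $(2k-2)$-arity instance, and the hypothesis $2^{O(k)}\log^3 n\le n^{\delta/2}$ absorbs the polylog losses.

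\textbf{Step 3: combine and identify the obstacle.} Combining the steps gives $\pE[\psi]^2\le n(|\cH|+\eps|\cH|^2/n)+\gamma^2|\cH|^2\le 4\gamma^2|\cH|^2$ for suitable constant $\eps$ and $s$. The locality required is the maximum of $ks$ and the locality needed by \cite{OS19}, which is $O_{\gamma,\delta,k}(1)$.

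The main obstacle is Step 2. The off-diagonal polynomial is not a fully independent random instance, since its coefficients are correlated through the shared index $i$. So the \cite{OS19} argument, which relies on counting and trace-type bounds on random hypergraph structures, must be rerun with these dependent coefficients, for example by bounding moments of the relevant walks. A first attempt would be to decouple by splitting the clauses randomly into two halves, so that $U$ and $V$ come from independent samples, and then to check which properties (bounded degrees, a bound on a high power of the matrix's trace) the \cite{OS19} proof actually uses.
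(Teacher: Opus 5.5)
Your Step 1 target inequality and the bookkeeping in Step 3 match the paper. The paper's route is to apply \Cref{thm:approxcs} to each $\psi_i=\sum_{U\in\cH_i}b_Ux^U$ against $g=1$, with $t=1$, so locality $k(s+1)$ suffices. It then sums over $i$ via $\sum_i\sqrt{a_i}\le\sqrt{n\sum_i a_i}$ and uses $\psi_i^2=f_i^2$ on the cube, so that $\pE\sum_i\psi_i^2=|\cH|+\tfrac12\pE\,(x^{\odot(k-1)})^\top A(\psi)\,x^{\odot(k-1)}$. You should also record the Chernoff balance $|\cH_i|\le(1+\gamma)|\cH|/n$. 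It is what turns the additive error $\tfrac{n}{s}\sum_i|\cH_i|^2$ into $O(|\cH|^2/s)$, and the mass $\sum_{S,T}|A(\psi)_{S,T}|\le 2\sum_i|\cH_i|^2$ into $O(|\cH|^2/n)$.

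The genuine gap is Step 2, which you leave open because you read \cite{OS19} as a statement about random instances that would have to be re-proved for the correlated coefficients $\sum_i b_Ub_V$. The missing idea is that their certificate is deterministic. As abstracted in \Cref{lemma:os19corollary}, take \emph{any} symmetric integer matrix $A$ indexed by $t$-tuples, with $D$ the diagonal matrix of its row $\ell_1$-norms. Then Sherali--Adams, with locality depending only on $\eta$, $t$ and $\log\pi_*/\log\rho(D^{-1}A)$, certifies $|\pE(x^{\odot t})^\top Ax^{\odot t}|\le\eta\sum_{S,T}|A_{S,T}|$. The proof treats $A$ as a signed multigraph, applies the $2$-XOR theorem of \cite{OS19} to variables $y_S$, and pulls back via $y_S=x^S$. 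So no decoupling or rerun of trace arguments is needed. Randomness enters only in verifying $\rho(D^{-1}A(\psi))\le\|A(\psi)\|/\min_j D_{jj}\le n^{-\delta/2}$.

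That verification needs two inputs; your plan misstates one and omits the other.

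The first is the \cite{AOW15} bound $\|A(\psi)\|\le 2^{O(k)}n^{\delta}\log^3 n$ (\Cref{fact:randomspecnorm}), which already handles the dependence through the shared index $i$. It holds for the interleaved flattening: rows $(S_1,S_2)$, columns $(T_1,T_2)$, entries $\sum_i b_{S_1T_1i}b_{S_2T_2i}+b_{T_1S_1i}b_{T_2S_2i}$. The naive flattening by prefixes $U\setminus\{i\}$, $V\setminus\{i\}$ does not work. It is a sum of rank-one blocks $v_iv_i^\top$ minus the diagonal, with norm about $\max_i|\cH_i|$ and normalized spectral radius close to $1$. Your estimate $\tilde O(\sqrt{p^2n^{k+1}})=\tilde O(n^{1/2+\delta})$ is too large by a factor $\sqrt n$. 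It would not close the argument, because rows have $\ell_1$-norm only about $n^{2\delta}$.

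The second input is a minimum-degree bound: every row of $A(\psi)$ must have $\ell_1$-norm $\Omega(n^{2\delta})$. This both controls $\rho(D^{-1}A)$ and keeps $\pi_*\ge n^{-O(k)}$, so that the parameter $\ell$, and hence $R$, in \Cref{lemma:os19corollary} is $O_{\gamma,\delta,k}(1)$. It is not automatic: entries are signed sums that can cancel, and you need failure probability $\exp(-\Omega(n^{2\delta}))$ to union bound over $n^{k-1}$ rows. The paper proves it in \Cref{prop:degsgood}. It restricts to good indices $i$, where $S_1$ and $S_2$ each have a unique completion in $\cH$, and discards collisions. It then uses that each surviving entry is nonzero with probability at least $1/2$, independently across good indices.
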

\begin{remark}
    Using a standard Fourier analytic reduction, we can generalize this result to general $k$-CSPs.
\end{remark}
The state-of-the-art refutation algorithms for random (and semirandom) $k$-XOR instances are based on spectral certificates. Informally, the idea is to show that the ``value" polynomial of any $k$-XOR instance (the polynomial that computes the advantage over $1/2$ of an assignment) can be upper bounded at every $x$ by a quadratic form of a certain matrix built from the instance. One then establishes that when the input instance is random (or semirandom) with appropriately many constraints, then, the spectral norm of this matrix is small. We will call a matrix satisfying sufficient conditions for obtaining a Sherali-Adams refutation a spectral certificate.
\begin{definition}[Spectral Certificates]\label{def:randcertificate}
Let $k$ be an odd integer and $\psi(x)=\sum_{U\in \cH} b_U\cdot x^U$ be a $k$-XOR instance. Define $A(\psi)$ to be a $n^{k-1}\times n^{k-1}$ matrix defined as follows: For all $(S_1,T_1)\neq (S_2,T_2)\in [n]^{(k-1)/2}\times [n]^{(k-1)/2}$, we have
    \begin{align*}
        A(\psi)_{(S_1,S_2),(T_1,T_2)}\coloneqq \sum_{i\in [n]}b_{S_1T_1i}b_{S_2T_2i}+b_{T_1S_1i}b_{T_2S_2i}\mcom
    \end{align*}
    and $A(\psi)_{(S_1,S_1),(S_2,S_2)}=0$. We call $A(\psi)$ a $\delta$-\emph{spectral certificate for $\cH$} if
    \begin{align*}
        \Norm{A(\psi)}\leq n^{-\delta}  \left(\min_{(S_1,S_2)}\Norm{A(\psi)_{(S_1,S_2)}}_1\right)\mper
    \end{align*}
    We let $D(\psi)_{S_1,S_2}\coloneqq\Norm{A_{S_1,S_2}}_1$ and write $D(\psi)\coloneqq \diag\left(\left\{D(\psi)_{S_1,S_2}\right\}\right)$. Observe that
    \begin{align*}
        x^{\odot (k-1)}A(\psi)x^{\odot (k-1)}=2\sum_i\sum_{U\neq U'\in \cH_i}b_U b_{U'}x^Ux^{U'}\mper
    \end{align*} 
    \end{definition}
With high probability over the choice of $\psi$, the matrix $A(\psi)$ is a spectral certificate. The proof of this follows from the work of \cite{AOW15} along with a short technical argument, and is deferred to the appendix.
\begin{restatable}{lemma}{randomcertificateexists}\label{lem:randomcertificateexists}
    Let $k$ be an odd integer and suppose $\psi$ is a random $k$-XOR instance with $n^{k/2+\delta}$ constraints such that $\delta<\frac12$ and $2^{O(k)}\log^3n\leq n^{\delta/2}$. Then $A(\psi)$ is a $\frac{\delta}{2}$-spectral certificate for $\psi$ with high probability.
\end{restatable}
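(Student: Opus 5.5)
We need two bounds: an upper bound on $\Norm{A(\psi)}$ and a lower bound on $\min_{(S_1,S_2)} \Norm{A(\psi)_{(S_1,S_2)}}_1$. Combining them gives the certificate condition with exponent $\delta/2$.

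\textbf{Spectral norm.} For the upper bound, invoke the random odd $k$-XOR refutation analysis of \cite{AOW15}. Their matrix is exactly this ``Cauchy-Schwarz'' matrix, indexed by pairs of $(k-1)/2$-tuples with entries $\sum_i b_{S_1T_1i}b_{S_2T_2i}$. Here the diagonal blocks, coming from repeated clauses $U=U'$, are removed. Their trace-method bound gives, with high probability,
\[
\Norm{A(\psi)} \leq 2^{O(k)}\,\mathrm{polylog}(n)\cdot \max\!\left(\sqrt{n}\, p\, n^{k/2} ,\; \sqrt{p\, n^{k-1}}\cdot \sqrt{n}\right)
\]
up to the appropriate normalization. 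Here $p = n^{\delta - k/2}$ is the edge probability. Writing $m = n^{k/2+\delta}$, I expect this to simplify to roughly $2^{O(k)}\log^{O(1)} n\cdot p\, n^{(k+1)/2}$. The second (variance) term is dominated because $\delta>0$. The symmetrization, which adds the transposed term $b_{T_1S_1i}b_{T_2S_2i}$, at most doubles the norm.

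\textbf{Row $\ell_1$ norms.} Next, lower bound $\Norm{A_{(S_1,S_2)}}_1$ for every row. The row $(S_1,S_2)$ has entries $\sum_i b_{S_1T_1i}b_{S_2T_2i}$ over $n^{k-1}$ columns $(T_1,T_2)$. Each entry is a sum of about $n p^2$ independent products of signs. Since $np^2 n^{k-1} = p^2n^k = n^{2\delta}$, most entries are $0$ or $\pm1$. So the row $\ell_1$ norm should concentrate around the number of pairs $(T_1T_2,i)$ for which both $S_1T_1i$ and $S_2T_2i$ are clauses, which is about $p^2 n^{k}\cdot$ (a constant).

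This is the ``short technical argument''. Compute the expected $\ell_1$ norm, which is at least the expected number of columns with exactly one contributing $i$. Then apply a Chernoff bound, since these are sums of nearly independent indicators, plus a union bound over the $n^{k-1}$ rows. This gives $\min \Norm{A_{(S_1,S_2)}}_1 \geq \Omega(p^2 n^k)$ with high probability. This requires $p^2n^k = n^{2\delta}\gg \log n$, which the hypothesis $2^{O(k)}\log^3 n\le n^{\delta/2}$ ensures.

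The diagonal case $S_1=S_2$ needs separate care. There the entries $A_{(S_1,S_1),(T_1,T_2)}$ with $T_1\ne T_2$ are still products of distinct clauses. The only subtlety is the excluded entries and the correlation $b_{S_1T_1i}b_{S_1T_1 i}$ when $T_1=T_2$, which have been zeroed out. These affect only a negligible fraction of the mass.

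\textbf{Combining the bounds.} The ratio is roughly
\[
\frac{2^{O(k)}\log^{O(1)} n\cdot p\, n^{(k+1)/2}}{p^2 n^k} = \frac{2^{O(k)}\log^{O(1)} n}{p\, n^{(k-1)/2}} = 2^{O(k)}\log^{O(1)}n \cdot n^{1/2-\delta}\cdot n^{-1/2}\cdots
\]
Care is needed here, and I would redo the normalization precisely. The aim is to land at ratio $\le 2^{O(k)}\log^3 n\cdot n^{-\delta}\le n^{-\delta/2}$, using $2^{O(k)}\log^3 n\leq n^{\delta/2}$.

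\textbf{Main obstacle.} The main obstacle is matching the precise spectral bound from \cite{AOW15}, including its polylog factors and the regime condition $\delta<1/2$ (which keeps the dominant term as stated), to our normalization of the row sums. I would first restate their theorem for the matrix with the diagonal removed and verify that the exponent of the log matches $\log^3 n$.
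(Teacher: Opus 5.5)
Your two-part plan (a spectral bound from \cite{AOW15}, plus a Chernoff and union-bound lower bound on every row's $\ell_1$ norm) has exactly the paper's structure. However, the one quantitative input that makes it work is misquoted, and with it the combining step fails.

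\textbf{The spectral bound.} With $p=n^{\delta-k/2}$, your claimed bound $p\,n^{(k+1)/2}$ equals $n^{\delta+1/2}$. The ratio against the row norm $\Theta(n^{2\delta})$ is then $n^{1/2-\delta}$, which exceeds $1$ throughout the regime $\delta<\frac12$; the extra $n^{-1/2}$ in your display does not exist. As a sanity check, $n^{\delta+1/2}$ is even worse than the trivial bound $\Norm{A(\psi)}\le\max_{(S_1,S_2)}\Norm{A(\psi)_{(S_1,S_2)}}_1\approx n^{2\delta}$, so it cannot be the right spectral estimate. The bound the paper uses, \cite[Lemma~A.5]{AOW15}, is $\Norm{A(\psi)}\le 2^{O(k)}n^{\delta}\log^3 n$, which in your notation is $2^{O(k)}\,p\,n^{k/2}\log^3 n$. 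This is roughly the square root of the row degree, as expected for a matrix with about $n^{2\delta}$ random $\pm1$ entries per row. Combined with $\min_{(S_1,S_2)}\Norm{A(\psi)_{(S_1,S_2)}}_1\ge n^{2\delta}/7$, it gives a ratio of at most $7\cdot 2^{O(k)}\log^3 n\cdot n^{-\delta}\le n^{-\delta/2}$, by the hypothesis $2^{O(k)}\log^3 n\le n^{\delta/2}$.

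\textbf{The row-norm step.} The condition $\delta<\frac12$ does not select a dominant term of the spectral bound. The paper uses it in the row-norm argument: for fixed $(S,i)$, the number of $T$ with $STi\in\cH$ has mean $n^{\delta-1/2}<1$.

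Your ``nearly independent indicators'' also need an actual argument. A single clause $S_1T_1i$ feeds every column $(T_1,T_2)$ with $S_2T_2i\in\cH$. A column can also receive terms from other indices and from the symmetrized term $b_{T_1S_1i}b_{T_2S_2i}$. The paper handles this in four steps:
\begin{itemize}
\item It calls $i$ good when $S_1$ and $S_2$ each have exactly one completion $T_1^i,T_2^i$. The number of good indices is then binomial with mean about $n^{2\delta}$.
\item It discards the rare good indices whose completions collide with earlier ones; each collision has probability $O(n^{2\delta}/n^{(k-1)/2})$. After this, distinct good indices hit distinct columns.
\item Conditioned on all other contributions, each such column entry cancels to $0$ with probability at most $\frac12$, independently across columns.
\item A Chernoff bound then gives at least $n^{2\delta}/7$ nonzero entries with probability $1-\exp(-\Omega(n^{2\delta}))$, which survives the union bound over the $n^{k-1}$ rows.
\end{itemize}
The case $S_1=S_2$ is handled the same way, using good indices with exactly two completions.
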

\begin{remark}
    This result holds for all $\delta<\frac{k}{2}$ but requires significantly more technical details.
\end{remark}
One can view the main idea of O'Donnell and Schramm's work as showing (under appropriate conditions) that spectral bounds on certifying matrices such as above can be proven using the Sherali-Adams linear program. We will abstract their analysis into the following lemma:  
\begin{restatable}{lemma}{saspectralbd}[Abstracting out ~\cite{OS19}]\label{lemma:os19corollary}  
    Let $A$ be a symmetric matrix with integer entries with rows and columns indexed by $t$-tuples of $[n]$, and let $D$ be its degree matrix. Let $\pi_*=\frac{1}{\sum_{S,T}|A_{S,T}|}\min_{S}\Norm{A_S}_1$, $\rho(\cdot)$ denote the spectral radius of a matrix, and $\eta > \min(\pi_*^{-1/2} ,\rho(D^{-1}A))$. Take $\ell \coloneqq \Big\lceil \frac{1}{4} \frac{\log (4\eta^2 \pi_*/25)}{\log (5\rho(D^{-1}A)/2\eta)}\Big\rceil$ and $p \coloneqq \lceil (\frac{5}{2\eta})^{2\ell}\rceil$.
Then for $R(A) \coloneqq tp\ell + t$, any $R$-local pseudoexpectation $\cD$ on $x_1, \cdots, x_n$ satisfies $\left|\pE_\cD (x^{\odot t})^\top Ax^{\odot t}\right|\leq \eta \sum_{S,T}|A_{S,T}|$, where $x^{\odot t}$ is the $n^t$-dimensional vector whose entries are indexed by $t$-tuples of $[n]$ and are equal to the product of the corresponding entries.
\end{restatable}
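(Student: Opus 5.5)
This lemma abstracts the trace-method / random-walk argument of O'Donnell and Schramm, so the plan is to rerun their proof with general notation. Write $P = D^{-1}A$, which is the (signed, weighted) transition matrix of a walk on $t$-tuples. Let $\pi$ be the stationary-type measure $\pi_S = \|A_S\|_1/\sum_{S,T}|A_{S,T}|$, so $\pi_S \geq \pi_*$. The target quantity is
\[
\pE (x^{\odot t})^\top A x^{\odot t} = \Big(\sum_{S,T}|A_{S,T}|\Big)\sum_S \pi_S\, \pE[x^S (Px^{\odot t})_S].
\]
It therefore suffices to show that the $\pi$-average of $\pE[x^S (P x^{\odot t})_S]$ is at most $\eta$ in absolute value.

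\textbf{Step 1: an $R$-local certificate.} Expand $\pE$ over products of walk steps. For a fixed start $S$, the quantity $x^S(P^{\ell}x^{\odot t})_S$ is an average over $\ell$-step signed walks. Each walk involves at most $t(\ell+1)$ variables, so it is a local function. Following OS19, I would bound $\big(\sum_S \pi_S\, x^S (Px^{\odot t})_S\big)^{2p}$, or rather a locally-supported proxy for it. To do this, sample $p$ independent walks of length $\ell$ from a uniformly chosen $\pi$-start. The relevant product depends on at most $tp\ell + t = R$ variables, and it is pointwise bounded, with its average over walks equal to a $2p$-th moment of the walk. Local nonnegativity of $\pE$ on this $R$-variable function (pointwise Hölder/AM–GM inequalities hold on each local function) lets us move from pseudo-expectations to true quantities inside each sample.

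\textbf{Step 2: the spectral bound.} Bound the average of these products using $\|P^{\ell}\|$ in $\ell_2(\pi)$, which is at most $\rho(P)^{\ell}$ since $P$ is self-adjoint in $\ell_2(\pi)$. Convert to an $\ell_\infty$-type bound using $\pi_S \geq \pi_*$, paying a factor $\pi_*^{-1/2}$. This gives a bound of the form
\[
|\pE\,\mathrm{obj}| \lesssim (5/2)\cdot\big(\rho(P)^{\ell}\pi_*^{-1/2}\big)^{1/\ell}\text{-type terms} + (5/2\eta)^{-2\ell}\text{-type sampling error}.
\]
The choices $\ell = \lceil \tfrac14 \log(4\eta^2\pi_*/25)/\log(5\rho/2\eta)\rceil$ and $p = \lceil (5/2\eta)^{2\ell}\rceil$ are exactly what make both error terms at most $\eta$. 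This is a direct parameter check.

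\textbf{Main obstacle.} The hard part is Step 1: producing a pointwise inequality between a \emph{local} polynomial of degree $R$ and the quadratic form. This is where Sherali–Adams differs from SoS, which could simply use global PSD-ness. I would follow OS19 exactly, using their "$p$ independent walks" averaging together with a pointwise inequality like $z \leq \tfrac{1}{2p}(\lambda^{-(2p-1)} z^{2p}) + \tfrac{2p-1}{2p}\lambda$, applied to each local function. Since this is pointwise, it transfers to any $R$-local pseudoexpectation. The main work is verifying that the integrality of the entries of $A$ and the symmetry of $A$ make $D^{-1}A$ a legitimate reversible walk, so that OS19's computations apply verbatim with $t$-tuples as vertices.
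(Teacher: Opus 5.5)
Your skeleton matches the paper's: O'Donnell--Schramm run on the signed walk over the vertex set $[n]^t$, with locality inflated by a factor $t$. The difference is that the paper never re-runs OS19. It uses a three-line black-box reduction, while the inline re-derivation you sketch would not go through as written.

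\textbf{The paper's reduction.} Introduce a variable $y_S$ for each $t$-tuple $S$, and let $\cD'$ be the pushforward of $\cD$ under $x\mapsto (x^S)_S$, i.e.\ $\pE_{\cD'}\prod_j y_{S_j}:=\pE_{\cD}\prod_j x^{S_j}$.
\begin{itemize}
\item A nonnegative function of $r\le p\ell+1$ of the $y_S$ pulls back to a nonnegative function of at most $tr\le tp\ell+t=R$ of the $x_i$. So $\cD'$ is a $(p\ell+1)$-local pseudo-distribution on $\{\pm1\}^{[n]^t}$, even though the $y_S$ are not independent functions of $x$.
\item Read $A$ as a multigraph with $|A_{S,T}|$ parallel edges of sign $\mathrm{sgn}(A_{S,T})$. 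Then $2|E|=\sum_{S,T}|A_{S,T}|$, $\deg(S)/2|E|=\pi_S$, the signed transition operator is $D^{-1}A$, and $\frac{1}{|E|}\sum b_{uv}y_uy_v=y^\top Ay/\sum_{S,T}|A_{S,T}|$.
\item OS19's theorem with $k=p$ and locality $p\ell+1$ gives the bound, since $\pE_{\cD'}y^\top Ay=\pE_\cD (x^{\odot t})^\top Ax^{\odot t}$.
\end{itemize}
So the ``main obstacle'' you name is handled entirely by citation. Integrality of $A$ is needed only to put $A$ in the multigraph form OS19 is stated for. Reversibility of $D^{-1}A$ with respect to $\pi$ follows from symmetry alone.

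\textbf{Where your inline sketch fails.} Raising $\Phi(x)=\sum_S\pi_S x^S(Px^{\odot t})_S$ to the power $2p$, or applying AM--GM with exponent $2p$ pointwise to a local function, produces quantities like $\Phi^{2p}$ rather than $\langle x,P^{2p}x\rangle_\pi$. Relating the two is a global spectral step that Sherali--Adams cannot take. Spectral information about $P$ also says nothing about higher-order correlations among the endpoints of your sampled walks.

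What can work is an inequality applied at the operator level in the eigenvalue variable, with exponent tied to walk length. For example, the univariate SOS inequality $\lambda\le\mu+\lambda^{2\ell}/(2\ell\mu^{2\ell-1})$ trades $\langle x,Px\rangle_\pi$ for $\langle x,P^{2\ell}x\rangle_\pi$ plus squares $\langle x,r(P)^2x\rangle_\pi$. Those squares are then made local by sampling $p$ legs of length $\ell$ from a common center, which uses $t(p\ell+1)$ variables, at a $1/p$ diagonal cost as in \Cref{thm:approxcs}. So $p$ is a sample count, not an exponent.

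Your parameter check is also off. The choice of $\ell$ gives $(5\rho/2\eta)^{4\ell}\le 4\eta^2\pi_*/25$, i.e.\ $\rho^{2\ell}\pi_*^{-1/2}\le(2\eta/5)^{2\ell+1}$. This controls $\sum_{S,T}\pi_S|(P^{2\ell})_{S,T}|\le\rho^{2\ell}\pi_*^{-1/2}$, the coefficient $\ell_1$-mass of the length-$2\ell$ form. The choice of $p$ gives $1/p\le(2\eta/5)^{2\ell}$. Neither matches the ``$(\rho^{\ell}\pi_*^{-1/2})^{1/\ell}$ plus additive $(2\eta/5)^{2\ell}$'' form you wrote, so that check would not close.
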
\begin{lemma}\label{theorem:randomrefutation}
    Let $k$ be an odd integer and $\psi(x)=\sum_{U\in \cH}b_U\cdot x^U$ be a $k$-XOR instance such that $|\cH|\geq \frac{36n}{\gamma^2}$ and $|\cH_i|\leq \frac{|\cH|}{n}(1+\gamma)$ for all $i$. Suppose that $A(\psi)$ is a $\delta$-spectral certificate for $\psi$. Then for all $\Omega(1)=\gamma\leq \frac12$, $R=O_{\gamma, \delta, k}(1)$-rounds of Sherali-Adams can certify that $\psi(x)\leq 2\gamma|\cH|$; in other words, for all $R$-local pseudodistributions $\cD$, $|\pE_\cD\psi(x)|\leq 2\gamma |\cH|$.  
\end{lemma}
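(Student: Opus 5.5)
We follow the standard odd-arity approach of \cite{AOW15}: decompose $\psi$ by the last coordinate, apply the approximate Cauchy-Schwarz inequality (\Cref{thm:approxcs}) to pass to a degree-$2(k-1)$ quadratic form, and bound that form with \Cref{lemma:os19corollary}. For each $i$, write $\psi(x)=\sum_i x_i\,\psi_i'(x)$ with $\psi_i'(x)=\sum_{U\in\cH_i} b_U x^{U\setminus\{i\}}$, a polynomial of degree $k-1$ whose coefficient $L_1$ norm is at most $|\cH_i|$. For the Cauchy-Schwarz step we would take $f(x)=\sum_i x_i\cdot(\text{local normalization})$ and $g=\sum_i x_i\psi'_i$. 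A cleaner route is to apply \Cref{thm:approxcs} in a weighted form directly to $\pE[\sum_i x_i\psi_i']$. This uses the same sampling proof: sample an index $i$ with probability proportional to $|\cH_i|$, then sample $s$ monomials. The resulting bound is
\[
|\pE\psi|\leq \sqrt{n}\,\sqrt{(1-\tfrac1s)\pE\Big[\textstyle\sum_i \psi_i'(x)^2\Big] + \tfrac1s\sum_i|\cH_i|^2}\mcom
\]
where the first factor comes from $\sum_i x_i^2=n$ (exactly, since $x_i^2=1$) and needs no sampling error. Concretely, we would apply the sampling argument with $f=(x_1,\dots,x_n)$ viewed as a vector and $g=(\psi'_1,\dots,\psi'_n)$. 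This is the same proof with $\sum_i f_ig_i$ in place of $fg$; the local Cauchy-Schwarz step only needs locality $O(sk)$.

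Next expand the square: $\sum_i\psi_i'^2=\sum_i|\cH_i| + \sum_i\sum_{U\neq U'\in\cH_i}b_Ub_{U'}x^Ux^{U'}$. The first term is the diagonal $|\cH|$. By the observation in \Cref{def:randcertificate}, the second term equals $\tfrac12 (x^{\odot(k-1)})^\top A(\psi)x^{\odot(k-1)}$; the splitting of each $(k-1)$-tuple into halves $(S,T)$ matches how $A$ is indexed. Apply \Cref{lemma:os19corollary} with $t=k-1$ and $A=A(\psi)$. The $\delta$-spectral certificate condition gives $\rho(D^{-1}A)\leq \|A\|/\min_S\|A_S\|_1\leq n^{-\delta}$. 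We then choose $\eta=n^{-\delta/2}$, or any $\eta=\gamma^2/C$, with $\eta>\rho(D^{-1}A)$. Because $\rho(D^{-1}A)\le n^{-\delta}$, the ratio $\log(4\eta^2\pi_*/25)/\log(5\rho/2\eta)$ is $O(k/\delta)$ (since $\pi_*\ge n^{-O(k)}$), so $\ell,p$ and hence $R$ are $O_{\gamma,\delta,k}(1)$. The lemma then yields $|\pE x^{\odot}Ax^{\odot}|\leq \eta\sum|A_{S,T}|$. Bounding $\sum|A_{S,T}|\le 2\sum_i|\cH_i|^2\le 2(1+\gamma)^2|\cH|^2/n$ via the degree hypothesis, this contribution is at most $O(\eta)|\cH|^2/n$.

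Combining these pieces gives
\[
|\pE\psi|^2\le n\Big(|\cH| + O(\eta)\tfrac{|\cH|^2}{n} + \tfrac1s(1+\gamma)^2\tfrac{|\cH|^2}{n}\Big)\mper
\]
Using $|\cH|\ge 36n/\gamma^2$, the first term contributes at most $\gamma^2|\cH|^2/36$. Taking $s=\lceil 36(1+\gamma)^2/\gamma^2\rceil$ and $\eta\le\gamma^2/36$ makes the total at most $\gamma^2|\cH|^2\cdot O(1/12)$, so $|\pE\psi|\le 2\gamma|\cH|$. The locality needed is $\max(R(A), (k-1)(s+1)+1)=O_{\gamma,\delta,k}(1)$.

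The main obstacle is the first step. \Cref{thm:approxcs} as stated controls a single product $fg$ using the $L_1$ norm of the whole of $g$, which would cost $|\cH|^2/s$ multiplied by $n$ and be far too lossy. We therefore must re-run the sampling proof for the inner-product form $\sum_i x_i\psi_i'$, sampling $i$ first so that the sampling error scales with $\sum_i|\cH_i|^2$ rather than $|\cH|^2$. This is where the degree bound $|\cH_i|\le(1+\gamma)|\cH|/n$ is essential, and the variance bookkeeping must be checked carefully. A secondary check is that the hypothesis $\eta>\min(\pi_*^{-1/2},\rho)$ of \Cref{lemma:os19corollary} holds and gives bounded $\ell$.
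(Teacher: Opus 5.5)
Your proof is correct and follows the paper's architecture: decompose $\psi$ by last coordinate, use approximate Cauchy--Schwarz to reduce to $\sum_i\pE\psi_i^2$ plus a sampling error $\frac1s\sum_i|\cH_i|^2$, expand the square as $|\cH|+\tfrac12\pE (x^{\odot(k-1)})^\top A(\psi)x^{\odot(k-1)}$, and bound the cross term with \Cref{lemma:os19corollary}. The one real difference is the first step, and the ``main obstacle'' you identify there is not actually an obstacle.

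The paper keeps $x_i$ inside $\psi_i=\sum_{U\in\cH_i}b_Ux^U$ and applies \Cref{thm:approxcs} as a black box to each $\psi_i$ separately, with $g\equiv 1$ and $t=1$. The second factor is then exactly $\sqrt{0\cdot\pE[1]+1}=1$, so $|\pE\psi_i|\le\sqrt{(1-\frac1s)\pE\psi_i^2+\frac1s\|\psi_i\|_1^2}$. The scalar inequality $\sum_i\sqrt{a_i}\le\sqrt{n\sum_i a_i}$ then gives your displayed bound verbatim, since $\psi_i^2=(\psi_i')^2$ on the cube. Your vector-valued re-run of the sampling proof is also valid: the factor $\sqrt{\sum_i\pE x_i^2}=\sqrt n$ comes out however you weight the choice of $i$, so sampling $i$ amounts to linearity. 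But it only re-derives what the stated theorem plus scalar Cauchy--Schwarz already gives. The loss you worried about comes from applying \Cref{thm:approxcs} to one global product; applying it term by term avoids it.

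One slip to remove: $\eta=n^{-\delta/2}$ is not an admissible choice. Since $\ell\ge 1$, you get $p\ge(5/(2\eta))^2=\tfrac{25}{4}n^{\delta}$, so $R=tp\ell+t$ grows polynomially in $n$. The parameter $\eta$ must be a constant depending only on $\gamma$, as in your final computation ($\eta\le\gamma^2/36$ with $s=\lceil 36(1+\gamma)^2/\gamma^2\rceil$). The paper takes $\eta=\gamma^2/2$ and $s\ge 9(1+\gamma)^2/\gamma^2$, which closes exactly at $2\gamma|\cH|$; your constants leave ample slack.
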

\begin{proof}
    Let $\psi_i=\sum_{U\in \cH_i}b_U x^U$, $R\coloneqq \max\{R(A(\psi)), \frac{10k(1+\gamma)^2}{\gamma^2}\}=O_{\gamma, \delta, k}(1)$, where $R(A(\psi))$ is the quantity from \Cref{lemma:os19corollary} when $\eta=\gamma^2/2$, and $\pE$ be any $R$-local pseudodistribution. A $\delta$-spectral certificate $A(\psi)$ satisfies $\rho(D^{-1}(\psi)A(\psi))\leq n^{-\delta}$ (recall that $\rho(\cdot)$ is the spectral radius of a matrix) and therefore by \Cref{lemma:os19corollary},
    \begin{align*}
        \pE\sum_i \psi_i^2(x) 
        &= \pE\sum_{U\in \cH}b_U^2 + \frac12\pE x^{\odot {(k-1)}} A(\psi)x^{\odot (k-1)}
        \\&\leq |\cH| + \frac{\gamma^2}{2}\sum_i|\cH_i|(|\cH_i|-1) \leq |\cH|+\frac{\gamma^2}{2}\frac{|\cH|^2}{n}(1+\gamma)^2\mcom\end{align*}
    so by \Cref{thm:approxcs}, we have for $s\geq \frac{9(1+\gamma)^2}{\gamma^2}$,
    \begin{align*}
        \left|\pE \psi(x)\right|=\left|\pE \sum_i \psi_i(x)\right|
        &\leq \sum_i \sqrt{\left(1-\frac1s\right)\pE\psi_i^2+\frac1s\Norm{\psi_i}_1^2}
        \\&\leq \sqrt{n\left(\left(1-\frac1s\right)\sum_i \pE \psi_i^2+\frac1s\sum_i|\cH_i|^2\right)}
        \\&\leq \sqrt{n|\cH|}+\gamma(1+\gamma)|\cH| + \frac{|\cH|}{\sqrt{s}}(1+\gamma)\leq 2\gamma|\cH|\mper
    \end{align*}
\end{proof}
\begin{proof}[Proof of \Cref{theorem:randomxor}]
    For random $k$-XOR, Chernoff bounds imply that $|\cH_i|\leq \frac{|\cH|}{n}(1+\gamma)$ for all $i$ with high probability. Then the result follows directly from \Cref{lem:randomcertificateexists} and \Cref{theorem:randomrefutation}.
\end{proof}

\section{Refuting Semirandom $k$-XOR}
In this section, we show that Sherali-Adams can refute semirandom instances of $k$-XOR.
\begin{definition}
    Let $k$ be an integer and $\cH\subseteq [n]^k$. We say that $\psi$ is a semirandom $k$-XOR instance with clause set $\cH$ if for $U\in \cH$, we sample an independent Rademacher ($\pm 1$) sign $b_U$ and write $\psi(x)=\sum_{U\in \cH}b_U\cdot x^U$.
\end{definition}
\begin{theorem}\label{theorem:semirandomxor}
    Let $k$ be an integer. $\psi(x)=\sum_{U\in \cH} b_U\cdot x^U$ be a semirandom $k$-XOR instance with clause set $\cH$ such that $|\cH|=n^{k/2+\delta}$ total  constraints. Suppose we have $5n^{-\delta/36}\leq\gamma\leq \frac12$ and $16384k^4\log^3 n\leq \gamma^{1/18}n^{\delta/18}$. Then with high probability, there exists some $R=O_{\gamma, \delta, k}(1)$ such that every $R$-local pseudodistribution $\cD$ satisfies $\pE_\cD\psi(x)\leq 2\gamma |\cH|$.
\end{theorem}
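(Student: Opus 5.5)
The plan is to follow the template of \Cref{theorem:randomrefutation}. We split $\psi$ into pieces, bound $\pE$ of the square of each piece via \Cref{lemma:os19corollary} applied to a Kikuchi/AOW-style certificate matrix, and then pay for the passage from squares to first moments with \Cref{thm:approxcs}.

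The difference from the random case is that $\cH$ is worst-case. Degrees $|\cH_i|$ can be very unbalanced, and the "cross" matrix $A(\psi)$ need not have large minimum row norm, so the certificate condition may fail. We therefore first run a regularization step in the style of \cite{AbascalGK21}. We repeatedly peel off variables $i$ (or, for general $k$, partial tuples $S$) whose degree exceeds $\tau|\cH|/n$ for a threshold $\tau=n^{\Theta(\delta)}$. For each heavy structure we substitute a fresh variable for the product $x^S$ appearing in many clauses, which yields lower-arity semirandom instances (the "LIFT" operations anticipated by the macros). The clauses are thereby partitioned into $O_k(1)$ families, each of which is either (a) small, at most $\gamma|\cH|/O(k)$ clauses, and charged trivially; or (b) a bipartite/arity-reduced instance that is approximately regular, i.e.\ every row of the relevant certificate matrix has $\ell_1$ norm within a constant factor of average. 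Since $\pE$ is linear, $\pE\psi=\sum_j \pE\psi^{(j)}$, so it suffices to bound each family by $O(\gamma/k)|\cH|$.

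For each regular family $\psi^{(j)}$ we write $\psi^{(j)}=\sum_i\psi^{(j)}_i$, grouping on a distinguished coordinate or block. We then expand $\pE\sum_i(\psi^{(j)}_i)^2=|\cH^{(j)}|+\frac12\pE (x^{\odot t})^\top A^{(j)}x^{\odot t}$, with $A^{(j)}$ the cross matrix (even $k$ is handled via its natural $t=k/2$ flattening, treated by one more Cauchy--Schwarz split so both parities are uniform). The central probabilistic step is to show that, with high probability over the Rademacher signs only, $\rho(D^{-1}A^{(j)})\le n^{-\Omega(\delta)}$. Regularity makes $D$ comparable to a scalar, so this reduces to a spectral norm bound $\|A^{(j)}\|\lesssim n^{-\Omega(\delta)}\min\|A^{(j)}_S\|_1$. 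I would prove that bound with the matrix Khintchine/Rademacher chaos (decoupling plus trace moment) argument of \cite{AbascalGK21}, where the $\log^3 n$ loss explains the hypothesis $16384k^4\log^3n\le\gamma^{1/18}n^{\delta/18}$. Then \Cref{lemma:os19corollary} with $\eta=\gamma^2/2$ gives a constant round count $R(A^{(j)})$, and \Cref{thm:approxcs} with $s=O((1+\gamma)^2/\gamma^2)$, combined with Cauchy--Schwarz over $i$ exactly as in \Cref{theorem:randomrefutation}, gives $|\pE\psi^{(j)}|\le O(\gamma)|\cH^{(j)}|+\sqrt{n|\cH^{(j)}|}$. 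The hypothesis $\gamma\ge 5n^{-\delta/36}$ ensures the $\sqrt{n|\cH|}$ and lost-fraction terms are absorbed. Taking $R$ as the maximum over the $O_k(1)$ families gives $O_{\gamma,\delta,k}(1)$.

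The main obstacle is the regularization/partition step. It must be done so that \emph{every} surviving family is regular enough for the minimum-row-norm condition in the certificate definition, since \Cref{lemma:os19corollary} uses $\min_S\|A_S\|_1$ rather than an average. It must also keep the signs of each family independent Rademachers, so that the chaos bound still applies after lifting, and it must lose only a $\gamma$ fraction of clauses. My first attempt would be a threshold-peeling argument with geometrically spaced degree buckets: partition the rows of each $A^{(j)}$ into $O(\log n/\log(1/\gamma))$ dyadic degree classes, apply the argument to each bucket's principal submatrix separately, and discard the buckets whose total weight is small. The extra $\polylog$ and $\gamma$ factors from this bucketing are what I expect to force the $1/18$ and $1/36$ exponents in the hypotheses.
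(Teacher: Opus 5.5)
\textbf{Verdict: genuine gap.} Your outline (certificate matrix, then \Cref{lemma:os19corollary}, then \Cref{thm:approxcs}) is right for the \emph{light} part of $\cH$. But the step you defer as the ``main obstacle'' is where the proof actually lives, and the fixes you sketch fail.

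First, you reduce $\rho(D^{-1}A)$ to a bound $\Norm{A}\le n^{-\Omega(\delta)}\min_S\Norm{A_S}_1$ via approximate regularity, and a worst-case $\cH$ does not give regularity. The dyadic-bucket repair does not work either. The form $x^{\top}Ax=\sum_{a,b}x_{B_a}^{\top}A[B_a,B_b]x_{B_b}$ contains cross-bucket blocks, which can carry most of the mass and are neither certified nor discarded. Also, a row's degree inside its own bucket's principal submatrix can be tiny.

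The paper needs no regularity. It deletes rows \emph{and} columns of low degree and pays for them trivially, since the pseudo-expected quadratic form of the deleted part is at most its $\ell_1$ mass; this is why both certificate definitions allow row removal. It then bounds the degree-normalized matrix ($D^{-1/2}AD^{-1/2}$, or $M^{-1/2}AM^{-1/2}$ with $M$ the support degrees and $M\le 3D$) directly by a trace-moment count. That count only needs a minimum-degree lower bound, an entrywise bound, and a bound on the number of choices at steps that reuse an earlier clause. You also never address that $D$ is the row $\ell_1$ norm of the \emph{signed} matrix, which can collapse through cancellation. \Cref{lem:canprune} uses a read-$n^{\delta/2}$ Chernoff bound to show each surviving row keeps a third of its support.

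Second, a peeling threshold $\tau=n^{\Theta(\delta)}$ is incompatible with the Cauchy--Schwarz step. The quantity $\sum_i\Norm{\psi_i}_1^2$ controls both the additive $\frac1s\Norm{\psi_i}_1^2$ terms and the mass $\sum_{S,T}|A_{S,T}|$ multiplying $\eta$. So after the outer Cauchy--Schwarz over $\le n$ groups you need $\sum_i\Norm{\psi_i}_1^2=O(|\cH|^2/n)$. With groups of size up to $\tau|\cH|/n$, this sum can reach $\tau|\cH|^2/n$. That gives errors $|\cH|\sqrt{\tau/s}$ and $|\cH|\sqrt{\eta\tau}$, forcing $s$ and $1/\eta$, and hence $R$, to be polynomial in $n$. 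The paper avoids peeling here: it chops each $\cH_i$ into chunks of size at most $|\cH|/n$ (a regular partition with at most $2n$ parts) and squares only within chunks.

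Third, you never say how the heavy pieces are refuted in Sherali--Adams. Lifting them and rerunning the same template brings back the regularity problem. \cite{AbascalGK21} handles those pieces with an $\infty\to1$-norm/Grothendieck certificate, which Sherali--Adams is not known to capture. The missing idea is an induced $2$-XOR instance with few variables.

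For odd $k$, the paper moves all clauses through a heavy $(S,i)$ (more than $n^{\delta/2}$ clauses $STi$ or $TSi$) into $\cH'$. It then moves all clauses through a heavy $(S,T)$ (more than $n^{\delta/2}$ clauses $STi$) into $\cH''$. Each of these parts either has size at most $\frac{\gamma}{2}|\cH|$ or yields an induced $2$-XOR instance with at most $n^{-\delta/3}$ times as many variables as clauses. Such an instance is refuted directly by \Cref{lem:inducedspecnorm} (prune vertices of degree $\le n^{\delta/2}$, then the normalized trace count) and \Cref{lem:refuteinduced}, with no Cauchy--Schwarz.

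Even $k$ is handled the same way, with variables indexed by $[n]^{k/2}$, so your extra Cauchy--Schwarz split there is unnecessary. Only the remaining light clauses $\cH'''$ go through the cross-matrix route (\Cref{lem:crossspecnorm}, \Cref{theorem:semirandomrefutation}). There, lightness supplies exactly the entry bound and the $n^{\delta/2}$ bound on reuse choices that the trace count needs.
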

For semirandom odd $k$-XOR, the $\cH_i$ can be much larger than $\frac{|\cH|}{n}$, in which case we need to partition large $\cH_i$ into smaller sets to ensure the matrix $A(\psi)$ does not have too many nonzero entries. In addition, we can no longer guarantee that $A(\psi)$ is a spectral certificate with high probability. We thus define a notion of a regular partition and a refined notion of a spectral certificate that allows us to delete a negligible fraction of rows and columns of the matrix.
\begin{definition}
    For all $\cH=\cH_1\sqcup\cdots\sqcup\cH_n\subseteq [n]^k$,
    we call a partition $P\coloneqq\bigsqcup_i P_i\coloneqq\bigsqcup_i\bigsqcup_{j\in [p_i]}\chij$ of the $\cH_i$s \emph{regular} if it has a total of at most $2n$ parts and $|\chij|\leq \frac{|\cH|}{n}$ for all $i,j$.
\end{definition}
\begin{definition}[$(\gamma,\delta)$-Spectral Certificates]
Let $k$ be an odd integer and $\psi(x)=\sum_{U\in \cH} b_U\cdot x^U$ be a $k$-XOR instance with clause set $\cH$. Suppose we have a partition $P_i\coloneqq\sqcup_{j\in [p_i]}\chij$ of each $\cH_i$, and let $P=(P_1, \cdots, P_n)$. 
    Define $A'_P(\psi)$ to be a $n^{k-1}\times n^{k-1}$ matrix defined as follows: For all $(S_1,T_1)\neq (S_2,T_2)\in [n]^{(k-1)/2}\times [n]^{(k-1)/2}$, we have
    \begin{align*}
        A'_P(\psi)_{(S_1,S_2),(T_1,T_2)}\coloneqq \sum_{i\in [n]}\sum_{j\in [p_i]} b_{S_1T_1i}b_{S_2T_2i}\cdot \1\{S_1T_1i,S_2T_2i\in \chij\}+b_{T_1S_1i}b_{T_2S_2i}\cdot \1\{T_1S_1i,T_2S_2i\in \chij\}\mcom
    \end{align*}
    and $A'_P(\psi)_{(S_1,S_1),(S_2,S_2)}=0$. Let $A(\psi)$ be a principal submatrix of $A'_P(\psi)$, and define $D(\psi)_{S_1,S_2}\coloneqq\Norm{A_{S_1,S_2}}_1$ and write $D(\psi)\coloneqq \diag\left(\left\{D(\psi)_{S_1,S_2}\right\}\right)$. We call $A(\psi)$ a \emph{$(\gamma,\delta)$-spectral certificate of $\psi$ with respect to $P$} if the following hold: 
    \begin{itemize}
        \item $A(\psi)$ can be formed by removing a collection of rows of $A'_P(\psi)$ and their corresponding columns whose $\ell_1$ norms sum to at most $\frac{\gamma ^2}{2}\sum_{i\in [n]}\sum_{j\in [p_i]}|\chij|(|\chij|-1)$,
        %\item $\Norm{A(\psi)}\leq n^{-\delta}\left(\min_{S_1,S_2}\Norm{A_{S_1,S_2}}_1\right)$.
        \item $\rho(D^{-1}(\psi)A(\psi))\leq n^{-\delta/2}$, where $\rho(\cdot)$ denotes the spectral radius of a matrix.
    \end{itemize}
Observe that (abusing notation and treating $A(\psi)$ as a $n^{k-1}\times n^{k-1}$ matrix where the removed rows are replaced by zeros), we have
    \begin{align*}
        x^{\odot (k-1)}A'_P(\psi)x^{\odot (k-1)}=2
        \sum_i\sum_{j\in [p_i]}\sum_{U\neq U'\in \chij}b_U b_{U'}x^Ux^{U'}\mcom
    \end{align*} 
    and 
        \begin{align*}
        x^{\odot (k-1)}\left(A'_P(\psi)-A(\psi)\right)x^{\odot (k-1)}\leq\frac{\gamma ^2}{2}\sum_{S_1,S_2}|A'_P(\psi)_{S_1,S_2}|\mper
    \end{align*}
    \end{definition}
    \newcommand{\psio}{\psi^\1}
\begin{definition}\label{def:psio}
    Let $k$ be an integer and $\psi(x)=\sum_{U\in \cH}b_U\cdot x^U$ be a $k$-XOR instance with clause set $\cH$. Then we define $\psio(x)\coloneqq \sum_{U\in \cH}x^U$ to be the $k$-XOR instance with the same constraints but all ones on the RHS.
\end{definition}
\begin{remark}\label{remark:allonescomparison}
    Note that any entry $A'_P(\psi)_{(S_1,S_2),(T_1,T_2)}$ is the sum of $A'_P(\psio)_{(S_1,S_2),(T_1,T_2)}$ independent Rademacher random variables. In particular, if $A'_P(\psio)_{(S_1,S_2),(T_1,T_2)}\neq 0$, then $A'_P(\psi)_{(S_1,S_2),(T_1,T_2)}=0$ with probability at most $\frac12$.
\end{remark}
We will show that the existence of a $(\gamma,\delta)$-spectral certificate of $\psi$ implies that Sherali-Adams admits a refutation of $\psi$. 
\begin{lemma}\label{theorem:semirandomrefutation}
    Let $k$ be an odd integer and $\psi(x)=\sum_{U\in \cH}b_U\cdot x^U$ be a semirandom $k$-XOR instance. Suppose that there exists a  $(\gamma,\delta)$-spectral certificate for $\psi$ with respect to a regular partition $P$ of all $\cH_i$. Then if $\gamma\leq \frac12$ and $|\cH|\geq\frac{4n}{\gamma^2}$, then $R=O_{\gamma, \delta, k}(1)$-rounds of Sherali-Adams can certify that $\psi(x)\leq 2\gamma|\cH|$; in other words, for all $R$-local pseudodistributions $\cD$, $|\pE_\cD\psi(x)|\leq 2\gamma |\cH|$. 
\end{lemma}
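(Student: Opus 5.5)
The argument follows the proof of \Cref{theorem:randomrefutation}, with the parts $\chij$ of the regular partition playing the role of the $\cH_i$. Write
\[
\psi=\sum_i\sum_{j\in[p_i]}\psi_{i,j},\qquad \psi_{i,j}\coloneqq\sum_{U\in\chij}b_U x^U .
\]
Since $P$ is regular, there are at most $2n$ parts and each part has size at most $|\cH|/n$. Each $\psi_{i,j}$ has degree $k$ and coefficient norm $\Norm{\psi_{i,j}}_1=|\chij|$.

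\textbf{Step 1: bound $\pE\sum_{i,j}\psi_{i,j}^2$.} Expanding the squares gives
\[
\pE\sum_{i,j}\psi_{i,j}^2=|\cH|+\tfrac12\,\pE\, x^{\odot(k-1)}A'_P(\psi)\,x^{\odot(k-1)} .
\]
Split $A'_P(\psi)=A(\psi)+\bigl(A'_P(\psi)-A(\psi)\bigr)$.

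For the deleted part, use the pointwise bound $x^{\odot(k-1)}(A'_P-A)x^{\odot(k-1)}\le \sum_{S,T}|(A'_P-A)_{S,T}|$. Each deleted entry appears in a removed row or column, so this sum is at most twice the total $\ell_1$ mass of the removed rows, i.e. at most $\gamma^2\sum_{i,j}|\chij|(|\chij|-1)$. The inequality holds pointwise, but the polynomial depends on more than $R$ variables, so it cannot be pushed through $\pE$ directly. Instead, write the difference as a sum of terms $c\,x^{S}x^{T}$, each depending on $\le 2(k-1)\le R$ variables. Locality gives $\pE[c\,x^Sx^T]\le |c|$ termwise, and summing yields the same bound.

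For the certificate part, $\rho(D^{-1}A)\le n^{-\delta/2}$, so \Cref{lemma:os19corollary} with $\eta=\gamma^2/2$ gives
\[
\pE\, x^{\odot(k-1)}A\,x^{\odot(k-1)}\le \tfrac{\gamma^2}{2}\sum|A_{S,T}|\le \tfrac{\gamma^2}{2}\cdot 2\sum_{i,j}|\chij|^2 ,
\]
once $R\ge R(A(\psi))$. Combining, and using $\sum_{i,j}|\chij|^2\le \max|\chij|\cdot|\cH|\le |\cH|^2/n$, we get
\[
\pE\sum_{i,j}\psi_{i,j}^2\le |\cH|+O(\gamma^2)\,|\cH|^2/n .
\]

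\textbf{Step 2: approximate Cauchy--Schwarz.} Apply \Cref{thm:approxcs} to each $\psi_{i,j}$ with $g=1$, $t=1$ and $s=O(1/\gamma^2)$; this needs $R\ge k(s+1)$. Then apply ordinary Cauchy--Schwarz over the at most $2n$ parts:
\[
|\pE\psi|\le\sum_{i,j}\sqrt{\pE\psi_{i,j}^2+\tfrac1s|\chij|^2}\le\sqrt{2n\Bigl(\textstyle\sum_{i,j}\pE\psi_{i,j}^2+\tfrac1s\sum_{i,j}|\chij|^2\Bigr)} .
\]
Substituting Step 1 bounds this by
\[
\sqrt{2n|\cH|}+O(\gamma)|\cH|+\sqrt{2/s}\,|\cH| .
\]
The first term is at most $\gamma|\cH|/\sqrt2$ by the hypothesis $|\cH|\ge 4n/\gamma^2$.

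\textbf{Main obstacle.} The hard part is the constant bookkeeping: the factor $2$ from the $2n$ parts and the factor $2$ from removed rows versus columns must still leave a total $\le 2\gamma$. If they do not, I would shrink $\eta$ (e.g. to $\gamma^2/8$) and enlarge $s$, both of which only change $R=O_{\gamma,\delta,k}(1)$. I also need to check that the hypotheses of \Cref{lemma:os19corollary} hold: $A(\psi)$ has integer entries and $\eta>\rho(D^{-1}A)$, and the latter holds for large $n$ since $\rho(D^{-1}A)\le n^{-\delta/2}$.
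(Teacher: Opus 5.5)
Your route is the paper's. You split $\psi$ over the parts $\chij$ of the regular partition, and bound $\pE\sum_{i,j}\psi_{ij}^2$ by $|\cH|$ plus the certificate term (\Cref{lemma:os19corollary} with $\eta=\gamma^2/2$) plus the deleted-row term. You then apply \Cref{thm:approxcs} to each part with $g=1$, $t=1$, and use Cauchy--Schwarz over the at most $2n$ parts. Your termwise justification of the deleted-row bound is correct: each $c\,x^Sx^T$ is $2(k-1)$-local, so $\pE[c\,x^Sx^T]\le |c|$. The paper leaves this step implicit.

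One step fails as written. Part by part, you replace $\sqrt{(1-\frac1s)\pE\psi_{ij}^2+\frac1s|\chij|^2}$ by $\sqrt{\pE\psi_{ij}^2+\frac1s|\chij|^2}$, which needs $\pE\psi_{ij}^2\ge 0$. A Sherali--Adams pseudo-expectation does not guarantee this, because $\psi_{ij}^2$ depends on far more than $R$ variables. This is exactly the phenomenon of \Cref{sec:sherali-adams-lower-bound}: pushing the pairwise correlations there slightly below $-\frac{1}{n-1}$ gives $\pE p^2<0$.

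The repair is the order the paper uses. Keep the factor $(1-\frac1s)$. Each quantity $(1-\frac1s)\pE\psi_{ij}^2+\frac1s\Norm{\psi_{ij}}_1^2$ is nonnegative, since it equals the sampling average of $\pE F^2$ for a $ks$-local $F$. So Cauchy--Schwarz over the parts is legitimate. Only at the end use $(1-\frac1s)\sum_{i,j}\pE\psi_{ij}^2\le |\cH|+\gamma^2|\cH|^2/n$, which holds whatever the sign of the sum because the right-hand side is positive.

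Separately, the constants are not a real obstacle if you keep everything under one square root instead of splitting it. You get $\sqrt{2n|\cH|+2\gamma^2|\cH|^2+2|\cH|^2/s}\le\sqrt3\,\gamma|\cH|$ already with $\eta=\gamma^2/2$ and $s\ge 4/\gamma^2$. Your fallback of shrinking $\eta$ would not reduce the deleted-row contribution anyway, since that term is fixed by the certificate's definition.
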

\begin{proof}[Proof of \Cref{theorem:semirandomrefutation}]
    Let $\psi_{ij}=\sum_{U\in \chij}b_U x^U$ and $A(\psi)$ be a $(\gamma, \delta)$-spectral certificate of $\psi$. Let  $R\coloneqq \max\{R(A(\psi)), \frac{5k}{\gamma^2}\}=O_{\gamma, \delta, k}(1)$, where $R(A(\psi))$ is the quantity from \Cref{lemma:os19corollary} when $\eta=\frac{\gamma^2}{2}$, and $\pE$ be an $R$-local pseudodistribution. Then $\rho(D^{-1}(\psi)A(\psi))\leq n^{-\delta/2}$ and thus by \Cref{lemma:os19corollary},
    \begin{align*}
        \pE\sum_{i}\sum_{j\in [p_i]} \psi_{ij}^2(x) 
        &= \pE\sum_{U\in \cH}b_U^2 + \frac12\pE x^{\odot {(k-1)}} A'(\psi)x^{\odot (k-1)}
        \\&\leq |\cH| + \frac12\pE x^{\odot {(k-1)}} A(\psi)x^{\odot (k-1)} + \frac12\pE x^{\odot {(k-1)}} (A'(\psi)-A(\psi))x^{\odot (k-1)}
        \\&\leq |\cH| + \left(\frac{\gamma^2}{2}+\frac{\gamma^2}{4}\right)\sum_i\sum_{j\in [p_i]}|\chij|(|\chij|-1) 
        \\&\leq |\cH|+\frac{3\gamma^2}{4}\frac{|\cH|^2}{n}\mcom\end{align*}
    where the last line uses the fact that our partition is regular and therefore each $\chij$ has size at most $\frac{|\cH|}{n}$. Thus by \Cref{thm:approxcs}, since $P$ has at most $2n$ parts, we have, for $s\geq \frac{4}{\gamma^2}$,
    \begin{align*}
        \left|\pE\psi\right|=\left|\pE \sum_i\sum_{j\in [p_i]} \psi_{ij}(x)\right|
        &\leq \sum_i \sum_{j\in [p_i]} \sqrt{\left(1-\frac1s\right)\pE\psi_{ij}^2+\frac1s\Norm{\psi_{ij}}_1^2}
        \\&\leq \sqrt{2n\left(\left(1-\frac1s\right)\sum_i\left(\sum_{j\in [p_i]}  \pE \psi_{ij}^2\right)+\sum_i\frac{|\cH_i|^2}{sn}\right)}
        \\&\leq \sqrt{2n|\cH|+3\gamma^2|\cH|^2+\frac{2|\cH|^2}{s}}
        \leq 2\gamma|\cH|\mper
    \end{align*}
\end{proof}
Finally, we note that given sufficient regularity conditions on $\cH$, there exists a spectral certificate of $\psi$ with high probability.
\begin{restatable}{lemma}{crossspecnorm}
\label{lem:crossspecnorm}
    Let $k$ be odd and $\psi(x)=\sum_{U\in \cH}b_U\cdot x^U$ be a semirandom $k$-XOR instance with constraint set $\cH\subseteq [n]^k$ such that $|\cH|=n^{k/2+\delta}$ and let $P$ be a regular partition of the $\cH_i$s. Suppose $\cH$ has the following properties: 
    \begin{itemize}
        \item For all $S\in [n]^{(k-1)/2}$, $i\in [n]$, there are at most $n^{\delta/2}$ total clauses of the form $STi$ or $TSi$ in $\cH$.
        \item For all $S_1T_1\in [n]^{(k-1)/2}\times [n]^{(k-1)/2}$, there are at most $n^{\delta/2}$ total clauses of the form $S_1T_1i\in \cH$.
    \end{itemize}
    Suppose that $\gamma\geq n^{-\delta/4}\sqrt{60}$ and $16384k^4\log^3 n\leq n^{\delta/18}$. Then with high probability, there exists a $(\gamma,\delta/9)$-spectral certificate of $\psi$ with respect to $P$.\end{restatable}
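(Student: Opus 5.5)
We need a $(\gamma,\delta/9)$-spectral certificate: delete a few rows/columns of $A'_P(\psi)$ so that the remaining matrix $A$ satisfies $\rho(D^{-1}A)\le n^{-\delta/18}$. Here $D$ is the degree matrix of $A$, and the deleted rows must have total $\ell_1$ mass at most $\frac{\gamma^2}{2}\sum_{i,j}|\chij|(|\chij|-1)$. The plan has three parts: bound the spectral norm of $A'_P(\psi)$ by a trace-moment argument, lower bound the degrees of most rows, and delete the rows that are too light.

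\textbf{Step 1 (spectral norm).} Write $A'_P(\psi)=\sum_{i,j}M_{ij}$. Each $M_{ij}$ is indexed by pairs $(S_1,S_2),(T_1,T_2)$ and has entries $b_{S_1T_1i}b_{S_2T_2i}$ (plus the symmetric term) whenever both clauses lie in $\chij$. This is a degree-2 Rademacher chaos in the signs $b_U$. I would bound $\E\tr\big((A'_P)^{2q}\big)$ with $q\approx\log n$ by the standard trace-moment method (as in \cite{AOW15} and the semirandom analysis of \cite{AbascalGK21}). A closed walk of length $2q$ contributes only if every sign $b_U$ appears an even number of times. The count of such walks is controlled by the two degree hypotheses, which say that each $(S,i)$ pair and each $(S_1,T_1)$ pair lie in at most $n^{\delta/2}$ clauses. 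Regularity gives $|\chij|\le |\cH|/n$, and there are at most $2n$ parts.

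The target bound is
\[
\Norm{A'_P(\psi)}\le 2^{O(k)}\log^{O(1)} n\cdot \max\Big(n^{\delta/2}\sqrt{|\cH|/n^{(k-1)/2}}\,,\ n^{\delta}\Big)
\]
with high probability. The hypothesis $16384k^4\log^3n\le n^{\delta/18}$ lets us absorb the polylog and $2^{O(k)}$ factors into $n^{\delta/18}$.

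\textbf{Step 2 (degrees).} The all-ones version $A'_P(\psio)$ has row $(S_1,S_2)$ with $\ell_1$ mass $m_{S_1S_2}=\sum_{i,j}\deg_{ij}(S_1)\deg_{ij}(S_2)$, up to symmetric terms. This mass sums to $\sum|\chij|(|\chij|-1)\approx |\cH|^2/n$, which is about $n^{k-1+2\delta}$ in total over $n^{k-1}$ rows, so the average row mass is about $n^{2\delta}$. By \Cref{remark:allonescomparison}, each nonzero entry of $A'_P(\psi)$ is a sum of independent signs, and it is nonzero with probability at least $1/2$. A Chernoff/Bernstein bound, using the $n^{\delta/2}$ caps to bound each entry's contribution, then shows $D(\psi)_{S_1S_2}\ge c\, m_{S_1S_2}$ simultaneously for all rows with $m_{S_1S_2}\ge \tau$, for a threshold $\tau=n^{\Theta(\delta)}$.

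\textbf{Step 3 (deletion and conclusion).} Delete every row whose degree is below $\tau'=n^{\delta}\cdot n^{\delta/9}$, roughly. By Markov and the cap on entries, the total mass of the deleted rows is at most about $\tau' n^{k-1}$. This is at most $\frac{\gamma^2}{2}\cdot |\cH|^2/n$ because $\gamma\ge\sqrt{60}\,n^{-\delta/4}$. Deleting rows and columns does not increase the norm, so the remaining matrix $A$ has $\Norm{A}\le\Norm{A'_P}$.

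Then bound $\rho(D^{-1}A)=\rho(D^{-1/2}AD^{-1/2})\le \Norm{A}/\min D$. This is at most $n^{-\delta/18}$ once Step 1's bound is below $\tau'\, n^{-\delta/18}$. The exponents need careful bookkeeping against the choice $\delta/9$, which is presumably why it is $\delta/9$.

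\textbf{Main obstacle.} Step 1, the trace-moment count, is the hard part. Semirandom $\cH$ gives no uniformity, so the walk-counting must rely only on the degree caps and the regular partition. Double-edge configurations such as $U=U'$ coincidences and shared $i$ also have to be handled; I would try the standard approach of expanding the trace and bounding walks by shape.
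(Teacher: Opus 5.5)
There is a genuine gap where you pass from a norm bound to the spectral radius. Steps 1 and 3 bound the unnormalized $\Norm{A'_P(\psi)}$ and then use $\rho(D^{-1}A)=\rho(D^{-1/2}AD^{-1/2})\le \Norm{A}/\min D$. In the semirandom setting this is too lossy. The two degree caps still allow very irregular row degrees, so $\Norm{A}$ is driven by the heaviest rows while $\min D$ is driven by the lightest ones.

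For instance, the caps permit $S_1\neq S_2$ with $n^{\delta/2}$ clauses $S_1Ti$ and $n^{\delta/2}$ clauses $S_2T'i$ for every $i\in[n]$, with the $T$'s spread so that no pair $S_1T$ violates the second cap. Suppose these clauses share a part for each $i$. Then row $(S_1,S_2)$ of $A'_P(\psio)$ has mass about $n^{1+\delta}$, and the signed row has squared $\ell_2$-norm about $n^{1+\delta}$. So $\Norm{A'_P(\psi)}$ is at least about $n^{(1+\delta)/2}$, which already contradicts your Step 1 target when $\delta<1/2$.

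This cannot be fixed by deleting more rows. Place a constant fraction of $\cH$ on about $n^{(k-2+\delta)/2}$ such heavy halves. These rows then carry a constant fraction of $\sum_{i,j}|\chij|(|\chij|-1)$, so most of them survive any deletion within the $\frac{\gamma^2}{2}$ budget. Meanwhile, rows coming from the remaining, uniformly spread clauses have degree about $n^{2\delta}$ and also carry a constant fraction of the mass. Hence $\Norm{A}/\min D$ is at least about $n^{(1-3\delta)/2}\gg 1$ for $\delta<1/3$, whatever threshold you choose.

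Separately, even granting Step 1, substituting $|\cH|=n^{k/2+\delta}$ turns your bound into $n^{1/4+\delta}$ up to polylogarithmic factors. This exceeds the $\tau' n^{-\delta/18}=n^{19\delta/18}$ that Step 3 needs, so the bookkeeping does not close either.

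The missing idea is to put the degree normalization inside the trace computation, which is what the paper does. After pruning with \Cref{lem:canprune}, it sets $\Gamma=M^{-1/2}AM^{-1/2}$, where $M$ counts the nonzero entries in each row of $A(\psio)$. It then bounds $\E\tr(\Gamma^\ell)$ over closed walks in which every step is weighted by $1/M$ at the current vertex. A step along a fresh pair of clauses costs at most $D(V)/M(V)\le 4kn^{\delta/4}\sqrt{\log n}$. A step reusing an earlier clause costs at most $2\ell n^{\delta/2}/M(V)\le 2\ell n^{-\delta/2}$, using the first cap and $M\ge n^{\delta}$. Both bounds hold regardless of how irregular the degrees are. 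Since at least half the steps are reuses, this gives $\Norm{\Gamma}\le n^{-\delta/9}$, and $M\le 3D$ transfers it to $\rho(D^{-1}A)\le 3\Norm{\Gamma}$.

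Relatedly, your Step 2 comparison $D(\psi)\ge c\,m$, with $m$ the $\ell_1$ mass of the all-ones row, is false in general. An entry of $A'_P(\psio)$ of size $a\le n^{\delta/2}$ becomes a sum of $a$ independent signs, of magnitude about $\sqrt{a}$, so $D$ can be as small as about $n^{-\delta/4}m$. The right comparison is with the number of nonzero entries. \Cref{lem:canprune} supplies this via the read-$n^{\delta/2}$ Chernoff bound, and its entry bound $4kn^{\delta/4}\sqrt{\log n}$ controls the ratio $D/M$.
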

    We defer the proof of this lemma to \Cref{sec:existence proofs}.

\subsection{Refuting induced $2$-XOR instances}
Unlike the random case, a semirandom XOR instance $\psi$ does not necessarily have a spectral certificate: when large groups of variables occur together in a large number of clauses, the spectral norm of $A'(\psi)$ can be large and making it sufficiently small requires pruning a large (possible almost all) of the rows. Instead, we can partition our set of clauses $\cH$ into two subparts $\cH'$ and $\cH''$, where $\cH'$ is the "approximately regular" part where 
we can construct the aforementioned spectral certificate, and for the second part $\cH''$, we instead construct a $2$-XOR instance that we can refute using Sherali-Adams. At a high level, this resembles the approach of \cite{AbascalGK21}, however, their method for refuting the $2$-XOR instance for $\cH''$ relies on bounding the value of the instance by an infinity-to-1 norm and using the fact that SOS can recognize Grothendieck's inequality. This is not known for Sherali-Adams, so we must handle this case differently. 

We define the notion of an induced $2$-XOR instance below. This is a $2$-XOR instance formed by splitting each clause $U$ of a $k$-XOR instance into two parts $S\sqcup T$, defining a variable $y_S$ and $y_T$ for each of the two parts, and replacing the $k$-XOR constraint $b_U x^U$ with the $2$-XOR constraint $b_U y_S y_T$. Formally, we have a bijective map between our clause $U\in \cH$ and the partition $(S,T)$, and using the transformation $y_S=\prod_{i\in S}x_i$, for all $x$, there exists $y$ such that $\phi(y)=\psi(x)$.
\begin{definition}
    Let $\psi(x)=\sum_{U\in \cH}b_U\cdot x^U$ be a semirandom $k$-XOR instance with constraint set $\cH\subseteq[n]^k$. We call a $2$-XOR instance $\phi$ an \emph{$f$-induced $2$-XOR instance of $\psi$} if each of its variables $y_S$ is indexed by some $S\subseteq [n]^{\leq k}$ and there exists a  set $\cB=\cB(\phi)$ of pairs of variables $(S,T)$ and a bijection $f:\cB\rightarrow \cH$ such that $ST$ is a permutation of $f(S,T)$ and $\phi$ can be written as 
    \begin{align*}
        \phi = \sum_{(S,T)\in \cB}b_{f(S, T)} y_S y_T \mcom
    \end{align*}
     Let $\dph(S)=|\{U\in \cH: f^{-1}(U)\ni S\}|$ denote the number of times the variable $y_S$ appears in the above expression for $\phi$, which we call the \emph{degree} of $S$ in $\phi$. Then we define the weighted adjacency matrix $A'(\phi)$ of $\phi$ with rows and columns indexed by variables of $\phi$: 
  \begin{align*}\left(A'(\phi)\right)_{S,T}=b_{f(S,T)}\1\{(S,T)\in \cB\}+b_{f(T,S)}\1\{(T,S)\in \cB\}\mper
 \end{align*}

Let $A(\phi)$ be a principal submatrix of $A'(\phi)$ and let $D(\phi)$ be the degree matrix of $A(\phi)$. We call $A(\phi)$ an \emph{induced $(\gamma,\delta)$-spectral certificate for $\psi$} if the following hold:
\begin{itemize}
    \item $A(\phi)$ can be formed by deleting rows of $A'(\phi)$ and their corresponding columns whose $\ell_1$ norms sum to at most $\gamma\sum_{S,T}|A'(\phi)_{S,T}|$,
    \item $\rho(D^{-1}(\phi)A(\phi))\leq n^{-\delta}$.
\end{itemize}

\end{definition}
A Sherali-Adams refutation of an induced $2$-XOR instance of $\psi$ implies a refutation of $\psi$ itself, which we prove in the appendix.
\begin{restatable}{lemma}{inducedtoorig}
    \label{lem:inducedtoorig}
    Let $\psi(x)=\sum_{U\in \cH}b_U\cdot x^U$ be a $k$-XOR instance with constraint set $\cH\subseteq [n]^k$ and $\phi$ be an $f$-induced $2$-XOR instance of $\psi$. Suppose that for all $R$-local pseudodistributions $\cD'$, we have $|\pE_{\cD'} \phi(y)|\leq 2\gamma |\cH|$. Then for all $kR$-local pseudodistributions $\cD$, we have $|\pE_\cD \psi(x)|\leq 2\gamma |\cH|$.
\end{restatable}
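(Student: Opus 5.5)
The plan is to prove the contrapositive by a pullback construction. Given any $kR$-local pseudo-distribution $\cD$ on $x\in\{-1,1\}^n$, we build an $R$-local pseudo-distribution $\cD'$ on the $y$-variables under which $\phi$ has the same pseudo-expectation that $\psi$ has under $\cD$. The natural choice is the pushforward of $\cD$ along the map $y_S=\prod_{i\in S}x_i$. Concretely, define the signed measure $\cD'(y)=\sum_{x:\,\Phi(x)=y}\cD(x)$, where $\Phi(x)=(x^S)_S$ ranges over all variables $S$ of $\phi$. Equivalently, for any function $h$ of $y$ we set $\pE_{\cD'}[h(y)]:=\pE_\cD[h(\Phi(x))]$.

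\textbf{Normalization and locality.} First, $\pE_{\cD'}[1]=\pE_\cD[1]=1$. Next, let $h\geq 0$ be a function of $y$ that depends on at most $R$ coordinates $y_{S_1},\dots,y_{S_R}$. Then $h\circ\Phi$ is pointwise nonnegative on $\{-1,1\}^n$. Since each $S_j$ has size at most $k$, $h\circ\Phi$ depends only on the variables in $S_1\cup\dots\cup S_R$, which number at most $kR$. By $kR$-locality of $\cD$, we get $\pE_\cD[h\circ\Phi]\geq 0$. So $\cD'$ is $R$-local.

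\textbf{Matching the objectives.} Using the bijection $f$ and the fact that $ST$ is a permutation of $f(S,T)$, we have $y_Sy_T\circ\Phi=x^Sx^T=x^{f(S,T)}$. The last equality needs $x^U$ to be read as the product over the tuple $U$ with multiplicity, and likewise for $x^S$; then $x^{S}x^T=\prod_{i\in ST}x_i$, which is invariant under permutation. Summing over $\cB$, the bijection gives $\phi\circ\Phi=\sum_{U\in\cH}b_Ux^U=\psi$ identically. Hence $\pE_\cD\psi=\pE_{\cD'}\phi$, and the hypothesis bounds this by $2\gamma|\cH|$ in absolute value.

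\textbf{Main obstacle.} The only delicate point is that $\cD'$ lives on $\{-1,1\}^{V}$, where $V$ is the variable set of $\phi$, and must be a genuine function there. The pushforward definition handles this automatically, since the fibre sums are finite. The locality check is where the factor $k$ enters, and there we have to be careful that variables indexed by tuples with repeated entries still involve at most $k$ distinct $x_i$'s.
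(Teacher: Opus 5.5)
Your proposal is correct and follows the paper's proof. The paper defines $\cD'$ by $\pE_{\cD'} y_{S_1}\cdots y_{S_r} := \pE_{\cD}\prod_{i\in S_1}x_i\cdots\prod_{i\in S_r}x_i$, which gives exactly the moments of your pushforward, and then uses $\pE_{\cD}\psi = \pE_{\cD'}\phi$; your explicit check that $\cD'$ is $R$-local (since $h\circ\Phi$ depends on at most $kR$ of the $x_i$), and your remark that $x^U$ must be read with multiplicity, fill in steps the paper leaves implicit.
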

Using this, we prove that Sherali-Adams can refute XOR instances with an induced spectral certificate.
\begin{lemma}\label{lem:refuteinduced}
    Let $\psi(x)=\sum_{U\in \cH}b_U\cdot x^U$ be a $k$-XOR instance and suppose there exists an induced $(\gamma, \delta)$-spectral certificate for $\psi$. Then $R=O_{\gamma,\delta,k}(1)$-rounds of Sherali-Adams can certify that $\psi(x)\leq 2\gamma |\cH|$; in other words, for all $R$-local pseudodistributions $\cD$, $|\pE_\cD \psi(x)|\leq 2\gamma|\cH|$.
\end{lemma}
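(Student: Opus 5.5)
By \Cref{lem:inducedtoorig}, it suffices to refute the induced $2$-XOR instance $\phi = \sum_{(S,T)\in\cB} b_{f(S,T)} y_S y_T$ in $O_{\gamma,\delta}(1)$-local Sherali-Adams over the $y$ variables. We show $|\pE_{\cD'}\phi(y)|\leq 2\gamma|\cH|$ for every $R'$-local pseudodistribution $\cD'$ on $y$, and take $R = kR'$. Note that $\phi(y) = \frac12 y^\top A'(\phi) y$, and that $\sum_{S,T}|A'(\phi)_{S,T}| = 2|\cH|$. This holds because each clause contributes exactly two unit entries, and $f$ is a bijection so there is no cancellation in the count.

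First split $A'(\phi) = A(\phi) + (A'(\phi)-A(\phi))$, where $A(\phi)$ is the given induced $(\gamma,\delta)$-spectral certificate.

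\textbf{The deleted part.} The difference $E = A'(\phi) - A(\phi)$ is supported on the deleted rows and their columns. Each entry $y_Sy_T$ is $2$-local with $|y_Sy_T|\le 1$, so pointwise $\pm y_Sy_T\le 1$ is a nonnegative $2$-local function and $|\pE y_Sy_T|\le 1$. Summing entrywise gives
\[
|\pE y^\top E y| \le \sum_{S,T}|E_{S,T}|.
\]
Each nonzero entry of $E$ lies in a deleted row or a deleted column. By symmetry, the right-hand side is therefore at most twice the total $\ell_1$ mass of the deleted rows, which is at most $2\gamma\cdot 2|\cH|$. Hence this part contributes at most $\frac12\cdot 4\gamma|\cH| = 2\gamma|\cH|$ to $\pE\phi$. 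That is too large by a constant factor, so I will rerun the argument with $\gamma/4$ in place of $\gamma$ where necessary. Concretely, I allocate $\gamma|\cH|$ of the error budget to $E$ and $\gamma|\cH|$ to $A(\phi)$. If the certificate's deletion parameter is exactly $\gamma$ rather than smaller, I will check whether the mass of $E$ is really only the deleted mass, not double it. Symmetric deletion removes row $S$ and column $S$, and column mass equals row mass, so the total is at most twice the deleted row mass, and I absorb this constant. This constant bookkeeping is the main fiddly point.

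\textbf{The certified part.} Apply \Cref{lemma:os19corollary} with $t=1$ (rows indexed by variables $y_S$) to the symmetric integer matrix $A(\phi)$, with $\eta$ a small constant multiple of $\gamma$, say $\eta=\gamma$. The hypothesis $\rho(D^{-1}(\phi)A(\phi))\le n^{-\delta}$ makes $\eta > \rho(D^{-1}A)$ hold for large $n$. Consequently
\[
\ell = \Big\lceil \tfrac14 \log(4\eta^2\pi_*/25)/\log(5\rho/2\eta)\Big\rceil .
\]
Here $\pi_*\ge 1/\mathrm{poly}(n)$ (the minimum row norm is at least $1$ if we discard zero rows, and the total mass is $2|\cH|\le 2n^k$), while $\log(1/\rho)\ge \delta\log n - O(1)$. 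So $\ell = O(k/\delta)$, $p = O_\gamma(1)^{\ell}$, and $R(A)=O_{\gamma,\delta,k}(1)$. The lemma then gives $|\pE y^\top A(\phi)y|\le \eta\sum|A(\phi)_{S,T}|\le 2\eta|\cH|$, so this term contributes at most $\eta|\cH|$.

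\textbf{Conclusion.} Adding the two contributions gives $|\pE\phi|\le 2\gamma|\cH|$ after choosing the constants. Lifting via \Cref{lem:inducedtoorig} then yields the claim with $R = k\cdot\max(R(A),2) = O_{\gamma,\delta,k}(1)$. Note that $\pE$ on $y$ is induced from $\cD$ on $x$ through the map $y_S=x^S$, which is exactly what \Cref{lem:inducedtoorig} handles.

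The step I expect to be the main obstacle is verifying that $\pi_*$ is only inverse polynomial, so that $\ell$ is constant. This requires that zero rows of $A(\phi)$ can be dropped harmlessly, which they can since they contribute nothing.
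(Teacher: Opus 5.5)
Your route is the paper's: write $\phi=\frac{1}{2}y^\top A'(\phi)y$, split off $E=A'(\phi)-A(\phi)$, and apply \Cref{lemma:os19corollary} to $A(\phi)$ with a constant $\eta$. Then bound $|\pE y^\top E y|$ entrywise by $\sum_{S,T}|E_{S,T}|$, and lift with \Cref{lem:inducedtoorig}. Your check that $\pi_*\geq 1/(2n^k)$, which gives $\ell=O(k/\delta)$ and hence $R(A)=O_{\gamma,\delta,k}(1)$, is more explicit than what the paper writes. One minor point: $\sum_{S,T}|A'(\phi)_{S,T}|$ can be strictly smaller than $2|\cH|$ when $(S,T)$ and $(T,S)$ both lie in $\cB$ with opposite signs. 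This does no harm, since you only ever use $\leq 2|\cH|$.

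\textbf{The gap: the constant bookkeeping for $E$.} Your proposed fix is not legitimate. The deletion budget $\gamma$ is fixed by the hypothesis, so you cannot ``rerun with $\gamma/4$'' or ``absorb'' a factor of $2$ there; the only parameter you control is $\eta$. Under your row-only reading of the definition, your own estimate $\sum_{S,T}|E_{S,T}|\leq 2\gamma\sum_{S,T}|A'(\phi)_{S,T}|\leq 4\gamma|\cH|$ lets the deleted part contribute up to $2\gamma|\cH|$ by itself. Adding the certified part's $\eta|\cH|$ then gives only $(2\gamma+\eta)|\cH|$ for any constant $\eta>0$, not $2\gamma|\cH|$.

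\textbf{How the paper closes it.} The paper reads the condition as it is phrased: the deleted rows \emph{together with their corresponding columns} have $\ell_1$ norms summing to at most $\gamma\sum_{S,T}|A'(\phi)_{S,T}|$. This gives $\sum_{S,T}|E_{S,T}|\leq 2\gamma|\cH|$, hence $\frac{1}{2}|\pE y^\top E y|\leq\gamma|\cH|$. Taking $\eta=\gamma$ gives $\frac{1}{2}|\pE y^\top A(\phi)y|\leq\gamma|\cH|$, so the total is exactly $2\gamma|\cH|$ with no rescaling anywhere. You should adopt that reading explicitly. If you keep the row-only reading, you must weaken the conclusion to $(2\gamma+\eta)|\cH|$; that costs only a constant factor in $\gamma$ where the lemma is applied later.
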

\begin{proof}
    Let $\phi$ be an $f$-induced $2$-XOR instance of $\psi$ and $A(\phi)$ be a corresponding induced $(\gamma, \delta)$-spectral certificate. Let $R\coloneqq k\cdot R(A(\phi))$, where $R(A(\phi))$ is the quantity from \Cref{lemma:os19corollary} where $\eta=\gamma$, and let $\cD$ be any $R=O_{\gamma,\delta,k}(1)$-local pseudodistribution. Since $\rho(D^{-1}(\phi)A(\phi))\leq n^{-\delta}$, by \Cref{lemma:os19corollary}, for any $R(A(\phi))$-local pseudodistribution $\cD'$, we have
    \begin{align*}
        |\pE_{\cD'}\phi(y)|
        =\frac{1}{2}\left|\pE y^\top A'(\phi)y\right|
        \leq\frac{1}{2}\left|\pE y^\top A(\phi)y\right| + \frac{1}{2}\left|\pE y^T(A'(\phi)-A(\phi))y\right|
        \leq \gamma|\cH|+\gamma|\cH|=2\gamma|\cH|\mcom
    \end{align*}
    so $\pE_\cD \psi(x)\leq 2\gamma|\cH|$ by \Cref{lem:inducedtoorig}. 
\end{proof}
Furthermore, an induced $2$-XOR instances without too many variables admits an induced spectral certificate for its parent instance, which we show in \Cref{sec:existence proofs}.
\begin{restatable}{lemma}{inducedspecnorm}
\label{lem:inducedspecnorm}
    Let $\psi(x)=\sum_{U\in \cH}b_U\cdot x^U$ be a semirandom $k$-XOR instance with constraint set $\cH\subseteq [n]^k$ and $\phi$ be an $f$-induced $2$-XOR instance of $\psi$ with variable set $\cS$ such that $|\cS|\leq n^{-\delta}|\cH|$. If $\gamma\geq n^{-\delta/2}$ and $4\sqrt{k\log n}\leq n^{\delta/20}$, then with high probability, there exists an induced $(\gamma, \frac{\delta}{5})$-spectral certificate for $\psi$ with respect to $(\cB,f)$.
    \end{restatable}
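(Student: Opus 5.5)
We need an induced $(\gamma,\delta/5)$-spectral certificate: a principal submatrix $A(\phi)$ of $A'(\phi)$, obtained by deleting rows and columns of total $\ell_1$ mass at most $\gamma\sum_{S,T}|A'(\phi)_{S,T}|=2\gamma|\cH|$, such that $\rho(D^{-1}A)\le n^{-\delta/5}$.

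The plan is to prune low-degree variables and then bound the spectral norm of the remaining random-sign matrix. Note that $A'(\phi)$ is a symmetric matrix over the variable set $\cS$. Its nonzero pattern is a fixed multigraph with $|\cH|$ edges, and its entries are sums of independent Rademacher signs $b_U$, one per edge. Let $d_{\rm avg}=2|\cH|/|\cS|\ge 2n^{\delta}$.

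\textbf{Step 1 (pruning).} Delete every variable $S$ with $\dph(S)\le \gamma d_{\rm avg}/2$. The total $\ell_1$ mass of the deleted rows is at most $|\cS|\cdot\gamma d_{\rm avg}/2=\gamma|\cH|$, which is within the allowed budget. Deleting these rows also lowers the degrees of surviving vertices, so degree is not yet controlled. I would iterate the pruning (a $k$-core style peeling) with threshold $\tau=\gamma d_{\rm avg}/4$. Each vertex peeled removes at most $\tau$ edges at the moment it is removed, so the total removed mass is $O(\gamma|\cH|)$. After peeling, every surviving vertex has degree at least $\tau\ge n^{\delta}\gamma/2\ge n^{\delta/2}/2$ in the pruned pattern.

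One wrinkle is that the definition's $D(\phi)$ is the degree matrix of the signed matrix $A$, i.e.\ its row $\ell_1$ norms. Signed entries can cancel, since a multi-edge $\{S,T\}$ with signs summing to $0$ contributes nothing. I would handle this either by working with the pattern degrees and using that for each pair the signed sum has absolute value at least $1$ with probability $\ge 1/2$ (as in \Cref{remark:allonescomparison}), or more simply by noting that each induced pair is a single clause, so multiplicities are small. A Chernoff bound then shows the row $\ell_1$ norms are at least a constant times the pattern degree, with high probability and uniformly over rows (degree $\ge n^{\delta/2}\gg\log n$).

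\textbf{Step 2 (spectral bound).} For the pruned matrix, $\rho(D^{-1}A)=\rho(D^{-1/2}AD^{-1/2})\le\|D^{-1/2}AD^{-1/2}\|$. The matrix $D_0^{-1/2}AD_0^{-1/2}$, with $D_0$ the pattern degrees, is a sum of independent symmetric random matrices $b_U D_0^{-1/2}(e_Se_T^\top+e_Te_S^\top)D_0^{-1/2}$. Each summand has norm at most $1/\sqrt{d_Sd_T}\le 1/\tau$. The variance proxy $\|\sum_U(\cdot)^2\|$ is at most $\max_S\sum_{T\sim S}1/(d_Sd_T)\le 1/\tau$. Matrix Bernstein then gives, with high probability,
\[
\|D_0^{-1/2}AD_0^{-1/2}\|\lesssim \sqrt{\log|\cS|/\tau}+\log|\cS|/\tau .
\]
Since $|\cS|\le n^{k}$, we have $\log|\cS|\le k\log n$. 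Combined with $\tau\ge n^{\delta/2}/2$ and $4\sqrt{k\log n}\le n^{\delta/20}$, this is at most $n^{-\delta/4}\cdot n^{\delta/20}\le n^{-\delta/5}$. Comparing $D$ with $D_0$ up to constants costs a constant factor, which is absorbed.

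\textbf{Main obstacle.} The pruning in Step 1 depends only on the pattern, not on the signs, which is exactly what lets matrix Bernstein apply in Step 2; this is where semirandomness is used. The delicate step is therefore making the degree lower bound survive two things at once: the iterative peeling, and the passage from pattern degrees to signed $\ell_1$ degrees. If cancellation is problematic, I would instead redefine the threshold using $\ell_1$ degrees and argue the peeling budget directly, at the cost of a constant factor in $\gamma$.
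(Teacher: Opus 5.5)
Your argument is correct in outline and gets the same $\sqrt{k\log n}\,n^{-\delta/4}$ scaling as the paper, by a different route.

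\textbf{The paper's route.} It prunes rows with at most $n^{\delta/2}$ entries, a fixed threshold. (Your $\gamma$-dependent $\tau=\gamma d_{\mathrm{avg}}/4$ makes no essential difference, since $\gamma\ge n^{-\delta/2}$.) It then runs the trace moment method on $\Gamma=D^{-1/2}AD^{-1/2}$. Closed walks are encoded by a start vertex and a bit string marking first-time clauses, and every clause must be used at least twice. A normalized new step costs at most $1$ and an old step at most $\ell n^{-\delta/2}$. This gives $\E\tr(\Gamma^\ell)\le 2^\ell n^k(\ell n^{-\delta/2})^{\ell/2}$, and Markov with $\ell\approx\tfrac32 k\log_2 n$ finishes.

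\textbf{Your route.} Since $f$ is a bijection, $A'(\phi)=\sum_U b_U(e_Se_T^\top+e_Te_S^\top)$ is a linear Rademacher series over a fixed pattern. A matrix Bernstein (or Rademacher-series) tail with $\sigma^2,L\le 1/\tau$ then does the work in one step.

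Your route also makes explicit two points the paper's proof passes over. First, the pruning must not depend on the signs; otherwise the independence that kills walks with a singly-used clause is lost. Second, $D(\phi)$ is the \emph{signed} $\ell_1$ degree, which can be smaller than the pattern degree that the walk count actually uses. You handle both cleanly:
\begin{itemize}
\item pruning by peeling on pattern degrees;
\item a Chernoff comparison $D\ge cD_0$ (each entry has multiplicity at most $2$, so its absolute value has expectation at least half its multiplicity, independently across a row);
\item the inequality $\|D^{-1/2}AD^{-1/2}\|\le c^{-1}\|D_0^{-1/2}AD_0^{-1/2}\|$.
\end{itemize}
In exchange, the trace method is self-contained. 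It is also the tool the paper needs for \Cref{lem:crossspecnorm}, where entries are products of two signs and an independent-summand Bernstein bound does not apply directly.

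\textbf{A caution on constants.} First, $\gamma\sum_{S,T}|A'(\phi)_{S,T}|$ equals $2\gamma|\cH|$ only when there is no cancellation. In the worst case it is about $\gamma|\cH|$, so your peeling threshold needs a slightly smaller constant. Second, combine $\tau\ge n^{\delta/2}/2$, the Rademacher-series tail, and $c^{-1}\approx 2$. You end up near $4\sqrt{(k+O(1))\log n}\,n^{-\delta/4}$, which sits at the edge of the hypothesis $4\sqrt{k\log n}\le n^{\delta/20}$ rather than being comfortably absorbed. The paper's accounting is comparably loose (e.g.\ it ignores the factor $2$ from iterated deletion). So this affects only the constant in the hypothesis, not the argument.
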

\subsection{Proof of \Cref{theorem:semirandomxor}}
Here we prove the main result of this section.
\begin{proof}[Proof of \Cref{theorem:semirandomxor}]
First, suppose that $k$ is even. For all $U=ST\in \cH$, where $S,T\in [n]^{k/2}$ define $f(S,T)\coloneqq U$. Construct the $f$-induced $2$-XOR instance $\phi$ of $\psi$. Then we have at most $n^{k/2}=n^{-\delta}|\cH|$ total variables, so by \Cref{lem:inducedspecnorm}, there exists an induced $(\gamma, \frac{\delta}{5})$-spectral certificate for $\psi$ with respect to $f$. Thus by \Cref{lem:refuteinduced}, there exists some $R=O_{\gamma, \delta,k}(1)$ such that for all $R$-local pseudodistributions $\cD$, we have $|\pE_\cD \psi(x)|\leq 2\gamma|\cH|$.

Now suppose that $k$ is odd. We construct a partition $\cH=\cH'
\sqcup \cH''\sqcup \cH'''$ as follows:
\begin{enumerate}
    \item While there exists some $S\in [n]^{(k-1)/2}$ and $i\in [n]$ such that there are more than $n^{\delta/2}$ total clauses of the form $STi$ or $TSi$ in $\cH$, move all $STi$ and $TSi$ to a new hypergraph $\cH'$.
    \item Afterwards, while there exist some $(S,T)\in [n]^{(k-1)/2}\times [n]^{(k-1)/2}$ such that there are more than $n^{\delta/2}$ total clauses of the form $STi\in \cH$, move all such $STi$ from $\cH$ to a new hypergraph $\cH''$. 
    \item Let $\cH'''$ consist of the remaining clauses.
\end{enumerate}
Now, consider $\cH'$. If $|\cH'|\leq \frac{\gamma}{2}|\cH|$, then $\psi(\cH')\leq \frac{\gamma}{2}|\cH|$ trivially. Otherwise, each edge $STi\in\cH'$ consists of one part $(S,i)\in [n]^{(k-1)/2}\times [n]$ and one part $T\in [n]^{(k-1)/2}$. By construction, we have at most $n^{k/2+\delta/2}$ different $(S,i)$ that occur in $\cH'$, since each must occur in at least $n^{\delta/2}$ tuples for it to be added to $\cH'$, and there are at most $n^{(k-1)/2}$ total $T$. Thus, when we use these clauses to construct an induced $2$-XOR instance (letting $f((S,i),T)=STi$ if $(S,i)$ was the heavy part and $f(S,(T,i))=STi$ if $(T,i)$ was the heavy part), we will have $\leq n^{k/2+\delta/2}+n^{(k-1)/2}\leq n^{-\delta/3}|\cH'|$ total variables. Thus by \Cref{lem:inducedspecnorm}, there exists an induced $(\frac{\gamma}{2}, \frac{\delta}{15})$-spectral certificate for $\psi(\cH')$, so by \Cref{lem:refuteinduced} there exists some $R'=O_{\gamma/2, \delta/15, k}(1)=O_{\gamma,\delta,k}(1)$ such that for all $R'$-local pseudodistributions $\cD'$, we have $\pE_{\cD'}\psi(\cH')\leq \gamma|\cH'|$.

Similarly for $\cH''$ either $|\cH''|\leq \frac{\gamma}{2}|\cH|$ and thus $\psi(\cH'')\leq \frac{\gamma}{2}|\cH|$, or we can construct an induced $2$-XOR instance whose variables are $ST\in [n]^{k-1}$ which occur together in at least $n^{\delta/2}$ clauses and $i\in [n]$. Therefore we will have $\leq n^{k/2+\delta/2}+n\leq n^{-\delta/3}|\cH''|$ total variables, so like before there exists an induced $(\frac{\gamma}{2}, \frac{\delta}{15})$-spectral certificate for $\psi(\cH'')$ so for all $R'$-local pseudodistributions $\cD'$, we have $\pE_{\cD'}\psi(\cH'')\leq \gamma|\cH''|$.

Finally, for $\cH'''$, if $|\cH'''|\leq \frac{\gamma}{2}|\cH|$, then $\psi(\cH''')\leq \frac{\gamma}{2}|\cH|$. Otherwise, construct a regular partition $P$ of $\cH'''$ by taking each $\cH'''_i$ of size larger than $\frac{|\cH'''|}{n}$ and repeatedly peeling off $\frac{|\cH'''|}{n}$ at a time. Then $\cH'''$ satisfies the conditions of \Cref{lem:crossspecnorm} so there exists a $(\frac{\gamma}{2}, \frac{\delta}{9})$-spectral certificate for $\psi(\cH''')$. Thus by \Cref{theorem:semirandomrefutation}, there exists some $R'''=O_{\gamma/2, \delta/9, k}(1)=O_{\gamma, \delta, k}(1)$ such that for all $R'''$-local pseudodistributions $\cD'$, we have $\pE_{\cD'}\psi(\cH''')\leq \gamma |\cH'''|$.

Combining the above and letting $R\coloneqq \max(R',R''')$, for all $R$-local pseudodistributions $\cD$, we have
\begin{align*}
    \left|\pE_\cD\psi(x)\right|
    &\leq \left|\pE_\cD \psi(\cH')(x)\right|+\left|\pE_\cD \psi(\cH'')(x)\right|+\left|\pE_\cD \psi(\cH''')(x)\right|
    \\&\leq \frac{\gamma}{2}|\cH|+\frac{\gamma}{2}|\cH|+\gamma(|\cH'|+|\cH''|+|\cH'''|)
    \\&\leq 2\gamma|\cH|\mper
\end{align*}
\end{proof}

\subsection{Existence of spectral certificates}
\label{sec:existence proofs}
Here we prove \Cref{lem:crossspecnorm} and \Cref{lem:inducedspecnorm}. We use the following result, which we defer to the appendix, in the proof of \Cref{lem:crossspecnorm}.
\begin{restatable}{lemma}{canprune}
    \label{lem:canprune}
 Let $\psi(x)=\sum_{U\in \cH}b_U\cdot x^U$ be a semirandom $k$-XOR instance with constraint set $\cH\subseteq [n]^k$ such that $|\cH|=n^{k/2+\delta}$ and let $P$ be a regular partition of the $\cH_i$s. Suppose $\cH$ has the following properties: 
    \begin{itemize}
        \item For all $S\in [n]^{(k-1)/2}$, $i\in [n]$, there are at most $n^{\delta/2}$ total clauses of the form $STi$ or $TSi$ in $\cH$.
        \item For all $S_1T_1\in [n]^{(k-1)/2}\times [n]^{(k-1)/2}$, there are at most $n^{\delta/2}$ total clauses of the form $S_1T_1i\in \cH$.
    \end{itemize}
    Then with high probability, there exists a principal submatrix $A(\psi)$ of $A'(\psi)$ such that all of the following hold:
    \begin{itemize}
        \item The deleted rows and columns of $A'(\psi)$ have $\ell_1$ norms summing to at most $30n^{-\delta/2}\sum_{i\in [n]}\sum_{j\in [p_i]}|\chij|(|\chij|-1)$.
        \item Each row of $A(\psi)$ has at least $n^\delta$ nonzero entries.
        \item Each row of $A(\psi)$ has at least $\frac13$ as many nonzero entries as the corresponding row of $A(\psio)$, where we recall that $A'(\psio)$ is the matrix defined with the same $\cH$ as $\psi$ but with all $b_U$s equal to $1$.
        \item Each entry of $A(\psi)$ has absolute value at most $4kn^{\delta/4}\sqrt{\log n}$.
    \end{itemize}
\end{restatable}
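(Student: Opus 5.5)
We will prove \Cref{lem:canprune} by first controlling every entry of $A'_P(\psi)$ with a union bound, and then pruning rows by a deletion process whose total cost is charged against $\sum_{i,j}|\chij|(|\chij|-1)$.

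\emph{Entries.} Fix an index pair $((S_1,S_2),(T_1,T_2))$. By \Cref{remark:allonescomparison}, the entry $A'_P(\psi)_{(S_1,S_2),(T_1,T_2)}$ is a sum of $N=A'_P(\psio)_{(S_1,S_2),(T_1,T_2)}$ Rademacher terms of the form $b_Ub_{U'}$ with $U\neq U'$. Two distinct pairs $\{U,U'\}$ contributing to the same entry are determined by $i$ together with the ordering, so these products are independent signs. By the second hypothesis, $N\le 2n^{\delta/2}$, since the number of $i$ with $S_1T_1i\in\cH$ is at most $n^{\delta/2}$. Hoeffding's inequality then gives $|A'_{\cdot,\cdot}|\le \sqrt{2N\cdot 2k\log n}\le 4kn^{\delta/4}\sqrt{\log n}$ except with probability $n^{-2k}$. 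A union bound over the at most $n^{2(k-1)}$ nonzero positions handles all entries at once, which establishes the fourth bullet for $A'$ and hence for every principal submatrix.

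\emph{Row counts.} For a row $(S_1,S_2)$, let $r_\1$ be the number of nonzero entries of that row in $A'_P(\psio)$. Each such entry is nonzero in $A'_P(\psi)$ with probability at least $1/2$. Different entries in the same row involve disjoint pairs $(U,U')$ once $U=S_1T_1i$ is fixed, but a single clause $U$ can feed many entries, so the entries are not independent. To handle this, I will condition on the signs $b_{S_1T_1i}$ for clauses whose first half is $S_1$. For fixed $i$, the entries indexed by $(T_1,T_2)$ then depend on distinct $b_{S_2T_2i}$. By the first hypothesis, each $b$ influences at most $n^{\delta/2}$ entries of the row, so the bounded-differences (McDiarmid) inequality shows that the number of nonzero entries is at least $r_\1/3$ with probability $1-\exp(-\Omega(r_\1/n^{\delta/2}))$. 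This failure probability is tiny when $r_\1\ge n^{\delta}\cdot\mathrm{polylog}$.

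\emph{Pruning.} Call a row bad if it has fewer than $n^\delta$ nonzeros or fewer than $r_\1/3$ nonzeros. Delete all bad rows and their columns. Deleting columns lowers the counts in the remaining rows, so I iterate, in the style of a core/peeling argument. At each stage I delete rows whose current $\ell_1$ mass is small, namely at most $n^\delta\cdot 4kn^{\delta/4}\sqrt{\log n}$. I then charge each deletion to the clause pairs $(U,U')$ in the same part $\chij$, using a Markov-type argument. Since the rows that are bad with nonnegligible probability are exactly those with $r_\1\lesssim n^\delta$, the expected total $\ell_1$ mass of deleted rows is at most an $n^{-\delta/2}$ fraction of $\sum\|A'(\psio)\text{ rows}\|_1$, which equals $2\sum_{i,j}|\chij|(|\chij|-1)$. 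Markov's inequality then gives the first bullet with constant $30$ with high probability.

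\emph{Main obstacle.} The hardest step is the interaction between the second and third bullets under iterative deletion: removing columns can push surviving rows below the thresholds. I would first try to arrange that every surviving row loses at most a $1/3$ fraction of its nonzeros across the whole process, using the degree bound $n^{\delta/2}$ from the first hypothesis. If that fails, I would switch to a two-threshold scheme, requiring $2r_\1/3$ and $2n^\delta$ before pruning, so that the final $1/3$ and $n^\delta$ bounds survive.
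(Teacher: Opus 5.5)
Your entry bound (Hoeffding over at most $2n^{\delta/2}$ independent products, then a union bound) is the paper's argument. The gap is the pruning, at exactly the point you flag.

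\textbf{Pruning.} Your bad-row test uses the random nonzero pattern of $A'_P(\psi)$, so the surviving rows and columns depend on the signs. The third bullet compares with the \emph{pruned} rows of the all-ones matrix. Under sign-dependent pruning it is therefore no longer a statement about a fixed row, to which a concentration bound and a union bound could be applied. Deletions also cascade. Neither your cost estimate (which only accounts for rows with small \emph{initial} $r_{\mathbf 1}$, via an unspecified Markov step) nor the two-threshold fallback controls this.

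The missing idea is to prune the sign-independent matrix $A'_P(\psi^{\mathbf 1})$, which depends only on $\mathcal H$ and $P$: while some row has current $\ell_1$ mass below $3n^{3\delta/2}$, delete it and its column. Your rule of deleting rows of small \emph{current} mass is the right one for the first bullet. However the deletions cascade, the total removed is at most (number of rows) times twice the threshold, here $6n^{k-1+3\delta/2}$. To compare this with the target you need $\sum_{i,j}|\mathcal H_i^{(j)}|(|\mathcal H_i^{(j)}|-1)\geq(1-o(1))|\mathcal H|^2/(2n)=(1-o(1))\tfrac12 n^{k-1+2\delta}$. This follows by convexity from $P$ having at most $2n$ parts, and it is absent from your charging argument. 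The entrywise bound $|A'_P(\psi)|\leq A'_P(\psi^{\mathbf 1})$ then transfers the estimate to $A'_P(\psi)$.

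At the fixed point, each surviving row has mass at least $3n^{3\delta/2}$ in the pruned all-ones matrix. Its entries are at most $n^{\delta/2}$ by the second hypothesis, so the row has at least $3n^{\delta}$ nonzeros (your factor $2$ from the two symmetric terms only shifts constants). Because the surviving rows and columns are fixed before the signs are revealed, bullets two and three reduce to one concentration bound per row plus a union bound. Bullet two then follows from bullet three, since $\tfrac13\cdot 3n^\delta=n^\delta$.

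\textbf{Concentration rate.} The rate you claim does not follow from bounded differences. An entry can aggregate up to $2n^{\delta/2}$ products, while each sign influences up to $n^{\delta/2}$ entries of the row. So in the worst case the sum of squared influences is of order $n^{\delta}r_{\mathbf 1}$. A direct application of McDiarmid then gives only $\exp(-\Omega(r_{\mathbf 1}/n^{\delta}))$, which is a constant at the relevant scale $r_{\mathbf 1}\approx n^{\delta}$.

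The paper instead treats the indicators that the entries of a fixed surviving row are nonzero as a read-$n^{\delta/2}$ family in the signs. Each has mean at least $\tfrac12$ by \Cref{remark:allonescomparison}. The read-$k$ Chernoff bound (\Cref{fact:dependentchernoff}) has exponent of order $r/k$; with $r\geq 3n^{\delta}$ and $k=n^{\delta/2}$ this gives failure probability $\exp(-\Omega(n^{\delta/2}))$. Raising your threshold by polylogarithmic factors to make McDiarmid suffice would inflate the deleted mass by the same factor and lose the constant $30$ in the first bullet.
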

\crossspecnorm*
\begin{proof}
    Let $A(\psi)$ be the principal submatrix of $A'(\psi)$ given by \Cref{lem:canprune}. Since $\gamma\geq \sqrt{60}n^{-\delta/4}$, it satisfies the first condition of a $(\gamma, \delta)$-spectral certificate with high probability. Now we prove the second condition. Let $D(\psi)$ be the degree matrix of $A(\psi)$, $D(\psio)$ be the degree matrix of $A(\psio)$, and $M(\psi)$ be the degree matrix of the support matrix of $A(\psio)$, i.e. a diagonal entry of $M(\psi)$ is the number of nonzero entries in the corresponding row of $A(\psio)$. 
    
    Note that we have $M(\psi)\leq 3D(\psi)$ entrywise. Define $\Gamma\coloneqq M(\psi)^{-1/2}A(\psi)M(\psi)^{-1/2}$, so we have 
    \begin{align*}
        \specrad(D(\psi)^{-1}A(\psi))\leq 3  \rho(M(\psi)^{-1}A(\psi))=3\specrad(\Gamma)\leq 3\Norm{\Gamma}\mper
    \end{align*}
    We start by upper bounding $\E\tr(\Gamma^\ell)=\E \tr((M^{-1}(\psi)A(\psi))^\ell)$ for even $\ell$ by analyzing closed length-$\ell$ walks $V_1\mapsto V_2\mapsto\cdots\mapsto V_\ell\mapsto V_1$ in the weighted multigraph whose vertices are variables of $\psi$ corresponding to the weighted signed adjacency matrix $M^{-1}(\psi)A(\psi)$. Note that for any edge between two vertices $V_j=(S_1,S_2)$ and $V_{j+1}=(T_1,T_2)$, there exists some $i$ such that either both $S_1T_1i,S_2T_2i\in\cH$ or both $T_1S_1i,T_2S_2i\in \cH$. Since each $U\in\cH$ comes with an independent Rademacher sign $\xi_{U}$, if some clause $U\in \cH$ appears only once as one of the two clauses inducing an edge in our walk, then the expected value of the walk is $0$. Therefore, in a walk with a nonzero contribution to the expected trace, each clause which induces an edge must occur at least twice. 
    
    Then we can define an encoding of any closed walk $V_1\mapsto V_2\mapsto\cdots\mapsto V_\ell\mapsto V_1$ by first choosing a variable $V_j=(S_1,S_2)$, of which there are at most $n^{k-1}$ choices, then choosing $b\in \{0,1\}^\ell$ such that $b_j$ is the indicator that both of the two tuples which induce the edge from $V_j$ to $V_{j+1}$ have not induced a previous step in the walk. Since each edge must occur at least twice, we note that $b_j=0$ for at least $\frac{\ell}{2}$ values of $j$.

    At each step $j$, we choose the next step of the walk based on the value of $b_j$: At $V_{j}=(S_1,S_2)$, if $b_j=1$, we simply pick an arbitrary neighbor of $V_{j}$, while if $b_j=0$, there are at most $2j\leq 2\ell$ possible $U\in \cH$ to choose from which have induced a previous step in the walk. Therefore, when $b_j=1$, there are at most $D(\psi)(V_j)$ possible choices, so $\sum_{V_{j+1}\in N_1(V_{j})}\frac{1}{M(\psi)(V_{j})}\leq \frac{D(\psi)(V_j)}{M(\psi)(V_j)}\leq 4kn^{\delta/4}\sqrt{\log n}$. When $b_j=0$, without loss of generality, suppose the tuple we pick is of the form $S_1T_1i\in \cH$ for some $i\in [n]$. Then the second tuple inducing this edge must be of the form $(S_2,T_2,i)\in \chij$ for some $j\in [p_i]$, and there are at most $n^{\delta/2}$ total choices for that, so $\sum_{V_{j+1}\in N_0(V_{j})}\frac{1}{M(\psi)(V_{j})}\leq \frac{2\ell n^{\delta/2}}{n^\delta}\leq 2\ell n^{-\delta/2}$, since $M(\psi)(V_j)\geq n^{\delta}$. Thus we get
    \begin{align*}
        \E\tr(\Gamma^\ell)&\leq\sum_{b\in \{0,1\}^\ell}\sum_{V_1\in [n]^{k-1}}\frac{1}{M(\psi)(V_1)}\sum_{S_2\in N_{b_1}(S_1)}\frac{1}{M(\psi)(S_2)}\cdots \sum_{S_\ell\in N_{b_{\ell - 1}}(S_{\ell - 1})}\frac{1\cdot \mathbb{1}\{S_1\in N_{b_\ell}(S_\ell)\}}{M(\psi)  (S_\ell)}
        \\&\leq 2^\ell n^{k-1}(4kn^{\delta/4}\sqrt{\log n}\cdot 2\ell n^{-\delta/2})^{\ell / 2}
        \\&=2^{\ell}n^{k-1} (2\sqrt{2k\ell}(\log n)^{1/4} n^{-\delta/8})^{\ell}\mper
    \end{align*}
    Therefore, we set $\ell=2\left\lceil \frac{k\log n}{2}\right\rceil$ so by Markov's inequality, we have
    \begin{align*}
        \Pr\left[\Norm{\Gamma}\geq n^{-\delta/9}\geq  8\sqrt2kn^{-\delta/8}\log^{3/4}n \right]\leq \frac{\E\tr(\Gamma^\ell)}{(8\sqrt2kn^{-\delta/8}\log^{3/4}n)^\ell}=\frac{1}{\poly(n^k)}\mper
    \end{align*}
\end{proof}

\inducedspecnorm*
\begin{proof}
While $A'(\phi)$ contains a row with at most $n^{\delta/2}$ entries, delete that row and its corresponding column, and let $A(\phi)$ be the result. Then note that the deleted rows contain at most $|\cS|n^{\delta/2}\leq n^{-\delta/2}|\cH|$ entries. Now let $D(\phi)$ be the degree matrix of $A(\phi)$, and note that all diagonal entries of $D(\phi)$ are at least $n^{\delta/2}$. Define $\Gamma=D(\phi)^{-1/2}A(\phi)D(\phi)^{-1/2}$, and note that $\specrad(D(\phi)^{-1}A(\phi))=\specrad(\Gamma)\leq\Norm{\Gamma}$. We start by upper bounding $\E\tr(\Gamma^\ell)=\E \tr((D^{-1}(\phi)A(\phi))^\ell)$ by analyzing closed even length-$\ell$ walks $S_1\mapsto S_2\mapsto\cdots\mapsto S_\ell\mapsto S_1$ in the weighted graph whose vertices are tuples $(S,S')$ which corresponds to the weighted signed adjacency matrix $D^{-1}(\phi)A(\phi)$. Note that there exists an edge between two vertices $S_i,S_{i+1}$ if either $f(S_i,S_{i+1})\in\cH$ or $f(S_{i+1},S_i)\in \cH$. Since each $U\in\cH$ comes with an independent Rademacher sign $\xi_{U}$, if some unordered tuple $(S,T)$ appears only once as an edge of the walk, then the expected value of the walk is $0$. Therefore, each edge in a walk with a nonzero contribution to the expected trace must occur at least twice. Then we can define an encoding of any closed walk $S_1\mapsto S_2\mapsto\cdots\mapsto S_\ell\mapsto S_1$ by first choosing a variable $S_1$, of which there are $|[n]^{k-1}|<n^k$ choices, then choosing $b\in \{0,1\}^\ell$ such that $b_i$ is the indicator that the $i$th edge of the walk has not occurred in a previous step. Since each edge must occur at least twice, we note that $b_i=0$ for at least $\frac{\ell}{2}$ values of $i$.

    At each step $i$, we choose the next step of the walk based on the value of $b_i$: At $S_{i-1}$, if $b_i=1$, we simply pick an arbitrary neighbor of $S_{i-1}$, while if $b_i=0$, we pick some $S_i$ such that the edge $(S_{i-1},S_i)$ has occurred previously. Then, when $b_i=1$, there are at most $D(\phi)(S_{i-1})$ possible choices, so $\sum_{S_i\in N_1(S_{i-1})}1/D(\phi)(S_{i-1})\leq 1$. When $b_i=0$, there are at most $i-1\leq\ell$ previous edges to choose from, so $\sum_{S_i\in N_0(S_{i-1})}1/D(\phi)(S_{i-1})\leq \ell n^{-\delta/2}$. Combining the above, we get
    \begin{align*}
        \E\tr(\Gamma^\ell)&\leq\sum_{b\in \{0,1\}^\ell}\sum_{S_1\in \cS}\frac{1}{D(\phi)(S_1)}\sum_{S_2\in N_{b_1}(S_1)}\frac{1}{D(\phi)(S_2)}\cdots \sum_{S_\ell\in N_{b_{\ell - 1}}(S_{\ell - 1})}\frac{1\cdot \mathbb{1}\{S_1\in N_{b_\ell}(S_\ell)\}}{D(\phi)  (S_\ell)}
        \\&\leq 2^\ell n^{k}(\ell n^{-\delta/2})^{\ell / 2}\mper
    \end{align*}
    Therefore, we set $\ell=2\left\lceil \frac{3k\log_2 n}{4}\right\rceil$ so by Markov's inequality, we have
    \begin{align*}
        \Pr\left[\Norm{\Gamma}\geq n^{-\delta/5}\geq 4n^{-\delta/4}\sqrt{k\log_2 n} \right]\leq \frac{\E\tr(\Gamma^\ell)}{(4n^{-\delta/4}\sqrt{k\log_2 n})^\ell}=\frac{1}{\poly(n^k)}\mper
    \end{align*}
\end{proof}

\bibliographystyle{alpha}
\bibliography{references}

@inproceedings{OS19,
author = {O'Donnell, Ryan and Schramm, Tselil},
title = {Sherali-adams strikes back},
year = {2019},
isbn = {9783959771160},
publisher = {Schloss Dagstuhl--Leibniz-Zentrum fuer Informatik},
address = {Dagstuhl, DEU},
url = {https://doi.org/10.4230/LIPIcs.CCC.2019.8},
doi = {10.4230/LIPIcs.CCC.2019.8},
booktitle = {Proceedings of the 34th Computational Complexity Conference},
articleno = {8},
numpages = {30},
location = {New Brunswick, New Jersey},
series = {CCC '19}
}

@inproceedings{dOrsiT23,
  author       = {Tommaso {d'Orsi} and
                  Luca Trevisan},
  title        = {A Ihara-Bass Formula for Non-Boolean Matrices and Strong Refutations
                  of Random CSPs},
  booktitle    = {38th Computational Complexity Conference, {CCC} 2023, July 17-20,
                  2023, Warwick, {UK}},
  series       = {LIPIcs},
  volume       = {264},
  pages        = {27:1--27:16},
  publisher    = {Schloss Dagstuhl - Leibniz-Zentrum f{\"{u}}r Informatik},
  year         = {2023},
}

@inproceedings{BarakM16,
  author    = {Boaz Barak and
               Ankur Moitra},
  title     = {{Noisy Tensor Completion via the Sum-of-Squares Hierarchy}},
  booktitle = {Proceedings of the 29th Conference on Learning Theory, {COLT} 2016,
               New York, USA, June 23-26, 2016},
  series    = {{JMLR} Workshop and Conference Proceedings},
  volume    = {49},
  pages     = {417--445},
  publisher = {JMLR.org},
  year      = {2016},
}

@article{Coja-OghlanGL07,
  title={{Strong refutation heuristics for random $k$-SAT}},
  author={Coja{-}Oghlan, Amin and Goerdt, Andreas and Lanka, Andr{\'e}},
  journal={Combinatorics, Probability \& Computing},
  volume={16},
  number={1},
  pages={5},
  year={2007},
  publisher={Cambridge University Press}
}

@article{FlemingKP19,
  title={{Semialgebraic Proofs and Efficient Algorithm Design}},
  author={Fleming, Noah and Kothari, Pravesh and Pitassi, Toniann},
  journal={Foundations and Trends{\textregistered} in Theoretical Computer Science},
  volume={14},
  number={1-2},
  pages={1--221},
  year={2019},
  publisher={Now Publishers, Inc.}
}

@inproceedings{GuruswamiKM22,
  author    = {Venkatesan Guruswami and
               Pravesh K. Kothari and
               Peter Manohar},
  title     = {{Algorithms and certificates for Boolean {CSP} refutation: smoothed
               is no harder than random}},
  booktitle = {{STOC} '22: 54th Annual {ACM} {SIGACT} Symposium on Theory of Computing,
               Rome, Italy, June 20 - 24, 2022},
  pages     = {678--689},
  publisher = {{ACM}},
  year      = {2022}
}

@inproceedings{HsiehKM23,
  author    = {Jun{-}Ting Hsieh and
               Pravesh K. Kothari and
               Sidhanth Mohanty},
  title     = {{A simple and sharper proof of the hypergraph Moore bound}},
  booktitle = {Proceedings of the 2023 {ACM-SIAM} Symposium on Discrete Algorithms,
               {SODA} 2023, Florence, Italy, January 22-25, 2023},
  pages     = {2324--2344},
  publisher = {{SIAM}},
  year      = {2023}
}

@article{GLSS15,
author = {Gavinsky, Dmitry and Lovett, Shachar and Saks, Michael and Srinivasan, Srikanth},
title = {A tail bound for read-k families of functions},
journal = {Random Structures \& Algorithms},
volume = {47},
number = {1},
pages = {99-108},
doi = {https://doi.org/10.1002/rsa.20532},
url = {https://onlinelibrary.wiley.com/doi/abs/10.1002/rsa.20532},
eprint = {https://onlinelibrary.wiley.com/doi/pdf/10.1002/rsa.20532},
year = {2015}
}

@article{GoemansWilliamson,
  title={Improved approximation algorithms for maximum cut and satisfiability problems using semidefinite programming},
  author={Goemans, Michel X and Williamson, David P},
  journal={Journal of the ACM (JACM)},
  volume={42},
  number={6},
  pages={1115--1145},
  year={1995},
  publisher={ACM New York, NY, USA}
}

@inproceedings{CharikarMakarychevs,
author = {Charikar, Moses and Makarychev, Konstantin and Makarychev, Yury},
title = {Integrality gaps for Sherali-Adams relaxations},
year = {2009},
isbn = {9781605585062},
publisher = {Association for Computing Machinery},
address = {New York, NY, USA},
url = {https://doi.org/10.1145/1536414.1536455},
doi = {10.1145/1536414.1536455},
booktitle = {Proceedings of the Forty-First Annual ACM Symposium on Theory of Computing},
pages = {283–292},
numpages = {10},
location = {Bethesda, MD, USA},
series = {STOC '09}
}

@INPROCEEDINGS{HopkinsST,
  author={Hopkins, Samuel B. and Schramm, Tselil and Trevisan, Luca},
  booktitle={2020 IEEE 61st Annual Symposium on Foundations of Computer Science (FOCS)}, 
  title={Subexponential LPs Approximate Max-Cut}, 
  year={2020},
  volume={},
  number={},
  pages={943-953},
  doi={10.1109/FOCS46700.2020.00092}}

@inproceedings{AOW15,
  author    = {Sarah R. Allen and
               Ryan O'Donnell and
               David Witmer},
  title     = {{How to Refute a Random CSP}},
  booktitle = {{IEEE} 56th Annual Symposium on Foundations of Computer Science, {FOCS}
               2015, Berkeley, CA, USA, 17-20 October, 2015},
  pages     = {689--708},
  publisher = {{IEEE} Computer Society},
  year      = {2015},
}

@inproceedings{AbascalGK21,
  author    = {Jackson Abascal and
               Venkatesan Guruswami and
               Pravesh K. Kothari},
  title     = {{Strongly refuting all semi-random Boolean CSPs}},
  booktitle = {Proceedings of the 2021 {ACM-SIAM} Symposium on Discrete Algorithms,
               {SODA} 2021, Virtual Conference, January 10 - 13, 2021},
  pages     = {454--472},
  publisher = {{SIAM}},
  year      = {2021},
}
\appendix

\section{Proofs of auxiliary statements}
In this section, we fill in the proofs of \Cref{lem:randomcertificateexists} and \Cref{lemma:os19corollary}. The proofs here are all reformulations of ideas in prior works \cite{AOW15,OS19} and are included here for completeness. 
\subsection{Auxiliary statements for random $k$-XOR}
We show that with high probability, the matrix $A(\psi)$ defined in \Cref{def:randcertificate} is a spectral certificate.
\randomcertificateexists*
This follows directly from the following two statements.
\begin{fact}\cite[Lemma~A.5]{AOW15}\label{fact:randomspecnorm}
    Let $\psi(x)=\sum_{U\in \cH}b_U\cdot x^U$ be a $k$-XOR instance with $n^{k/2+\delta}$ expected constraints. Then with high probability, $\Norm{A(\psi)}\leq 2^{O(k)}n^{\delta}\log^3n$. 
\end{fact}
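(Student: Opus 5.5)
The plan is a trace-moment (walk-counting) argument for the random matrix $A(\psi)$, using the graph-matrix viewpoint. Fix an even $\ell=\Theta(\log n)$ and bound $\E\tr(A(\psi)^\ell)$. Markov's inequality then gives $\Norm{A(\psi)}\leq (\E\tr A^\ell)^{1/\ell}\cdot n^{O(k)/\ell}$ with high probability, and choosing $\ell \approx k\log n$ makes the $n^{O(k)/\ell}$ factor a constant.

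First I would expand the trace. Each entry $A_{(S_1,S_2),(T_1,T_2)}$ is a sum over $i$ of products $b_{S_1T_1i}b_{S_2T_2i}$, plus the transposed term. So $\tr(A^\ell)$ expands into a sum over closed walks $V_1\to\cdots\to V_\ell\to V_1$ in $[n]^{k-1}$, together with a choice of middle index $i_j$ and orientation at each step. Each term is a product of $2\ell$ variables $b_U$. Each $b_U$ equals $\pm1$ with probability $p$ and $0$ otherwise, independently and with mean zero. Hence a term survives in expectation only if every tuple $U$ occurs an even number of times (at least twice). A term whose distinct tuples form the set $\mathcal U$ contributes at most $p^{|\mathcal U|}$.

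Next I would count the surviving configurations. Encode a walk by its start vertex (at most $n^{k-1}$ choices) and, at each step, flags marking which of the two tuples are new. For each new tuple, choosing its fresh indices costs at most $n^{(k-1)/2}$ for the new half-vertex. A new middle index $i$ costs a factor $n$, but it is shared by the two tuples of that step. For each repeated tuple, there are at most $O(\ell)$ earlier tuples to point back to. Since every tuple appears at least twice among the $2\ell$ slots, there are at most $\ell$ distinct tuples. Balancing $n^{(k-1)/2}$ per fresh half-vertex and $n$ per fresh middle index against $p=n^{\delta-k/2}$ per distinct tuple should give each step a factor of roughly $n^{\delta}$ times $\poly(\ell)$. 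This is consistent with the heuristic that a typical row $\ell_1$-norm is about $n^{k-1}p^2 n\approx n^{2\delta}$, whose square root is $n^\delta$. The diagonal exclusion $(S_1,T_1)\ne(S_2,T_2)$ rules out the trivial case $b_U^2$ and removes a $n^{2\delta}$-type term. Taking the $\ell$-th root gives $2^{O(k)}n^\delta\,\poly(\ell)$, and with $\ell=O(k\log n)$ the polynomial factor becomes $\log^{3}n$ after bookkeeping.

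The main obstacle is the combinatorial accounting in the counting step. Walks may revisit half-vertices without revisiting tuples, and such patterns could overcount. I must also check that every step pays at most $n^{\delta}$ even when the two tuples of a step share a fresh $i$ but differ in their halves. I would first try the standard approach of \cite{AOW15}: decompose the walk into a spanning structure of new edges and charge each repeated edge to an earlier occurrence at cost $O(\ell)$. If the direct count leaks extra factors of $n$, I would fall back to splitting $A$ by whether $S_1=S_2$ or $T_1=T_2$ and bounding each piece separately. This mirrors the case analysis in \cite[Lemma~A.5]{AOW15}.
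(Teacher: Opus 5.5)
Your plan is correct and is the standard trace-method argument. The paper gives no proof of its own here and only cites \cite[Lemma~A.5]{AOW15}, which, as far as I recall, is proved by a trace-moment walk count of this kind. Your direct count already closes without any fallback. There are at most $\ell$ distinct tuples, so at most $\ell/2$ steps introduce two fresh tuples, and each such step costs at most $n\cdot n^{k-1}p^2=n^{2\delta}$; a fresh tuple beside a repeated one costs $O(\ell)\,n^{(k-1)/2}p=O(\ell)\,n^{\delta-1/2}$, and a fully repeated step costs $O(\ell^2)$, so $\E\tr(A^\ell)\le n^{k-1}\bigl(\poly(\ell)\,n^{\delta}\bigr)^{\ell}$. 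One correction: without the exclusion, a self-paired step would cost $n^{(k+1)/2}p=n^{1/2+\delta}$, not an $n^{2\delta}$-type amount, and this is exactly why the exclusion $(S_1,T_1)\neq(S_2,T_2)$ is essential.
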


\begin{restatable}{proposition}{degsgood}\label{prop:degsgood}
    Let $k$ be an odd integer, $0<\delta<\frac12$, and $\psi(x)=\sum_{U\in \cH} b_U\cdot x^U$ be a $k$-XOR instance with $n^{k/2+\delta}$ expected constraints. Then with probability $1-\exp(-\Omega(n^{2\delta}))$, we have $\min_{(S_1,S_2)}\Norm{A(\psi)_{(S_1,S_2)}}_1\geq\frac{n^{2\delta}}{7}$.
\end{restatable}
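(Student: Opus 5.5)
We need to lower bound, for every row index $(S_1,S_2)\in[n]^{(k-1)/2}\times[n]^{(k-1)/2}$, the $\ell_1$ norm $\Norm{A(\psi)_{(S_1,S_2)}}_1=\sum_{(T_1,T_2)}|A(\psi)_{(S_1,S_2),(T_1,T_2)}|$. The plan is to count nonzero entries, since every nonzero entry is an integer and so has absolute value at least $1$.

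\textbf{Step 1: the structure of an entry.} Fix a row $(S_1,S_2)$ and a column $(T_1,T_2)$. The entry is a sum over $i\in[n]$ of products $b_{S_1T_1i}b_{S_2T_2i}$, plus the transposed products. Each product is nonzero exactly when both tuples lie in $\cH$; this happens independently for distinct tuple pairs, each with probability $p^2$. Here $p=n^{-k/2+\delta}$.

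\textbf{Step 2: entries with exactly one nonzero term.} Call a column \emph{good} if exactly one of its $2n$ terms is nonzero. A good column gives an entry equal to $\pm1$, so it contributes exactly $1$ to the $\ell_1$ norm.

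\textbf{Step 3: counting good columns.} We have $n^{k-1}$ columns and $2n$ candidate terms per column, so the expected number of nonzero terms in row $(S_1,S_2)$ is about $2n^k p^2=2n^{2\delta}$.

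\textbf{Step 4: handling dependencies.} Different columns can share a tuple. The tuple $S_1T_1i$ appears together with every choice of $T_2$, so the terms are not independent. To avoid this, I would look at the bipartite structure directly. Let
\[
X=\{(T_1,i): S_1T_1i\in\cH\},\qquad Y=\{(T_2,i): S_2T_2i\in\cH\},
\]
where we assume $S_1\neq S_2$, so that $X$ and $Y$ are independent. The number of nonzero terms is $\sum_i |X_i|\,|Y_i|$, where $X_i$ and $Y_i$ are the slices of $X$ and $Y$ at a fixed $i$.

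\textbf{Step 5: Chernoff bounds on the slices.} The total size $|X|$ is binomial with mean $n^{(k+1)/2}p=n^{1/2+\delta}$. Each slice $|X_i|$ has mean $n^{(k-1)/2}p=n^{\delta-1/2}$, which is less than $1$ because $\delta<\tfrac12$. So typically the slices have size $0$ or $1$. Conditioned on $X$, the quantity $\sum_i |X_i||Y_i|$ is a sum of independent binomials with mean $\sum_i |X_i|\,n^{\delta-1/2}\approx n^{2\delta}$. A Chernoff bound then gives at least $n^{2\delta}/2$ nonzero terms with probability $1-\exp(-\Omega(n^{2\delta}))$.

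\textbf{Step 6: from nonzero terms to nonzero entries.} Distinct pairs $((T_1,i),(T_2,i))$ give distinct columns $(T_1,T_2)$ unless the same $(T_1,T_2)$ arises for two different values of $i$. Collisions of this kind, and collisions with the transposed terms, should number at most $O(1)$ times a smaller power of $n$ with high probability. I would bound them by a second-moment or Chernoff argument. After removing collisions, the number of nonzero entries is at least $n^{2\delta}/7$. The case $S_1=S_2$ needs the same treatment with mild care, for example by using the transposed term.

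\textbf{Step 7: union bound.} There are $n^{k-1}$ rows, and each fails with probability $\exp(-\Omega(n^{2\delta}))$. A union bound over the rows is affordable provided $n^{2\delta}\gg k\log n$, which the standing assumption of the lemma guarantees.

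\textbf{Main obstacle.} I expect the hard part to be handling the dependence and collisions while still obtaining the clean constant $1/7$ and the stated $\exp(-\Omega(n^{2\delta}))$ failure probability. In particular, the $S_1=S_2$ rows and possible cancellation between the two products in the same entry must be ruled out.
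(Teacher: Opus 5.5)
\textbf{There is a genuine gap in Step 6.} Your skeleton is reasonable: find many columns carrying an isolated $\pm1$ sign, apply Chernoff, then union bound over rows. But Step 6, which you flag yourself, is where the whole proof lives, and the route you sketch cannot give failure probability $\exp(-\Omega(n^{2\delta}))$.

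\begin{itemize}
\item A second-moment bound gives only polynomially small failure probability.
\item The collision count (pairs $i<j$ and columns $(T_1,T_2)$ with $S_1T_1i,S_1T_1j,S_2T_2i,S_2T_2j\in\cH$) is a degree-$4$ polynomial in the edge indicators, not a sum of independent indicators, so Chernoff does not apply.
\item No bound of the form ``at most $c\,n^{2\delta}$ collisions'' holds with the required probability. Fix $i\neq j$ and sets $A,B\subseteq[n]^{(k-1)/2}$ of size $n^{\delta}$. The event that $S_1Ti,S_1Tj\in\cH$ for all $T\in A$ and $S_2Ti,S_2Tj\in\cH$ for all $T\in B$ has probability $p^{4n^{\delta}}=\exp(-O(kn^{\delta}\log n))$, which is far larger than $\exp(-\Omega(n^{2\delta}))$. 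On this event all $n^{2\delta}$ columns of $A\times B$ carry two nonzero products.
\item Your Step 5 count of at least $n^{2\delta}/2$ products does not tell you which of them lie in collision-free columns.
\item Even in a collision-free column you must still rule out a nonzero transposed term $b_{T_1S_1j}b_{T_2S_2j}$ that could cancel the product. This includes the degenerate cases $T_1$ or $T_2\in\{S_1,S_2\}$, where the two kinds of terms share a sign.
\item The case $S_1=S_2$ is not handled. There $X=Y$, so the conditioning in Step 5 is unavailable. Also, entries $(S,S),(T,T)$ are zero by definition, so you need indices $i$ with two distinct partners.
\end{itemize}

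\textbf{The missing idea.} You need to restrict to indices where the local structure is forced to be simple, and to use a private sign instead of proving isolation.

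\begin{itemize}
\item The paper calls $i$ good if $S_1$ and $S_2$ each have exactly one partner $T$ at $i$. The number of good indices is a genuine binomial with mean about $n^{2\delta}$, since goodness of different $i$ depends on disjoint sets of tuples.
\item Conditioned on goodness, the partners $T_1^i,T_2^i$ are i.i.d.\ uniform on $[n]^{(k-1)/2}$.
\item Discard indices whose partners meet earlier partners or $\{S_1,S_2\}$. More than $n^{2\delta}/6$ are discarded only with probability at most $\binom{|I|}{n^{2\delta}/6}\bigl(O(n^{2\delta})/n^{(k-1)/2}\bigr)^{n^{2\delta}/6}$. This leaves at least $n^{2\delta}/3$ distinct columns.
\item The paper does not show these columns contain no other nonzero term. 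Instead it splits the row as $u+v$, where $u$ collects the products $b_{S_1T_1^ii}b_{S_2T_2^ii}$.
\item By uniqueness of partners and the pruning, these signs occur nowhere else in the row. So, conditioned on $v$, each such entry is nonzero with probability at least $1/2$, independently across columns, and Chernoff gives $n^{2\delta}/7$.
\item For $S_1=S_2$ the same argument runs with indices having exactly two partners $T_1^i\neq T_2^i$.
\end{itemize}

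This private-sign conditioning handles transposed terms and possible cancellations without any tail bound on collisions, which is exactly what your Step 6 lacks.
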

\begin{proof}
    Since each $U\in [n]^k$ is included in $\cH$ independently with probability $p=n^{-k/2+\delta}$, for fixed $S\in [n]^{(k-1)/2}$ and $i\in [n]$, the number of $(STi)\in \cH$ is Binomial$(n^{(k-1)/2}, n^{-k/2+\delta})$, so $\#\{T:STi\in \cH\}=1$ with probability $n^{(k-1)/2-k/2+\delta}(1-n^{-k/2+\delta})^{n^{(k-1)/2}-1}\geq n^{\delta-1/2}(1-o(1))$ and $\#\{T:STi\in \cH\}=2$ with probability ${n^{(k-1)}\choose 2}n^{-k+2\delta}(1-n^{-k/2+\delta})^{n^{(k-1)/2-2}}\geq n^{2\delta-1}(\frac12-o(1))$. We now show that the number of nonzero entries in $A(\psi)_{(S_1,S_2)}$ is at least $n^{2\delta}/7$ for all $(S_1,S_2)\in [n]^{(k-1)/2}\times [n]^{(k-1)/2}$ with probability $1-\exp(-\Omega(n^{2\delta}))$, from which the result holds from a union bound. 
    
    First, suppose that $S_1\neq S_2$. For all $i\in [n]$, call $i$ a good index if $\#\{T:S_1Ti\in \cH\}=\#\{T:S_2Ti\in \cH\}=1$ and let $I$ be the set of all good indices. Note that $|I|$ is Binomial with mean $n^{2\delta}(1-o(1))$, so $ n^{2\delta}/2\leq|I|\leq 2n^{2\delta}$ by Chernoff with probability $1-\exp(-\Omega(n^{2\delta}))$. For any $i\in I$, let $T^i_1,T^i_2$ be the unique tuples such that $S_1T^i_1i,S_2T^i_2i\in \cH$. Note that conditioned on $i\in I$, $T^i_1,T_2^i$ are uniform and independent random elements of $[n]^{(k-1)/2}$. Now, for all $i,j\in I$ such that $i<j$ and $\{T_1^i,T_2^i\}\cap\{T_1^j,T_2^j\}\neq\emptyset$, remove $j$ from $I$. Furthermore, remove $j$ from $I$ if $\{T_1^j,T_2^j\}\cap\{S_1,S_2\}\neq\emptyset$. We bound the number of elements removed. Note that $j\in I$ is removed if $T_1^j$ or $T_2^j$ appears in $\bigcup_{i\in I:i<j}\{T_1^i,T_2^i\}$, which has at most $4n^{2\delta}$ elements. The probability that we remove more than $n^{2\delta}/6$ elements of $I$ is thus ${|I|\choose n^{2\delta}/6}\left(\frac{2(4n^{2\delta}+2)}{n^{(k-1)/2}}\right)^{n^{2\delta}/6}=\exp(-\Omega(n^{2\delta}))$. Thus $|I|\geq n^{2\delta}/3$ with probability $1-\exp(-\Omega(n^{2\delta}))$.

    Let $A(\psi)_{(S_1,S_2)}\coloneqq u+v$, where $u$ is the contribution of the indicies in $I$, i.e. $u_{(T_1,T_2)}\coloneqq\sum_{i\in I}\1\{T_1=T_1^i,T_2=T_2^i\}b_{S_1T_1i}b_{S_2T_2i}$, and $v=A(\psi)_{(S_1,S_2)}-u$. Since for any good $i\in I$, $S_1Ti,S_2Ti\notin \cH$ for all $T\notin \{T_1^i,T_2^i\}$, we see that $u$ is independent of $v$ when conditioned on $\cH$. Also conditioning on $v$, for all $i\in I$, we have $u_{(T^i_1,T^i_2)}=-v_{(T^i_1,T^i_2)}$ independently of all other $j\in I$ with probability at most $\frac12$, so $A(\psi)_{(S_1,S_2),(T^i_1,T^i_2)}=0$ with probability at most $\frac12$. Thus by a Chernoff bound, $A(\psi)_{(S_1,S_2),(T^i_1,T^i_2)}\neq 0$ for at least $n^{2\delta}/7$ values of $i\in I$ with probability $1-\exp(-\Omega(n^{2\delta}))$.
    
    Now consider the case where $S_1=S_2\coloneqq S$. For all $i\in [n]$, call $i$ a good index if $\#\{T:STi\in\cH\}=2$ and let $I$ be the set of all good indicies. Then $|I|$ is Binomial with mean $n^{2\delta}(\frac12-o(1))$, so $2n^{2\delta}/5\leq|I|\leq n^{2\delta}$ with probability $1-\exp(-\Omega(n^{2\delta}))$. For any $i\in I$, let $T_1^i,T_2^i$ be the unique pair of tuples such that $ST_1^i,ST_2^i\in \cH$. Note that conditioned on $i\in I$, $\{T_1^i,T_2^i\}$ is a uniformly random pair in $[n]^{(k-1)/2}$. Now, for all $i,j\in I$ such that $i<j$ and $\{T_1^i,T_2^i\}\cap\{T_1^j,T_2^j\}\neq\emptyset$, remove $j$ from $I$, and also remove $j$ if $S\in \{T_1^j,T_2^j\}$. The probability that we remove more than $n^{2\delta}/15$ elements of $I$ is ${|I|\choose n^{2\delta}/15}\left(\frac{2(2n^{2\delta}+2)}{n^{(k-1)/2}}\right)^{n^{2\delta}/15}=\exp(-\Omega(n^{2\delta}))$, so $|I|\geq n^{2\delta}/3$ with probability $1-\exp(-\Omega(n^{2\delta}))$.
    
    Define $u$ and $v$ as before, where $u$ is the contributions of the indicies in $I$, so $u_{(T_1,T_2)}\coloneqq\sum_{i\in I}\1\{T_1=T_1^i,T_2=T_2^i\}b_{S_1T_1i}b_{S_2T_2i}+\1\{T_1=T_2^i,T_2=T_1^i\}b_{S_1T_1i}b_{S_2T_2i}$ As before, $u$ is independent of $v$ when conditioned on $\cH$, and when conditioned on $v$, for all $i\in I$, we have $u_{(T^i_1,T^i_2)}=-v_{(T^i_1,T^i_2)}$ independently of all other $j\in I$ with probability at most $\frac12$. Then $A(\psi)_{(S,S),(T_1^i,T_2^i)}=0$ with probability at most $\frac12$ and thus it is nonzero for at least $n^{2\delta}/7$ with probability $1-\exp(-\Omega(n^{2\delta}))$.
    
\end{proof}

\saspectralbd*
We utilize the following reparameterization of results from \cite{OS19}:
\begin{fact}\cite[Theorem~3.6, Corollary~3.7]{OS19}\label{fact:2xorOS19}
Let $G = (V,E)$ be a multigraph with signs $b_{uv}$ for all $uv\in E$ and signed transition operator $\ol{K}$.  Suppose we have $\pi_* = \min_{v \in V} \frac{\deg(v)}{2|E|}$.
Given $\gamma > \min(\pi_*^{-1/2},\rho(\ol{K}))$, take $\ell = \Big\lceil \frac{1}{4} \frac{\log (4\gamma^2 \pi_*/25)}{\log (5\rho(\ol{K})/2\gamma)}\Big\rceil$, and take $k = \lceil (\frac{5}{2\gamma})^{2\ell}\rceil$.
Then for $R = k\ell + 1$, it holds that $R$-local degree-$2$ SOS can deduce that for all $x\in \{\pm 1\}^V$,
\begin{align*}
    -\gamma\leq \frac{1}{|E|}\sum_{uv\in E}b_{uv}x_ux_v\leq \gamma\mper
\end{align*}
\end{fact}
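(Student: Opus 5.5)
The plan is to reconstruct the O'Donnell--Schramm argument in the pseudo-expectation language of this paper, using the local sampling idea from the proof of \Cref{thm:approxcs}. Write $\pi_v = \deg(v)/2|E|$, $\Pi=\diag(\pi)$, and the value as $\mathrm{val}(x)=\frac{1}{|E|}\sum_{uv\in E} b_{uv}x_ux_v = x^\top \Pi \ol{K} x$, where $\ol{K}$ is the signed transition operator. Then $\Pi\ol{K}$ is symmetric and $\Pi^{1/2}\ol{K}\Pi^{-1/2}$ is a symmetric matrix with spectral norm $\rho(\ol{K})$. The argument has three steps.

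\textbf{Step 1 (amplification).} We show that if some $R$-local $\pE$ has $\pE[\mathrm{val}(x)]>\gamma$, then $\pE[x^\top \Pi \ol{K}^{2\ell} x]$ is not much smaller than $\gamma^{2\ell}$. Iterating, it suffices to pass from $\pE[x^\top\Pi\ol{K}^j x]$ to $\pE[x^\top \Pi \ol{K}^{2j}x]$. Write
\[
x^\top \Pi\ol{K}^j x=\E_{u\sim\pi}\bigl[x_u\,(\ol{K}^jx)_u\bigr],
\]
and replace $(\ol{K}^jx)_u$ by the average $F_u$ of the signed endpoints of $k$ independent $j$-step random walks from $u$. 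This is an unbiased, $(kj+1)$-local estimator, exactly as $F$ was in \Cref{thm:approxcs}. Apply local Cauchy--Schwarz to $x_u\cdot F_u$ and average over $u\sim\pi$ and over the sampling, with Jensen as in \eqref{eq:to-be-computed}. The second moment of $F_u$ expands as
\[
\Bigl(1-\tfrac1k\Bigr)\,x^\top\Pi\ol{K}^{2j}x \;+\; \tfrac1k,
\]
where the first term uses reversibility $\pi_u\ol{K}_{uv}=\pi_v\ol{K}_{vu}$ to glue two $j$-walks into a $2j$-walk, and the second term is the collision term. This yields
\[
\pE[x^\top\Pi\ol{K}^jx]^2\le \Bigl(1-\tfrac1k\Bigr)\pE[x^\top\Pi\ol{K}^{2j}x]+\tfrac1k.
\]
For $2\ell$ not a power of two I would interleave steps of the form $j\mapsto j+j'$ (Cauchy--Schwarz with unequal lengths), or simply restrict to powers of two. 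The locality used at stage $\ell$ is at most $k\ell+1=R$.

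\textbf{Step 2 (bounding the high power by pointwise $L^1$).} For a local pseudo-distribution, $|\pE[x_ux_v]|\le 1$ for all $u,v$, so
\[
\bigl|\pE[x^\top\Pi\ol{K}^{2\ell}x]\bigr|\le\sum_{u,v}|(\Pi\ol{K}^{2\ell})_{uv}|.
\]
Each entry of $\Pi\ol{K}^{2\ell}$ equals $\sqrt{\pi_u\pi_v}\,\bigl(\Pi^{1/2}\ol{K}^{2\ell}\Pi^{-1/2}\bigr)_{uv}$, and the matrix in parentheses has operator norm $\rho^{2\ell}$, so every entry of it is at most $\rho^{2\ell}$ in absolute value. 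Summing over $u,v$ gives $(\sum_u\sqrt{\pi_u})^2\rho^{2\ell}\le |V|\rho^{2\ell}\le \rho^{2\ell}/\pi_*$. This is where $\pi_*$ enters the definition of $\ell$.

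\textbf{Step 3 (parameter bookkeeping; the main obstacle).} Combining the two steps, we need the amplified lower bound, roughly $\gamma^{2\ell}$ minus the accumulated $1/k$ errors, to exceed $\rho^{2\ell}/\pi_*$. This is a contradiction as soon as $(\rho/\gamma)^{2\ell}\lesssim\pi_*$, which is what $\ell=\lceil\frac14\log(4\gamma^2\pi_*/25)/\log(5\rho/2\gamma)\rceil$ provides. With $k\approx(5/2\gamma)^{2\ell}$, the $1/k$ error terms stay below the $\gamma^{2\ell}$ scale. The delicate point is propagating the additive errors through the $\log_2(2\ell)$ squarings while keeping constants like $5/2$. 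If the squaring recursion loses too much, I would instead use a single Cauchy--Schwarz on $\ell$-step walks from both endpoints, comparing $x^\top\Pi\ol{K}x$ with $\ell$-walk averages, to avoid iteration. The lower bound $-\gamma$ then follows by the same argument applied to $-b$.
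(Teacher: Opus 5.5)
The paper does not prove this fact. It imports it from O'Donnell--Schramm and uses it only in dual (pseudo-expectation) form in the proof of \Cref{lemma:os19corollary}, so your proposal has to stand on its own.

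Your core step is sound. With $a_j:=\pE[x^\top\Pi\overline{K}^jx]$, endpoint sampling, local Cauchy--Schwarz, Jensen and reversibility give $a_j^2\le(1-\frac1k)a_{2j}+\frac1k$, and this needs only $(k+1)$-locality. The problem is that the bookkeeping does not reach the stated parameters, and there are two separate gaps.

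\textbf{Step 2 is too lossy, and the claim in Step 3 is false.} Whenever $\ell$ is defined we have $\log(5\rho/2\gamma)<0$. The stated $\ell$ then guarantees only $(5\rho/2\gamma)^{4\ell}\le(2\gamma/5)^2\pi_*$, i.e.\ $\rho^{2\ell}\le(2\gamma/5)^{2\ell+1}\sqrt{\pi_*}$. It does \emph{not} give $(\rho/\gamma)^{2\ell}=O(\pi_*)$.

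Your entrywise bound therefore yields only $|a_{2\ell}|\le\rho^{2\ell}/\pi_*\le(2\gamma/5)^{2\ell+1}/\sqrt{\pi_*}$, which is useless for small $\pi_*$. For example, $\ell=1$ already occurs whenever $\rho\le(2\gamma/5)^{3/2}\pi_*^{1/4}$, and then your bound can be of order $(2\gamma/5)^3\pi_*^{-1/2}\gg1$. So there is no contradiction with $a_{2\ell}\ge\gamma^{2\ell}/2$. Repairing this your way means doubling $\ell$, which squares $k$ and changes $R$.

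The fix is to apply Cauchy--Schwarz over pairs rather than bounding entries. With $M=\Pi^{1/2}\overline{K}\Pi^{-1/2}$,
\[
\sum_{u,v}\bigl|(\Pi\overline{K}^{2\ell})_{uv}\bigr|=\sum_{u,v}\sqrt{\pi_u\pi_v}\,\bigl|(M^{2\ell})_{uv}\bigr|\le\|M^{2\ell}\|_F=\sqrt{\mathrm{tr}(M^{4\ell})}\le\sqrt{|V|}\,\rho^{2\ell}\le\rho^{2\ell}/\sqrt{\pi_*}.
\]
This is exactly the $\rho^{4\ell}$-versus-$\pi_*$ tradeoff encoded by the $\frac14$ in the formula for $\ell$, and it gives $|a_{2\ell}|\le(2\gamma/5)^{2\ell+1}$.

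\textbf{Step 1 only reaches powers of two.} Repeated squaring reaches only lengths $2^m$, and restricting to powers of two is incompatible with the stated $k$. Going below $2\ell$ loses the upper bound. If $2\ell$ is not a power of two, you must go up to some $L=2^m$ as large as $4\ell-4$. The final squaring then needs $1/k$ small compared with $\gamma^{L}$, while $1/k\approx(2\gamma/5)^{2\ell}$; this fails once $\gamma$ is well below $2/5$ and $\ell$ is large.

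You therefore need an argument for arbitrary even length $2\ell$ whose additive $O(1/k)$ losses are controlled at scale $\gamma^{2\ell}$. Two possible routes:
\begin{itemize}
\item Approximate log-convexity,
$a_{2i}^2\le\bigl((1-\frac1k)a_{2i-2}+\frac1k\bigr)\bigl((1-\frac1k)a_{2i+2}+\frac1k\bigr)$, obtained by sampling both $(\overline{K}^{i-1}x)_u$ and $(\overline{K}^{i+1}x)_u$ with locality $2k\le k\ell+1$. Start from $a_0=1$ and $a_1^2\le(1-\frac1k)a_2+\frac1k$, then run an induction on the ratios $a_{2i+2}/a_{2i}$ with explicit error tracking.
\item A one-shot certificate: write a suitable univariate inequality in $\lambda$ as a sum of squares $q(\lambda)^2$ with $\deg q\le\ell$, and certify each $\langle x,q(\overline{K})^2x\rangle_\pi$ approximately using $k$ walks of length $\ell$ together with all their prefixes. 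This is what the locality $R=k\ell+1$ in the statement points to.
\end{itemize}

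Your other fallback, a single Cauchy--Schwarz on $\ell$-step walks, cannot work. One approximate Cauchy--Schwarz only relates $a_{p+q}^2$ to $a_{2p}a_{2q}$ and never connects $a_1$ to $a_{2\ell}$ on its own.

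With both repairs, your route should go through with the stated $\ell$, $k$ and $R$.
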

\begin{proof}[Proof of \Cref{lemma:os19corollary}]
    For all $S\subseteq [n]^t$, define a variable $y_S$ and define a $R$-local pseudoexpectation $\cD'$ over the $y_S$s such that for all $S_1, \cdots, S_r$ where $r\leq R$, we have $\pE_{\cD'}y_{S_1}y_{S_2}, \cdots y_{S_r}\coloneqq \pE_\cD \prod_{i\in S_1}x_i\prod_{i\in S_2}x_i\cdots\prod_{i\in S_r}x_i$. Then applying \Cref{fact:2xorOS19} with $\eta=\gamma$, we see that $\pE_{\cD'}\sum_{S,T}y_S y_TA_{S,T}\leq \eta \sum_{S,T}|A_{S,T}|$, and therefore
\begin{align*}
    \pE_\cD (x^{\odot t})^\top Ax^{\odot t}=\sum_{S,T}\pE_\cD A_{S,T}\prod_{i\in S}x_i\prod_{i\in T}x_{i}= \pE_{\cD'}\sum_{S,T}A_{S,T} y_S y_T\leq \eta \sum_{S,T}|A_{S,T}|\mper
\end{align*}
    
\end{proof}
\subsection{Auxiliary statements for semirandom $k$-XOR}
\inducedtoorig*
\begin{proof}
    Let $\cD$ be a $kR$-local pseudodistribution over variables $x_1, \cdots, x_n$. Letting $\cS$ be the variable set of $\phi$, we define a $R$-local pseudodistribution $\cD'$ over variables $y_S$ of $\phi$ such that for all $S_1, \cdots, S_r$ where $r\leq R$, we have $\pE_{\cD'}y_{S_1}y_{S_2}, \cdots y_{S_r}\coloneqq \pE_\cD \prod_{i\in S_1}x_i\prod_{i\in S_2}x_i\cdots\prod_{i\in S_r}x_i$. Then we have
    \begin{align*}
        \pE_{\cD}\psi(x)=\pE_\cD\sum_{(S,T)\in \cB} b_{f(S,T)}\prod_{i\in S}x_i\prod_{i\in T}x_i
        =\pE_{\cD'}\sum_{(S,T)\in \cB}b_{f(S,T)}y_S y_T=\pE_{\cD'}\phi(y)\leq 2\gamma|\cH|\mper
    \end{align*}
\end{proof}
\canprune*
We use the following concentration result for dependent indicators.
\begin{fact}[Chernoff Bound for Weakly Dependent Random Variables]\cite{GLSS15} \label{fact:dependentchernoff}
Suppose we have a set $S$ of independent random variables and indicator random variables $Y_1, \cdots, Y_r$ with mean $\mu$, such that each $Y_i$ is a function of some $T_i\subseteq S$ and no $s\in S$ is contained in more than $k$ total $T_i$'s. Then for any $\eps > 0$, we have 
\begin{align*}
    \Pr[Y_1 + \cdots + Y_r \notin (\mu \pm \epsilon)r]
\leq
2e^{-2\epsilon^2 r / k}\mper
\end{align*}
\end{fact}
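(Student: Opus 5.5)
The approach is a Chernoff-style moment generating function bound. The one place independence is normally used, factoring $\E\prod_i z^{Y_i}$ into $\prod_i \E z^{Y_i}$, will be replaced by Finner's generalized Hölder inequality for read-$k$ families. Finner's inequality loses exactly a $k$-th root, and that is the source of the $1/k$ in the exponent. I read the hypothesis as $\E Y_i=\mu$ for every $i$; in the application the means may differ, and then the same argument goes through with $\mu$ the average mean, using concavity of $\log(1-\mu+\mu w)$ in $\mu$.

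\textbf{Step 1 (Finner's inequality).} Let $f_1,\dots,f_r\ge 0$ be functions of independent random variables in $S$, with $f_i$ depending only on $T_i$, and with every $s\in S$ lying in at most $k$ of the $T_i$. I will show
\[
\E\prod_{i=1}^r f_i \le \prod_{i=1}^r \big(\E f_i^k\big)^{1/k}.
\]
The proof is by induction on $|S|$; the case $S=\emptyset$ is trivial. Pick $s\in S$ and condition on all other variables. Only the $f_i$ with $s\in T_i$, at most $k$ of them, depend on $s$. Hölder's inequality in $s$ with exponent $k$ for each of them gives
\[
\E_s\prod_{i:\,s\in T_i} f_i \le \prod_{i:\,s\in T_i} \big(\E_s f_i^k\big)^{1/k}.
\]
The functions $\tilde f_i=(\E_s f_i^k)^{1/k}$ depend only on $T_i\setminus\{s\}$, and they still form a read-$k$ family on $S\setminus\{s\}$. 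Apply the induction hypothesis to the family consisting of these $\tilde f_i$ and the untouched $f_i$, and note that $\E\tilde f_i^k=\E f_i^k$. This is the main technical step. The only subtlety is the bookkeeping: one must check that when the number of functions containing $s$ is less than $k$, padding with constant functions $1$ keeps the exponents consistent.

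\textbf{Step 2 (MGF bound).} For $z\ge 1$, apply Step 1 to $f_i=z^{Y_i}$. Since $Y_i\in\{0,1\}$, we have $\E f_i^k = 1-\mu+\mu z^k$, and therefore
\[
\E z^{\sum_i Y_i}\le (1-\mu+\mu w)^{r/k}, \qquad w=z^k .
\]
Set $\tau=\mu+\epsilon$. Markov's inequality gives
\[
\Pr\Big[\sum_i Y_i\ge \tau r\Big]\le \big[(1-\mu+\mu w)\,w^{-\tau}\big]^{r/k}.
\]
Optimizing over $w\ge 1$ gives the standard Chernoff–Hoeffding value $\exp(-D(\tau\Vert\mu)\,r/k)$, where $D$ is the binary KL divergence. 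Pinsker's inequality, $D(\tau\Vert\mu)\ge 2\epsilon^2$, then yields the bound $e^{-2\epsilon^2 r/k}$. If $\tau>1$ the event is empty and there is nothing to prove.

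\textbf{Step 3 (lower tail and union).} The indicators $1-Y_i$ are functions of the same sets $T_i$ and have mean $1-\mu$. Applying Step 2 to them bounds $\Pr[\sum_i Y_i\le(\mu-\epsilon)r]$ by $e^{-2\epsilon^2 r/k}$. A union bound over the two tails gives the factor $2$ in the statement.

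I expect Step 1 to be the only nonroutine part. If the inductive Hölder argument turns out to be awkward, my fallback is Shearer's lemma applied to entropy, which gives the same product bound.
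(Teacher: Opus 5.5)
Your proof is correct. The paper gives no proof of this fact; it imports it from GLSS15. So the meaningful comparison is with that source, which, as I recall, argues information-theoretically rather than through the moment generating function.

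The source's argument runs as follows:
\begin{itemize}
\item Condition on the deviation event $E$. The conditioned law $Q$ of the underlying independent variables satisfies $D(Q\|P)=\ln(1/\Pr[E])$, where $P$ is the product law.
\item Apply a relative-entropy form of Shearer's lemma: $\sum_i D(Q_{T_i}\|P_{T_i})\le k\,D(Q\|P)$ for any read-$k$ cover.
\item Data processing lowers each term to the binary divergence $D(q_i\|\mu_i)$, where $q_i=\Pr[Y_i=1\mid E]$.
\item Joint convexity of $D$, together with $\frac1r\sum_i q_i\ge\mu+\epsilon$, gives $\Pr[E]\le e^{-D(\mu+\epsilon\|\mu)\,r/k}$.
\end{itemize}

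You instead bound $\E z^{\sum_i Y_i}$ using Finner's generalized H\"older inequality, the analytic counterpart of Shearer's lemma. Your one-variable-at-a-time induction proves it cleanly. The $1/k$ lands in the right place because you write the Markov denominator as $z^{\tau r}=w^{\tau r/k}$ with $w=z^k$. Both routes reach the same sharper bound $e^{-D(\mu+\epsilon\|\mu)r/k}$ before Pinsker is applied.

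What each route buys:
\begin{itemize}
\item Yours uses only H\"older, Fubini and Markov. Because it bounds an exponential moment, it extends at once to bounded non-Boolean summands: apply Hoeffding's lemma to $\E e^{k\lambda Y_i}$.
\item The entropy route avoids optimizing over $z$. It also handles unequal means through convexity of $D$, rather than a separate concavity step.
\end{itemize}

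Keep your remark on unequal means. In the proof of \Cref{lem:canprune}, the relevant indicators are only known to equal $1$ with probability at least $\frac12$. Your averaged-mean version covers that case directly, without the coupling step the paper invokes.
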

\begin{proof}[Proof of \Cref{lem:canprune}]
        Note that since $P$ has at most $2n$ parts, the total $\ell_1$ norm of $A'(\psio)$ is $\sum_i\sum_{j\in [p_i]}{|\chij|\choose 2}\geq 2n\cdot \frac12\left(\frac{|\cH|}{2n}\right)^2 (1-o(1))\geq \frac{1}{5}n^{k-1+2\delta}$. Also, since for all $(S_1,S_2),(T_1,T_2)$, there are at most $n^{\delta/2}$ total $i$ such that $S_1T_1i\in \cH$ or $S_2T_2i\in \cH$, there are at most $n^{\delta/2}$ pairs of clauses that contribute to each entry of $A'(\psi)$; in other words, every entry of $A'(\psio)$ is at most $n^{\delta/2}$. Consider an arbitrary entry $(S_1,S_2),(T_1,T_2)$. We note that $A(\psi)_{(S_1,S_2),(T_1,T_2)}$ is the sum of $A(\psio)_{(S_1,S_2),(T_1,T_2)}$ products of independent Rademacher random variables, all of which are distinct, and thus it is also the sum of independent Rademachers, so $A(\psi)_{(S_1,S_2),(T_1,T_2)}\leq 4k\sqrt{A(\psio)_{(S_1,S_2),(T_1,T_2)}\log n}\cdot\leq 4kn^{\delta/4}\sqrt{\log n}$ with probability $1-\frac{1}{n^{2k}}$ and thus it holds for all $n^{2k-2}$ entries with probability $\geq 1-1/n^2$.
        
        Now, while there exists some row of $A'(\psio)$ indexed by $(S_1,S_2)$ whose total sum is less than $3n^{3\delta/2}$, delete that row and its corresponding column. Then the sum of the elements of $A'(\psio)$ we delete is at most $2\cdot3n^{k-1+3\delta/2}\leq 30n^{-\delta/2}\sum_{i\in [n]}\sum_{j\in [p_i]}|\chij|(|\chij|-1)$, and first condition thus follows from the fact that the $\ell_1$ norm of rows in $A'(\psi)$ is upper bounded by the sum of the corresponding row in $A'(\psio)$. Let $A(\psi)$ and $A(\psio)$ be the matrices after the deletion step.

Now consider a surviving row indexed by some $(S_1,S_2)$, whose $\ell_1$ norm is at least $3n^{3\delta/2}$. Since every entry of $A(\psio)$ is at most $n^{\delta/2}$, there are at least $3n^\delta$ nonzero entries in the row indexed by $(S_1,S_2)$, each of which are indexed by some $(T_1,T_2)$. Recall that if an entry $A(\psio)_{(S_1,S_2),(T_1,T_2)}$ is nonzero, then $A(\psi)_{(S_1,S_2),(T_1,T_2)}$ is nonzero with probability at least $\frac12$ (\Cref{remark:allonescomparison}).
Furthermore, each entry in row $(S_1,S_2)$ of $A(\psi)$ depends on variables of the form $b_{S_1Ti}$ and $b_{S_2Ti}$. Note that anytime some $b_{S_1Ti}$ appears in row $(S_1,S_2)$, it must be multiplied with some $b_{S_2T'i}$ (in which case it corresponds to the $(T,T')$ entry of that row). However, we are given that there are at most $n^{\delta/2}$ total $S_2T'i\in \cH$, and thus $b_{S_1Ti}$ appears in at most $n^{\delta/2}$ entries in row $(S_1,S_2)$. Therefore, the family of indicators of entries in row $(S_1,S_2)$ being nonzero is read-$n^{\delta/2}$ with respect to the $b_U$s, and thus after coupling to a family of indicators which occur with probability $\frac12$, by \Cref{fact:dependentchernoff}, with probability $1-\exp(-\Omega(n^{\delta/2}))$, the number of nonzero entries in row $(S_1,S_2)$ of $A(\psi)$ is at least $\frac13$ the number of nonzero entries in row $(S_1,S_2)$ of $A(\psio)$, which is at least $n^\delta$. Union bounding over all $\leq n^{k-1}$ rows of $A'(\psi)$ gives the desired result.
\end{proof}

\end{document}